\newtheorem{theorem}{Theorem}
\newtheorem{assertion}{Assertion}
\newtheorem{lemma}{Lemma}
\newtheorem{Corollary}{Corollary}
\newtheorem{Remark}{Remark}
\newtheorem{example}{Example}
\begin{document}
\title{The dimension and Bose distance of some  BCH codes of  length $\frac{q^{m}-1}{\lambda}$}
\author{Run Zheng \orcidlink{0000-0002-9117-6639},  Nung-Sing Sze  \orcidlink{0000-0003-1567-2654}
 and Zejun Huang \orcidlink{0000-0003-2621-3234}
\thanks{Z. Huang's work was supported by the National Natural Science Foundation of China (No.12171323).  N. S. Sze's work was supported by a HK RGC grant PolyU 15300121 and a PolyU
research grant 4-ZZRN.}
\thanks{R. Zheng is with the School of Mathematical Sciences,
    Shenzhen University, Shenzhen 518060, China and the Department of Applied Mathematics, The Hong Kong Polytechnic University, Hung Hom, Hong Kong (e-mail: zheng-run.zheng@connect.polyu.hk).}

\thanks{N. S. Sze is with the Department of Applied Mathematics,
The Hong Kong Polytechnic University, Hung Hom, Hong Kong (e-mail:
raymond.sze@polyu.edu.hk).}

\thanks{Z. Huang is with the School of Mathematical Sciences,
    Shenzhen University, Shenzhen 518060, China (e-mail: zejunhuang@szu.edu.cn). (Corresponding author)}
}

\date{\today}
\markboth{Journal of \LaTeX\ Class Files,~Vol.~1, No.~2, December~2023}%
{Shell \MakeLowercase{\textit{et al.}}: A Sample Article Using IEEEtran.cls for IEEE Journals}

\maketitle
\begin{abstract}
BCH codes are important error correction codes, widely utilized due to their robust algebraic structure, multi-error correcting capability, and efficient decoding algorithms. Despite their practical importance and extensive study, their  parameters, including dimension, minimum distance and Bose distance, remain largely unknown in general. This paper addresses this challenge by investigating the dimension and Bose distance of BCH codes of length $(q^m - 1)/\lambda$ over the finite field $\mathbb{F}_q$, where $\lambda$ is a positive divisor of $q - 1$.
Specifically, for narrow-sense BCH codes of this length with $m \geq 4$, we derive explicit formulas for their dimension for  designed distance  $2 \leq \delta \leq (q^{\lfloor (2m - 1)/3 \rfloor + 1} - 1)/{\lambda} + 1$. We also provide explicit formulas for their Bose distance in the range $2 \leq \delta \leq (q^{\lfloor (2m - 1)/3 \rfloor + 1} - 1)/{\lambda}$. These ranges for $\delta$ are notably larger  than the previously known results for this class of BCH codes. Furthermore, we extend these findings to determine the dimension and Bose distance for certain non-narrow-sense BCH codes of the same length. Several optimal linear codes can be obtained from these BCH codes.
\end{abstract}
\begin{IEEEkeywords}
BCH codes, linear codes, cyclic codes. 
\end{IEEEkeywords}

\maketitle
\date{\today}
\section{Introduction}

\IEEEPARstart{T}{hroughout} this paper, let $q$ be a prime power  and   $\mathbb{F}_q$ be the finite field of order $q$. Let $\mathbb{F}_q^n$ denote  the $n$-dimensional linear space over $\mathbb{F}_q.$ A code of length $n$ over $\mathbb{F}_q$ is defined as a nonempty subset of $\mathbb{F}_q^n$. In particular,
an $[n,k,d]$  linear code $\mathcal{C}$ over $\mathbb{F}_q$ is defined as a $k$-dimensional  subspace of $\mathbb{F}_q^n$  with  minimum distance $d$. A  linear code $\mathcal{C}\subseteq \mathbb{F}_{q}^n$  is said to be  \textit{cyclic} if $(c_{0},c_{1},\ldots,c_{n-1})\in \mathcal{C}$ implies that $(c_{n-1},c_{0},\ldots,c_{n-2})\in \mathcal{C}.$ 
Identify each vector $(c_0,c_1,\ldots,c_{n-1})\in  \mathbb{F}_q^n$  with  its polynomial representation \begin{equation}\notag
c_0+c_1x+\cdots+c_{n-1}x^{n-1} \in \mathbb{F}_q[x]/(x^n-1),
\end{equation}
and  each  code $\mathcal{C}\subseteq \mathbb{F}_q^n$  with  a subset of  the quotient ring $\mathbb{F}_q[x]/(x^n-1).$ In this way, a code $\mathcal{C}\subseteq \mathbb{F}_q^n$  is a  cyclic code if and only if it is an ideal of the quotient ring $\mathbb{F}_q[x]/(x^n-1).$ 
 Note that each ideal of $\mathbb{F}_q[x]/(x^n-1)$ is principal.  
 Therefore, a cyclic  code $\mathcal{C}\subseteq \mathbb{F}_q[x]/(x^n-1)$  can be generated by  a monic polynomial $g(x)$,  denoted as  $\mathcal{C}=\left \langle g(x) \right \rangle. $ Moreover,   the polynomial  $g(x)$ is a divisor of $x^n-1$.  The polynomial $g(x)$ is called the \textit{generator polynomial} of $\mathcal{C}$, and $h(x)=(x^n-1)/{g(x)}$ is called the 
 \textit{parity-check polynomial} of $\mathcal{C}$.

Suppose that $n$ is a positive integer such that $\mathrm{gcd}(n,q)=1$.  Let   $m=\mathrm{ord}_n(q)$,  i.e., the smallest integer such that 
$q^m\equiv 1 \ (\mathrm{mod}\ n).$ Let 
$\alpha$ be a primitive element of the field  $\mathbb{F}_{q^m}$.   Then $\beta=\alpha^{\frac{q^m-1}{n}} $ is a primitive $n$-th root of unity.  This leads to the  factorization  $x^n-1=\prod\limits_{i=0}^{n-1}(x-\beta^i).$  
For each integer $i\in [0,n-1]$,  we denote by $m_i(x)$ the minimal polynomial of $\beta^i$ over $\mathbb{F}_q$.  
A cyclic  code  of length  $n$  over $\mathbb{F}_q$ is called a  BCH code with designed distance $\delta$  if its generator polynomial takes the form 
\begin{equation}\notag
\mathrm{lcm}(m_b(x),m_{b+1}(x),\ldots,m_{b+\delta-2}(x)), 
\end{equation}
where  $b$  and $\delta$ are integers with $2\leq \delta\leq n$, 
and  $\mathrm{lcm}$ denotes the least common multiple of the polynomials.    
We denote  by $\mathcal{C}_{(q,  n,\delta, b)}$ such a BCH code.  If $b=1$, it  is called  a \emph{narrow-sense} BCH code, simply denoted by $\mathcal{C}_{(q,n, \delta)}$. If $n=q^m-1$, then  it is called a primitive BCH code. 
 Note that $\mathcal{C}_{(q,n,\delta,b)}$ and $\mathcal{C}_{(q,n,
    \delta^{'}, b)}$ 
may be identical even for distinct $\delta$ and $\delta^{'}$.  The \textit{Bose distance} of $\mathcal{C}_{(q,n,\delta,b)}$, denoted by $d_B$ or $d_B(\mathcal{C}_{(q,n,\delta,b)})$, is the largest integer such that $\mathcal{C}_{(q,n,\delta,b)}=\mathcal{C}_{(q,n,d_B,b)} $.  

 BCH codes were first 
independently discovered by Hocquenghem \cite{H1959} and by Bose and Ray-Chaudhuri \cite{BC1960f,BR1960}. They  occupy a central place in coding theory due to their remarkable properties. First, they offer great flexibility in the choice of code parameters, enabling error correction capabilities to be tailored to specific applications. In addition,  for block lengths up to a few hundred bits, many BCH codes are among the most powerful codes known for given length and dimension.  In addition, efficient encoding and decoding algorithms have been developed,  which make BCH codes highly practical for real-world applications.

Despite their widespread use and extensive study in the literature 
\cite{  CPHTZ2006, DFZ2017, DYFT1997, FTT1986, HS1973, KILS2001, KOLS1999, KTFT1985, KTLS1972, LDL2017, LLDL2017, LRFL2019, RM1991, YH1996, CP1994,WXZ2024,FLD2023,WWL2023,XL2024,DLMQ2023,SLD2023, ZXP2024}, several open problems persist regarding BCH codes, especially concerning the precise determination of their dimension, minimum distance, and Bose distance. These parameters are crucial indicators of a BCH code's performance. Specifically, a BCH code over $\mathbb{F}_q$ with dimension $k$ and minimum distance $d$ can transmit $k$ $q$-ary information symbols and correct up to $\left\lfloor \frac{d-1}{2} \right\rfloor$ $q$-ary symbol errors. Furthermore, the Bose distance $d_B$ provides a fundamental lower bound on the minimum distance, as established by the BCH bound \cite{H1959, BR1960}. Notably, Charpin \cite{CP1994} conjectured that for a narrow-sense primitive BCH code, $d \leq d_B + 4$. Hence,  determining the Bose distance is also invaluable for a deeper understanding of BCH codes and their capabilities. However, as noted  by Charpin \cite{charpin1998open} and Ding \cite{ding2024bch}, the general determination of these parameters remains a challenging problem.

This paper is dedicated to the investigation of the dimension and Bose distance of BCH codes of length $(q^m-1)/\lambda$,    where $\lambda$ is a positive divisor of $q-1$.  This class of   BCH codes includes primitive BCH codes when $\lambda=1$. To date, these fundamental parameters are precisely known only for limited cases. Most existing results  focus on the case where $\lambda = 1$ and $b = 1$, which corresponds to narrow-sense primitive BCH codes. For a comprehensive overview of the parameters of such codes, readers are referred to \cite{zheng2025, liu2017, NLMY2021, GLQ2024,pang2021five,D2015,DDZ2015}.

In contrast, for cases when
$\lambda\neq 1$, the understanding is much more limited. Even for some specific cases, such as $\lambda = 2$ and $\lambda = q-1$,
the dimension and Bose distance are known only for a few designed distances.
For general positive  divisors $\lambda$ of $q-1$, Zhu et al.  \cite{ZSK2019}   determined the
 dimension of narrow-sense BCH codes $\mathcal{C}_{(q,(q^m-1)/\lambda,\delta)}$  for designed distances $2\leq \delta\leq    \frac{q^{\lceil (m+1)/2 \rceil}-1}{\lambda} + 1$. Recently, Sun \cite{S2025} determined the dimension and minimum distance of $\mathcal{C}_{(q,(q^m-1)/\lambda,\delta)}$ for  other specific designed distances. For convenience, we summarize  in Table \ref{table111} some known results on the parameters of  narrow-sense BCH codes $\mathcal{C}_{(q,(q^m-1)/\lambda,\delta)}$ for divisors $\lambda$ of $q-1$ with $\lambda\neq 1$. A check mark \checkmark indicates that the corresponding parameter is known, while a blank entry indicates that it is unknown.  
For non-narrow-sense BCH codes of length $(q^m-1)/\lambda$,
the available results in the literature are even more scarce;
see \cite{liu2017} for some related work.
   Readers may  also refer to \cite{ding2024bch} for an excellent survey on known results regarding the parameters of BCH codes.
\begin{table*}[h]\label{table111}
\renewcommand{\arraystretch}{2.5}
\centering
\caption{Known results on the parameters of   $\mathcal{C}_{(q,(q^m-1)/\lambda,\delta)}$ for divisors $\lambda$ of $q-1$ with  $\lambda\neq 1$}
\begin{scriptsize}
\begin{tabular}{|c|c|c|c|c|c|c
|c|}
\hline
$q$ &$\lambda$ &$m$  & $\delta$ & $k$ & $d_B$ & $d$&  Reference\\ \hline
prime power &$\lambda=q-1$ &$m \geq 4 $ is even  &$2\leq \delta\leq q^{{m}/{2}}$ & \checkmark & & &  \cite{LDL2017}  \\ \hline
prime power &$\lambda=q-1$&$m\geq 5$  is odd & $\delta= a q^{(m-1)/2}+1$  with $1\leq a\leq q-1$& \checkmark   &   & &   \cite{liu2017}           \\  \hline
 prime power & $\lambda=q-1$ &$m\geq 1$   & $\delta=2 $ & \checkmark  & \checkmark& \checkmark&     \cite{LDXG}          \\  \hline
$q\geq 3$ &$\lambda=q-1$&$ m\geq 1$  &  $\delta=3$ &\checkmark &   & partially  solved &  \cite{LDXG}  \\ \hline
$q=3$ &$\lambda=2$& $m\geq 3$  & $\delta=\frac{q^m-q^{m-1}}{2}-\frac{q^{\lfloor (m-3)/2\rfloor+i}+1}{2}$, $i=1,2$  &\checkmark   &\checkmark  & \checkmark &   \cite{LDXG}  \\ \hline
$q$ is odd &$\lambda=2$& $m\geq 3$ & $\delta=\frac{q^m-q^{m-1}}{2}-\frac{q^{\lfloor (m-3)/2\rfloor+i}+1}{2}$, $i=1,2$ &\checkmark & \checkmark&\checkmark & \cite{ZSK2019}\\ \hline 
$q$ is odd &$\lambda=2$ &$m\geq 2$   & \makecell[c]{$\delta= \frac{q^m-q^{m-1}}{2}-\frac{q^{\lfloor (m-3)/2\rfloor+i}+1}{2}$,  $1\leq i\leq \lfloor \frac{(m+11)}{6}\rfloor$} &   & \checkmark&\checkmark & \cite{LMS2020} \\ \hline 

$q$ is odd& $\lambda=2$& $m\geq 4$& $\delta=\frac{q^m-q^{m-1}}{2}-\frac{ q^{\lceil \frac{3m}{4}\rceil-1}+ q^{\lceil  \frac{m}{2}\rceil-1}+q^{\lceil \frac{m}{4}\rceil-1}+1}{2}$ &\checkmark & \checkmark  & & \cite{WLZ2024} \\ \hline  
$q$ is odd& $\lambda=2$ &\makecell[c]{$m\geq 7$,  \\ $m\equiv 3 (\bmod 4)$} & $\delta = \frac{q^m-q^{m-1}}{2} - \frac{q^{\frac{3m-1}{4}} + q^{\frac{m+1}{2}}+q^{\frac{m-11}{4}+t}+1}{2}$, $t=2,3,4$& \checkmark& \checkmark& & \cite{WLZ2024}\\ \hline   
$q$ is odd& $\lambda=2$ &\makecell[c]{$m\geq 7$, \\ $m\not\equiv 3 (\bmod 4)$} & $\delta = \frac{q^m-q^{m-1}}{2} - \frac{q^{\lceil \frac{3m}{4}\rceil}+1}{2}$& \checkmark& \checkmark& & \cite{WLZ2024}\\ \hline 

$q$ is odd& $\lambda=2$   & $m=2$ & \makecell[c]{$\delta = \frac{i_1q+i_0}{2}$, $0\leq i_1\leq q-2$, \\ $\max \{i_0,1\}\leq i_0\leq q-1$ and $i_0\equiv i_1 (\bmod 2)$}  & \checkmark&\checkmark & & \cite{WLZ2024} \\ \hline   

$q$ is odd &$\lambda=2$ & $m=3$ & \makecell[c]{ $\delta=\frac{i_2q^2+i_1q+i_0}{2}$, $0\leq i_2\leq q-2$, $i_2\leq i_1\leq q-1$, \\ $\max\{i_2, 1\}\leq i_0\leq q-1$,  $i_1+i_0 \equiv i_2 (\bmod 2)$} &\checkmark &\checkmark & & \cite{WLZ2024} \\ \hline    
$q$ is odd &$\lambda=2$& $m\geq 4$& $\delta=\frac{q^m-q^{m-1}}{2}-\frac{q^{\lfloor (m-3)/2\rfloor+i}+1}{2}, 1\leq  i\leq \lfloor \frac{m+6}{4} \rfloor$ &\checkmark & \checkmark &\checkmark & \cite{WLZ2024} \\ \hline    

prime power &$\lambda \mid q-1$ & $m\geq 3$ is odd & $2\leq \delta\leq \frac{q^{(m+1)/2}-1}{\lambda}+1$&  \checkmark &  & & \cite{ZSK2019} \\ \hline 
prime power &$\lambda\mid  q-1$& $m\geq 4$ is even & $2\leq \delta \leq \frac{q^{m/2+1}-1}{\lambda}+1$  & \checkmark & & & \cite{ZSK2019} \\ \hline
prime power& $\lambda \mid q-1$& $m\geq 2$& \makecell[c]{$\delta=\frac{(q-\lambda \ell_0)q^{m-1-\ell_1}-1}{\lambda}$, \\ $0\leq \ell_1\leq m-1$ and $0\leq \ell_0<\frac{q-1}{\lambda}$}& \checkmark &\checkmark &\checkmark & \cite{S2025}\\ \hline   
\end{tabular}
\end{scriptsize}
\end{table*}

A deep understanding of $q$-cyclotomic cosets, particularly their sizes and coset leaders, is crucial for determining the dimension and Bose distance of BCH codes. Indeed, the main challenge in determining these parameters of BCH codes often stems from the irregular distribution of coset leaders. Our previous work  \cite{zheng2025} explored the distribution of coset leaders modulo $q^m-1$ within the range $[1, q^{\lfloor (2m - 1)/3 \rfloor + 1}]$ and the sizes of corresponding $q$-cyclotomic cosets,  enabling us to determine the Bose distance and dimension of narrow-sense primitive BCH codes $\mathcal{C}_{(q, q^m - 1, \delta)}$ for $m \geq 4$ and $2\leq \delta \leq  q^{\lfloor (2m - 1)/3 \rfloor + 1}$. 

To extend these results to BCH codes of length $(q^m-1)/\lambda$, we need to investigate $q$-cyclotomic cosets modulo $(q^m-1)/\lambda$. A key and useful observation in this regard is that an integer $a $ is a coset leader modulo $(q^m-1)/\lambda$ if and only if $\lambda a$ is a coset leader modulo $q^m - 1$, and the size of the $q$-cyclotomic coset modulo $(q^m-1)/\lambda$ of $a$ is equal to the size of the $q$-cyclotomic coset modulo $q^m-1$ of  $\lambda a$. Consequently, the problem of finding coset leaders and sizes of   $q$-cyclotomic coset  modulo $(q^m-1)/\lambda$ can be reduced  to identifying integers divisible by $\lambda$ that are coset leaders modulo $q^m - 1$ and determining  the sizes of their corresponding  $q$-cyclotomic coset  modulo $q^m - 1$.

Building upon this crucial observation and our prior analysis of 
 $q$-cyclotomic cosets modulo $q^m-1$, we successfully generalize our  results on primitive BCH codes in \cite{zheng2025} to BCH codes of length $(q^m-1)/\lambda$. Specifically, for any positive divisor $\lambda$ of $q-1$ and positive integer $m\geq 4$, this paper  determines: 
\begin{itemize}
    \item the dimension of $\mathcal{C}_{(q,(q^m-1)/\lambda,\delta)}$ for $2\leq \delta \leq \frac{q^{\lfloor (2m-1)/3\rfloor+1}-1}{\lambda} +1 $;
\item the Bose distance of $\mathcal{C}_{(q,(q^m-1)/\lambda,\delta)}$ for $2\leq \delta \leq \frac{q^{\lfloor (2m-1)/3\rfloor+1}-1}{\lambda}$.   
\end{itemize}
It is important to note that the existing knowledge of the dimension for narrow-sense BCH codes 
 $\mathcal{C}_{(q,(q^m-1)/\lambda,\delta)}$ only covers designed distances  $2\leq \delta\leq    \frac{q^{\lceil (m+1)/2 \rceil}-1}{\lambda} + 1$ and some   specific cases.
Our results significantly extend this range, as evidenced by the inequality $\frac{q^{\lfloor (2m-1)/3\rfloor+1}-1}{\lambda}\geq \frac{q^{\lfloor (m+1)/2 \rfloor }-1}{\lambda}\cdot q^{\lceil (m-4)/6\rceil}$. This implies that the range of 
$\delta$ for which we provide the dimension of 
$\mathcal{C}_{(q,(q^m-1)/\lambda,\delta)}$
is substantially larger than previously established.
  Additionally, we extend these results to certain non-narrow-sense BCH codes of the same length.

This paper is organized as follows.
In Sections \ref{pre} and \ref{review}, we provide essential preliminaries and review our previous results concerning $q$-cyclotomic cosets modulo $q^m-1$. Section \ref{aux} presents several auxiliary lemmas that will be employed in subsequent sections.
Based on the theoretical foundations established in Sections \ref{pre} through \ref{aux}, 
we then determine the dimension of  BCH codes $\mathcal{C}_{(q,(q^m-1)/\lambda,\delta)}$ for $2\leq \delta \leq \frac{q^{\lfloor (2m-1)/3\rfloor+1}-1}{\lambda} +1 $ in Section \ref{dim} and the Bose distance of $\mathcal{C}_{(q,(q^m-1)/\lambda,\delta)}$ for $2\leq \delta \leq \frac{q^{\lfloor (2m-1)/3\rfloor+1}-1}{\lambda}$ in Section \ref{bose}, both by providing explicit formulas. Utilizing these  formulas, we present some  examples of BCH codes $\mathcal{C}_{(q,(q^m-1)/\lambda,\delta)}$  and compare them with the tables of the best known linear codes maintained by Markus Grassl at http://www.codetables.de, which is called \textit{Database} later in this paper.
Furthermore, as an illustration of our main results, Section \ref{illustration} applies these formulas to compute the dimension and Bose distance of BCH codes $\mathcal{C}_{(q,(q^m-1)/\lambda,\delta)}$ specifically when $\lambda \delta$ takes the form $aq^{h+k}+b$, where $h=\lfloor m/2\rfloor$, and $k, a, b$ are integers such that  $m - 2h \leq k \leq \lfloor (2m - 1)/3 \rfloor - h$, $1 \leq a \leq q - 1$, $\lambda \leq b \leq q^{m - h - k}$ and $q \nmid b$.
Following this, Section \ref{non} extends our analysis to determine the dimension and Bose distance of certain non-narrow-sense BCH codes. Moreover, we identify some non-narrow-sense BCH codes that possess optimal parameters.  Finally, Section \ref{conclusion} concludes the paper.

\section{Preliminaries}\label{pre}
Let \( n \) be an integer such that \( \gcd(n, q) = 1 \).  
For each integer \( a \in [0, n-1] \), the \emph{\( q \)-cyclotomic coset} of \( a \) modulo \( n \) is defined as 
\begin{equation}\label{def1}
C_n(a) = \{ a q^{i} \bmod n \mid i = 0, \ldots, \ell_a - 1 \},
\end{equation}
where \( \ell_a \) is the smallest positive integer such that \( a q^{\ell_a} \equiv a \pmod{n} \).  
It is clear that $|C_n(a )|=\ell_a$, where $|\cdot|$ denotes the size of a set.  Moreover, it is well known that $|C_n(a)|=\ell_a$ is  a divisor of \( m = \mathrm{ord}_n(q) \).  
The smallest integer in \( C_n(a) \) is called the \emph{coset leader} of $C_n(a)$.  For convenience, we  occasionally refer to it simply as a
coset leader modulo $n$.
Let \( \ell \) be  a positive integer, 
and  let \( c, d \in [0, n-1] \) be two integers such that \( c \leq d \).  
We define 
\begin{equation}\notag
\mathcal{L}^n(c,d) = \{ a \in [c,d] : a \text{ is the coset leader of } C_n(a) \}
\end{equation}
and
\begin{equation}\notag
\mathcal{L}^n_{\ell}(c,d) = \{ a \in [c,d] : a \text{ is the coset leader of } C_n(a) \text{ and } |C_n(a)| = \ell \}.
\end{equation}
In particular, we simply denote $\mathcal{L}^n_{\ell}(0,n-1)$ by $\mathcal{L}^n_{\ell}$.

It is widely known that the minimal polynomial $m_a(x)$ of $\beta^a$
over $\mathbb{F}_q$ 
is  given by 
\begin{equation*}m_a(x)= \prod\limits_{i\in C_n(a)}(x-\beta^i); \end{equation*} see, for example, \cite[Theorem 4.1.1]{Hou2018}.  
Hence, the generator polynomial  $g(x)$ of the  BCH code $\mathcal{C}_{(q,n,\delta,b)}$ can be expressed as 
\begin{equation}\notag
g(x)= \prod\limits_{i\in \mathcal{G}}(x-\beta^i),\quad \mathcal{G}=\bigcup\limits_{a=b}^{b+\delta-2}C_n(a).
\end{equation}
 Consequently,  
the dimension of $\mathcal{C}_{(q,n,\delta,b)}$ is 
\begin{equation}\label{ndimeq}
    \mathrm{dim}(\mathcal{C}_{(q,n,\delta,b)})=n-\left|\bigcup\limits_{a=b}^{b+\delta-2}C_n(a)\right|,
\end{equation}
and the Bose distance $d_B$ of $\mathcal{C}_{(q,n,\delta,b)}$ is exactly the largest integer such that 
\begin{equation}\label{boseeq}
    \bigcup\limits_{a=b}^{b+\delta-2}C_n(a) = \bigcup\limits_{a=b}^{b+d_B-2}C_n(a).
\end{equation}
In particular, 
the dimension of narrow-sense BCH codes $\mathcal{C}_{(q,n,\delta)}$ can be given by  \begin{equation}\label{dimeq}
    \mathrm{dim}(\mathcal{C}_{(q,n,\delta)})=n-\left|\bigcup\limits_{a=1}^{\delta-1}C_n(a)\right|=n-\sum\limits_{a\in  \mathcal{L}^n(1,\delta-1) }|C_n(a)|. 
    \end{equation}
Thus, determining the coset leaders modulo 
$n$ within 
 $[1,\delta-1]$  
 and the sizes of corresponding $q$-cyclotomic  cosets allows us to compute the dimension of $\mathcal{C}_{(q,n,\delta)}$. 
Additionally, for $\delta' > \delta$, one can easily verify that
\begin{equation*}
   \bigcup_{a=1}^{\delta-1} C_n(a) = \bigcup_{a=1}^{\delta'-1} C_n(a)
\end{equation*}
if and only if none of the integers from $\delta$ to $\delta'-1$ is a coset leader.
Therefore, the Bose distance of $\mathcal{C}_{(q,n,\delta)}$ equals the smallest coset leader that is not less than $\delta$.

We have a useful observation regarding 
$q$-cyclotomic cosets, as presented in the following lemma. This result originated  from  the proof of \cite[Lemma 6]{ZSK2019}. For completeness, we include a proof below.    
\begin{lemma}\label{cos2}
Suppose that  $n$ and $\lambda$ are  two  integers such that  $\mathrm{gcd}(n,q)=1
 $  and  $\mathrm{gcd}(\lambda,q)=1$. Let $a\in [0,n-1]$ be  an integer. Then  
 \begin{itemize}
     \item  $a$ is the   coset leader of  $C_{n}(a)$ if and only if  $\lambda a$  is the coset leader of  $C_{\lambda n}(\lambda a)$;
     \item $|C_{n}(a)|=|C_{\lambda n}(\lambda a)|.$
 \end{itemize}
\end{lemma}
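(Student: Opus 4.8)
The plan is to show that multiplication by $\lambda$ carries the entire coset $C_n(a)$ onto $C_{\lambda n}(\lambda a)$ by an order-preserving bijection; once this single structural fact is in hand, both assertions drop out immediately. The one elementary computation I need first is the remainder identity $(\lambda b)\bmod(\lambda n)=\lambda\,(b\bmod n)$, valid for every nonnegative integer $b$. This follows from the division algorithm: writing $b=ns+r$ with $0\le r<n$ gives $\lambda b=(\lambda n)s+\lambda r$ with $0\le \lambda r<\lambda n$, so $\lambda r$ is exactly the remainder of $\lambda b$ upon division by $\lambda n$.

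Next I would apply this identity with $b=aq^{i}$ to obtain $(\lambda a q^{i})\bmod(\lambda n)=\lambda\bigl((a q^{i})\bmod n\bigr)$ for every $i\ge 0$. Reading this against the definition \eqref{def1} of the $q$-cyclotomic cosets shows that each iterate generating $C_{\lambda n}(\lambda a)$ is precisely $\lambda$ times the corresponding iterate generating $C_{n}(a)$, index by index; hence $C_{\lambda n}(\lambda a)=\lambda\cdot C_{n}(a):=\{\lambda c: c\in C_{n}(a)\}$. Since $\lambda$ is a positive integer, the map $c\mapsto\lambda c$ is injective, so it is a bijection between the two sets, which gives the second assertion $|C_{n}(a)|=|C_{\lambda n}(\lambda a)|$ at once. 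Equivalently, the defining congruences match term by term: $a q^{\ell}\equiv a\pmod n$ holds for exactly the same exponents $\ell$ as $\lambda a q^{\ell}\equiv \lambda a\pmod{\lambda n}$, so the two minimal periods coincide.

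For the first assertion I would use that $c\mapsto\lambda c$ is strictly increasing on the nonnegative integers, so it sends the smallest element of $C_{n}(a)$ to the smallest element of $\lambda\cdot C_{n}(a)=C_{\lambda n}(\lambda a)$. Concretely, if $a=\min C_{n}(a)$ then $\lambda a=\lambda\min C_{n}(a)=\min C_{\lambda n}(\lambda a)$, so $\lambda a$ is the coset leader; conversely, $\lambda a=\min C_{\lambda n}(\lambda a)=\lambda\min C_{n}(a)$ forces $a=\min C_{n}(a)$ after dividing by $\lambda>0$. I do not anticipate a genuine obstacle here, as the whole argument reduces to the remainder identity; the only point meriting a little care is to confirm that the correspondence respects the periodicity of the iterates $a q^{i}\bmod n$, so that the two cosets have equal cardinality rather than one covering the other with multiplicity. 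The index-by-index matching supplied by the identity settles this automatically.
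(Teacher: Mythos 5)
Your proof is correct and follows essentially the same route as the paper: both rest on the remainder identity $(\lambda b)\bmod(\lambda n)=\lambda\,(b\bmod n)$ applied to the iterates $aq^{i}$, from which the coset-leader equivalence follows because multiplication by $\lambda>0$ preserves order, and the equality of sizes follows because the defining congruences match exponent by exponent. Your packaging of this as the set identity $C_{\lambda n}(\lambda a)=\lambda\cdot C_{n}(a)$ is a harmless reorganization of the paper's argument, and you correctly flag and settle the one delicate point, namely that the minimal periods coincide.
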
 
\begin{proof}
    By definition, the integer $a$ is the coset leader of $C_{n}(a)$ if and only if 
    \begin{equation}\label{fan}
        a \bmod n  \leq aq^i \bmod  n \quad \hbox{for any integer  } i\geq 0. 
    \end{equation}
Noticing that 
    $a \bmod n = a- n\cdot \lfloor \displaystyle\frac{ a}{ n } \rfloor $ and $\lambda a  \bmod \lambda n=\lambda a - \lambda n \cdot \lfloor \displaystyle\frac{\lambda a}{\lambda  n}\rfloor$, 
we can assert that 
\begin{equation}\notag
    \lambda \cdot (a \bmod n) =  \lambda a -  \lambda n \cdot \lfloor \displaystyle\frac{  a}{ n }\rfloor = \lambda a -  \lambda n \cdot \lfloor \displaystyle\frac{ \lambda a}{\lambda n }\rfloor = \lambda a \bmod \lambda n.
\end{equation}
Similarly, we also have 
\begin{equation}\label{qq1}
   \lambda \cdot (aq^i \bmod n) = \lambda a q^i  \bmod \lambda n. 
\end{equation}
Therefore, the inequality in (\ref{fan}) is equivalent to 
\begin{equation}\notag
    \lambda a  \bmod   \lambda n \leq \lambda a q^i  \bmod \lambda n \quad \hbox{for any integer  } i\geq 0. 
\end{equation}
By definition, this  holds if and only if  $\lambda a $ is the coset leader of $C_{\lambda n}(\lambda a)$. Therefore, the first statement follows. 

In addition, the equality in  (\ref{qq1}) also implies that 
\begin{equation}\notag
    aq^{i} \bmod n =a \quad \hbox{if and only if} \quad  \lambda a q^i  \bmod  \lambda n =\lambda a.
\end{equation} 
It follows that  the smallest integer $\ell_a$ such that $aq^{\ell_a} \bmod n = a$ is equal to the smallest integer $\ell_{\lambda a} $ such that $\lambda aq^{\ell_{\lambda a}} \bmod \lambda n=\lambda a$. Therefore, we have $|C_{n}(a)|=|C_{ \lambda n}(\lambda a)|.$ This completes the proof. 
\end{proof}
Let $\mathbb{Z}$ be the set of all integers. For each integer $\lambda$, we denote by 
$\mathcal{D}_{\lambda}$  the set of all the integers that are divisible by $\lambda$, that is, 
 \begin{equation}\notag
    \mathcal{D}_{\lambda}=\{a\in \mathbb{Z}:\lambda\mid a\}.
\end{equation} 
Then one can directly derive  the following corollary from Lemma \ref{cos2}.
\begin{Corollary}\label{cor1}
Let $n$ and $\lambda$ be   two  integers such that  $\gcd(n,q)=1
 $  and  $\mathrm{gcd}(\lambda,q)=1$.
 Then  
 \begin{equation}\notag
     \left |   \mathcal{L}_{\ell}^{n}(b,c)  \right| =  \left |   \mathcal{L}^{\lambda n}_{\ell}(\lambda b,\lambda c)\cap \mathcal{D}_{\lambda}\right|
 \end{equation}
 for any  positive integer  $\ell $, and  integers $b,c$ with $0\leq b\leq c\leq n-1$.  
\end{Corollary}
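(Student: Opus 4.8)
The plan is to read off the identity from a single explicit bijection, with all the substance supplied by Lemma~\ref{cos2}. I would define the map $\phi \colon \mathcal{L}^{n}_{\ell}(b,c) \to \mathcal{L}^{\lambda n}_{\ell}(\lambda b, \lambda c) \cap \mathcal{D}_{\lambda}$ by $\phi(a) = \lambda a$ and show it is a bijection; since the two sets are finite, equality of cardinalities then follows immediately. Because $\lambda$ is a positive integer (a divisor of $q-1$ in the setting of interest), multiplication by $\lambda$ is injective, so only well-definedness and surjectivity require checking.

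For well-definedness, I would take $a \in \mathcal{L}^{n}_{\ell}(b,c)$, so that $a \in [b,c]$, the integer $a$ is the coset leader of $C_n(a)$, and $|C_n(a)| = \ell$. Multiplying $b \le a \le c$ by $\lambda > 0$ gives $\lambda b \le \lambda a \le \lambda c$, and trivially $\lambda a \in \mathcal{D}_{\lambda}$. The first statement of Lemma~\ref{cos2} then makes $\lambda a$ the coset leader of $C_{\lambda n}(\lambda a)$, while the second gives $|C_{\lambda n}(\lambda a)| = |C_n(a)| = \ell$; hence $\lambda a \in \mathcal{L}^{\lambda n}_{\ell}(\lambda b, \lambda c) \cap \mathcal{D}_{\lambda}$. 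I would also note $0 \le \lambda b \le \lambda c \le \lambda(n-1) \le \lambda n - 1$, so that the target set is legitimately a set of coset leaders modulo $\lambda n$.

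For surjectivity, I would take an arbitrary $y \in \mathcal{L}^{\lambda n}_{\ell}(\lambda b, \lambda c) \cap \mathcal{D}_{\lambda}$. Membership in $\mathcal{D}_{\lambda}$ lets me write $y = \lambda a$ for a unique integer $a$, and dividing $\lambda b \le y \le \lambda c$ by $\lambda$ places $a \in [b,c] \subseteq [0,n-1]$, so that Lemma~\ref{cos2} applies to $a$. Since $\lambda a$ is the coset leader of $C_{\lambda n}(\lambda a)$, the converse direction of the first statement of the lemma forces $a$ to be the coset leader of $C_n(a)$, and the size equality gives $|C_n(a)| = \ell$. Thus $a \in \mathcal{L}^{n}_{\ell}(b,c)$ with $\phi(a) = y$, so $\phi$ is onto.

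I do not expect any genuine obstacle: the corollary is a formal consequence of Lemma~\ref{cos2}, and the only point demanding (routine) attention is the positivity of $\lambda$, which is what guarantees that multiplication by $\lambda$ preserves interval membership in both directions and that the preimage $a$ stays inside $[0,n-1]$, where the lemma is stated.
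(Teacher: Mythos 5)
Your proof is correct and is exactly the derivation the paper intends: the paper gives no separate proof of Corollary~\ref{cor1}, stating only that it follows directly from Lemma~\ref{cos2}, and your bijection $a \mapsto \lambda a$ (well-defined and surjective by the two statements of that lemma, injective by positivity of $\lambda$) is precisely that direct derivation. Your attention to the boundary check $\lambda c \leq \lambda n - 1$ and to the positivity of $\lambda$ (automatic in the paper's setting, where $\lambda \mid q-1$) is appropriate and introduces no discrepancy with the paper.
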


\section{Some known results on \texorpdfstring{$q$}\ -cyclotomic cosets  modulo \texorpdfstring{$q^m-1$}\ }\label{review} 
Throughout the rest of the paper,  let  $m$ be a positive integer, let  $\lambda$ be a  positive divisor of $q-1$, and  set  $h=\lfloor\frac{m}{2}\rfloor$  and $n=(q^m-1)/\lambda$.   
For  a positive real number $a$, we denote by   $N(a)$ the number of integers in the range $[1,a-1]$ that   are not divisible by $q$.  Equivalently, $
    N(a) = \lfloor a-1\rfloor-\lfloor (a-1)/{q}\rfloor. 
$

In this section, we briefly review some known results on   $q$-cyclotomic cosets modulo 
$\lambda n=q^m -1$.   These foundational results will be essential for deriving the main contributions of this paper. 
To maintain consistency, we follow the notation and terminology used in \cite{zheng2025}, which we now introduce for completeness.

 Let $Z_q =\{0, 1,2,\ldots, q-1\}$ be the set of nonnegative integers strictly less than $q$, and let  
 $Z_q^{m} = \{(a_{m-1}, \ldots, a_0): a_i\in Z_q \}$  be  the set of all length-$m$
sequences of integers in  $Z_q$.   For simplicity, we  denote by $\mathbf{0}_{m}$ the sequence in $Z_q^m$ whose elements are all zero. 
We define an order on $Z_q^m$ by  lexicographic order. Specifically, for any two sequences   $U=(u_{m-1},\ldots,u_1,u_{0})$ and $W=(w_{m-1},\ldots,w_1,w_0)$ in $Z_q^m$, we define:
\begin{itemize}
\item [1.] $U$ is equal to $W$, denoted by $U=W$,  if    $u_{\ell}=w_{\ell}$ for $\ell=0,\ldots,m-1.$
\item  [2.]  $U$ is less than $W$, denoted by $U<W$, if either  $u_{m-1}<w_{m-1}  $  or there exists an integer $i\in [0,m-2]$ such that  
$u_i<v_i$ and $u_{\ell}=w_{\ell}$ for all $\ell=i+1,\ldots,m-1.$ 
\item [3.] $U\leq W$ is denoted   if $U= W$ or $U<W$.
\end{itemize}  
Note that each integer $a \in [0, \lambda n]$ can be uniquely represented by its \textit{$q$-adic expansion} as
$
a = \sum\limits_{\ell=0}^{m-1} a_{\ell} q^{\ell},
$
where $a_{\ell} \in Z_q$ for all $\ell = 0, 1, \dots, m-1$.
We define a map  $V$ from the set of  integers in $[0, \lambda n]$ to $Z_q^{m}$  as 
  \begin{equation}\notag
  V(a)=(a_{m-1}, \ldots,a_1, a_0),
\end{equation}   
where   $\sum\limits_{\ell=0}^{m-1}a_{\ell}q^{\ell}$ forms the  $q$-adic expansion of the integer  $a\in [0,\lambda n].$

Assume that $m\geq 4 $.  Let $k $ be an integer such that   $m-2h\leq k\leq  \lfloor {(2m-1)}/{3}\rfloor-h$. 
If $m$ is an odd integer,  
  for each integer $i\in [-k+1, k]$, we define  $\mathcal{A}_k(i)$ as  the set of  all  integers $a\in [q^{h+k},q^{h+k+1})$ with $q$-adic expansion $\sum\limits_{\ell=0}^{h+k}a_{\ell}q^{\ell}$  satisfying the following conditions:  
\begin{eqnarray}
&&a_{h+i}>0; \label{p31}\\
&&(a_{k+i-1},\ldots,a_{0})\leq (a_{h+k}, \ldots, a_{h-i+1}) \hbox{ and }a_0>0;  \label{p32} \\ 
&&V(a)=(\mathbf{0}_{h-k}, a_{h+k},\ldots,a_{h+i},\mathbf{0}_{h-k},a_{k+i-1},\ldots,a_0).  \label{p33}
\end{eqnarray}
If  $m$ is an even integer,  for each integer $i\in [-k,k]$,  we define  $\mathcal{B}_k(i)$ as the set of all integers $a\in [q^{h+k},q^{h+k+1})$  with $q$-adic expansion $\sum\limits_{\ell=0}^{h+k}a_{\ell}q^{\ell}$ satisfying the condition in (\ref{p31}) and the following:    
\begin{eqnarray}
&&(a_{k+i},\ldots,a_{0})\leq (a_{h+k}, \ldots, a_{h-i}) \hbox{ and }a_0>0;  \label{p35} \\
&&V(a)=(\mathbf{0}_{h-k-1}, a_{h+k},\ldots,a_{h+i},\mathbf{0}_{h-k-1},a_{k+i},\ldots,a_0).  \label{p36}   
\end{eqnarray}

With the above preparations, we now present several known results on
$q$-cyclotomic cosets modulo $\lambda n = q^{m}-1$ as follows.

\begin{lemma}\textnormal{\cite[Theorem~2.3]{YF2000}\label{lll}}
If $m$ is odd, then for any integer $a \le q^{h+1}$, $a$ is the coset
leader of $C_{\lambda n}(a)$ if and only if $q\nmid a$.
If $m$ is even, then for any integer $a \le 2q^{h}$, $a$ is the coset
leader of $C_{\lambda n}(a)$ if and only if $q\nmid a$.
\end{lemma}
Clearly, if an integer $a$ is divisible by $q$, then $a/q \in C_{\lambda n}(a)$. 
Therefore, $q \nmid a$ is a necessary condition for $a$ to be a coset leader modulo 
$\lambda n = q^m - 1$. 
The above lemma shows that this condition is also sufficient for sufficiently small $a$. However, for larger values of $a$, there exist integers $a$ that are neither divisible by $q$ nor coset leaders modulo $\lambda n = q^m - 1$.  
We denote by 
 $\mathcal{S}$  the set such integers, that is,  \begin{equation}\notag
\mathcal{S}=\{a\in [1,\lambda n-1] : q\nmid a \text{ and } a \text{ is not the coset leader of } C_{\lambda n}(a)\},
\end{equation}
In addition, we define $\mathcal{H}$ to be the set of integers in $[1,\lambda n-1]$ that are 
coset leaders modulo $\lambda n$ and whose corresponding $q$-cyclotomic cosets have size $m/2$, i.e.,
\begin{equation*}
\mathcal{H}
=\{a\in [1,\lambda n-1] : a \text{ is the coset leader of } C_{\lambda n}(a)
\text{ and } |C_{\lambda n}(a)|=m/2\}.
\end{equation*}
The following result shows that, within certain ranges, the integers in $\mathcal{S}$ (for odd $m$) 
and in $\mathcal{S}\cup\mathcal{H}$ (for even $m$) can be partitioned into several disjoint classes, where the symbol $\bigsqcup$ denotes disjoint union.
 
\begin{lemma}\textnormal{\cite[Theorem 2]{zheng2025}}\label{th2}
Let $m$ and $k$ be  two integers such that $m\geq 4$ and  $m-2h\leq k\leq   \lfloor (2m-1)/{3}\rfloor -h.$
\begin{itemize}
    \item 
If $m$ is odd,  then 
 \begin{equation}\notag
\mathcal{S}\cap [q^{h+k},q^{h+k+1})=\bigsqcup\limits_{i=-k+1}^k\mathcal{A}_k(i).
\end{equation}
\item If $m $ is  even,   
 then \begin{equation}\notag
(\mathcal{S}\cup \mathcal{H})\cap [q^{h+k},q^{h+k+1}) =\bigsqcup\limits_{i=-k}^k\mathcal{B}_k(i).
\end{equation}
\end{itemize}
\end{lemma}

In \cite[Theorem 1]{zheng2025}, we determined the sizes of certain
$q$-cyclotomic cosets modulo $q^{m}-1$. 
There is a typo in the even-$m$ case:
the upper bound $q^{m-\lfloor m/4\rfloor}$ should be 
$q^{m-\lfloor m/3\rfloor}$. 
For  convenience, we restate this result below with the
corrected range.
 
\begin{lemma}\textnormal{\cite[Theorem 1]{zheng2025}}\label{th1}
If  $m$ is an  odd integer,  then for  any integer $a\in [1,q^{m-\lfloor {m}/{3}\rfloor}),$
\begin{equation}\notag
|C_{\lambda n}(a)|=m. 
\end{equation}
If   $m$ is an even integer,  then  for any integer  $a\in  [1,q^{m-\lfloor{m}/{3}\rfloor}),$ 
\begin{equation*}
|C_{\lambda n}(a)|=\begin{cases}
\frac{m}{2} & \hbox{if }  aq^{h}\mathrm{\ mod\ } \lambda n = a, \\
m & \hbox{if } aq^{h} \mathrm{\ mod\ }\lambda n \neq  a. 
\end{cases}
\end{equation*}
\label{th2e}
\end{lemma}

The following result characterizes the integers $a$ within certain ranges for which the $q$-cyclotomic coset $C_{\lambda n}(a)$ has size $m/2$  when $m$ is even.
\begin{lemma}\textnormal{\cite[Corollary 2]{zheng2025}}\label{corr}
Let $m$ be an even integer and  $k\in \left[0,\lfloor(2m-1)/{3}\rfloor-h\right]$ be an integer. Suppose that $ a\in  \left[q^{h+k},q^{h+k+1}\right)$   is  an integer. Then   $a\in \mathcal{H}$ if and only if $V(a)$ has the form 
\begin{equation}\notag
(\mathbf{0}_{h-k-1},a_{k},\ldots,a_0, \mathbf{0}_{h-k-1},a_{k},\ldots,a_0)
\end{equation}
with  $a_0>0$ and $a_k>0$.   
\end{lemma}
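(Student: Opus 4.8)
The plan is to work directly with the cyclic-shift description of the $q$-cyclotomic cosets modulo $q^m-1$. Recall that multiplication by $q$ modulo $q^m-1$ performs a left cyclic shift on the length-$m$ digit vector $V(a)=(a_{m-1},\dots,a_0)$; hence $C_{q^m-1}(a)$ is exactly the set of numbers obtained from the cyclic rotations of $V(a)$, its size $|C_{q^m-1}(a)|=\ell_a$ equals the minimal cyclic period of $V(a)$, and $a$ is the coset leader precisely when $V(a)$ is the smallest rotation under the order on $Z_q^m$ fixed above. With $m=2h$, the hypothesis $a\in\mathcal{H}$ thus reads: the minimal period of $V(a)$ is $h$ and $V(a)$ is the smallest rotation. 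I would prove the two implications separately.

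For the forward direction, assume $a\in\mathcal{H}\cap[q^{h+k},q^{h+k+1})$. Minimal period $h$ forces $a_{\ell+h}=a_{\ell}$ for all $\ell$ (indices mod $2h$), so $V(a)$ consists of two identical half-blocks $B=(a_{h-1},\dots,a_0)$. The membership $a\in[q^{h+k},q^{h+k+1})$ means $a_{h+k}\neq0$ and $a_{\ell}=0$ for $\ell>h+k$; pushing these through the periodicity $a_{h+j}=a_j$ gives $a_k\neq0$ together with $a_j=0$ for $j\in[k+1,h-1]$, that is $B=(\mathbf{0}_{h-k-1},a_k,\dots,a_0)$ with $a_k>0$. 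It remains to get $a_0>0$, which I would obtain from an auxiliary fact: a nonzero coset leader modulo $q^m-1$ is never divisible by $q$. Indeed, if $q\mid a$ then $a/q$ is an integer in $[0,q^m-2]$ congruent to $aq^{-1}$ modulo $q^m-1$, hence lies in $C_{q^m-1}(a)$ and satisfies $a/q<a$, contradicting that $a$ is the coset leader. This yields $a_0>0$ and the stated form.

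For the reverse direction, assume $V(a)=(\mathbf{0}_{h-k-1},a_k,\dots,a_0,\mathbf{0}_{h-k-1},a_k,\dots,a_0)$ with $a_0>0$ and $a_k>0$. Clearly $V(a)$ has period $h$, so $\ell_a\mid h$. To show $\ell_a=h$ exactly I would rule out any proper divisor $d\mid h$ with $d<h$: such a period would force position $h-d$ of the low block to be nonzero (being $\equiv0\bmod d$, like the nonzero position $0$), so $h-d\le k$; but $d\le h/2$ then gives $k\ge h/2$, contradicting the range bound $k\le\lfloor(2m-1)/3\rfloor-h$, which with $m=2h$ yields $k\le(h-1)/3<h/2$. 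Hence $|C_{q^m-1}(a)|=h=m/2$. Finally I would verify that $a$ is the coset leader by a maximal-zero-run argument: the inequality $k<h/2$ also forces every internal zero run to have length at most $k-1<h-k-1$, so the two runs at positions $[k+1,h-1]$ and $[h+k+1,2h-1]$ are the unique maximal zero runs, each of length $h-k-1$ and each immediately followed (toward lower positions) by the identical block $a_k\cdots a_0$; the rotation maximizing the number of leading zeros therefore reproduces $V(a)$ itself (the two candidate rotations coincide by the period-$h$ symmetry), so $V(a)$ is the smallest rotation and $a\in\mathcal{H}$.

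The main obstacle is the reverse direction, specifically the two combinatorial digit arguments—ruling out a smaller period and certifying that $V(a)$ is the numerically smallest rotation—both of which hinge on the strict inequality $k<h/2$ extracted from $k\le\lfloor(2m-1)/3\rfloor-h$. I expect the bookkeeping of zero runs (ensuring that no internal run ties or exceeds the length $h-k-1$, and that the two maximal runs yield the same number by periodicity) to be the fiddly part, whereas the forward direction and the auxiliary claim that coset leaders are not divisible by $q$ are routine.
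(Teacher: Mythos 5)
Your proof is correct. Note that the paper itself offers no proof of this lemma: it is imported verbatim as \cite[Corollary 2]{zheng2025}, so there is nothing internal to compare against, and your argument stands as a self-contained replacement. Your route is the natural ``necklace'' one: identify $C_{q^m-1}(a)$ with the cyclic rotations of $V(a)$, so that $|C_{q^m-1}(a)|$ is the minimal period and leadership means being the lexicographically smallest rotation. The forward direction (periodicity $a_{\ell+h}=a_\ell$ plus the range $[q^{h+k},q^{h+k+1})$ pins down the block structure, and $q\nmid a$ for a leader gives $a_0>0$) is routine and right. The substance is in the reverse direction, and both of your uses of the bound $k\le\lfloor(2m-1)/3\rfloor-h$, i.e.\ $k\le\lfloor(h-1)/3\rfloor<h/2$ with $m=2h$, check out: a proper period $d\mid h$, $d\le h/2$, would force $a_{h-d}=a_0>0$ at a position in the zero range $[k+1,h-1]$, giving $d\ge h-k>h/2$, a contradiction, so $\ell_a=h$ exactly; and since every internal zero run of the block has length at most $k-1<h-k-1$, the two runs $\mathbf{0}_{h-k-1}$ are the unique maximal cyclic zero runs, the two rotations starting at them coincide by the period-$h$ symmetry and equal $V(a)$, so $V(a)$ is the smallest rotation. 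The only points worth making explicit in a written version are the two small facts you use silently: that the minimal period of a cyclic word divides every period (the shifts fixing the word form a subgroup of $\mathbb{Z}_m$), and that among rotations the smallest one must begin with a zero run of maximal length, ties being resolved here by the symmetry. Neither is a gap, just bookkeeping.
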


Although the above results only describe the sizes and coset leaders of
$q$-cyclotomic cosets modulo $\lambda n = q^{m}-1$, by applying Lemma \ref{cos2}, they also yield information on
$q$-cyclotomic cosets modulo
$
n=(q^{m}-1)/\lambda.
$
These facts will be used later to determine the dimension and Bose distance of BCH codes of length $n=(q^{m}-1)/\lambda$.

To facilitate the use of these results in subsequent sections, we
introduce additional notation and make several remarks concerning the
definitions of the sets $\mathcal{A}_k(i)$ and $\mathcal{B}_k(i)$.
For each integer $a \in \mathcal{A}_k(i)$, we define
\begin{equation*}
t(a)=\sum_{\ell=h+i}^{h+k} a_{\ell} q^{\ell-h-i}
\quad \text{and} \quad
\alpha(a)=\sum_{\ell=0}^{k+i-1} a_{\ell} q^{\ell}.
\end{equation*}
Similarly, for each integer $a \in \mathcal{B}_k(i)$, we define
\begin{equation*}
t(a)=\sum_{\ell=h+i}^{h+k} a_{\ell} q^{\ell-h-i}
\quad \text{and} \quad
\alpha(a)=\sum_{\ell=0}^{k+i} a_{\ell} q^{\ell}.
\end{equation*}
We then make the following remarks.
\begin{Remark}\label{rm3}
 The condition in (\ref{p31})  can be equivalently represented as $q\nmid t(a)$.
\end{Remark}
\begin{Remark}\label{rr2}
 The condition  in (\ref{p33})  implies that each  integer $a\in \mathcal{A}_k(i)$ can be uniquely  decomposed as \begin{equation*}a=t(a)q^{h+i}+\alpha(a).
 \end{equation*}
 Since $\lambda$ is a divisor $q-1$, it follows that 
for any integer $a\in \mathcal{A}_k(i)$,  \begin{equation*} \lambda \mid a  \quad \hbox{ if and only if } \quad 
\lambda \mid t(a)+\alpha(a).\end{equation*}
Similarly, each integer $a\in \mathcal{B}_k(i)$ can be uniquely decomposed as 
\begin{equation*}
a=t(a)q^{h+i}+\alpha(a),\end{equation*} and 
 \begin{equation*}\lambda\mid a  \quad \hbox{ if and only if } \quad 
\lambda \mid t(a)+\alpha(a).\end{equation*} 
\end{Remark}   

\begin{Remark}\label{rm4}
 The  condition  in    (\ref{p32}) can be equivalently expressed as  \begin{equation}\label{mm}
    1\leq \alpha(a)\leq \sum\limits_{\ell=h-i+1}^{h+k}a_{\ell}q^{\ell-(h-i+1)}\quad \hbox{and}\quad q\nmid \alpha(a).
 \end{equation}
Moreover,  the form of $V(a)$ in (\ref{p33}) implies that 
\begin{equation}\notag
    \sum\limits_{\ell=h-i+1}^{h+k}a_{\ell}q^{\ell-(h-i+1)}=t(a)\cdot q^{2i-1} \quad \hbox{for all integers } i\in [1,k],
\end{equation} 
and 
\begin{equation}\notag
\sum\limits_{\ell=h-i+1}^{h+k}a_{\ell}q^{\ell-(h-i+1)} \nonumber +   \sum\limits_{\ell=h+i}^{h-i}a_{\ell}q^{\ell-(h-i+1)} 
  = t(a)\cdot q^{2i-1}\quad \hbox{for all integers }i\in [-k+1,0]. 
\end{equation} 
Noticing that  $0\leq \sum\limits_{\ell=h+i}^{h-i}a_{\ell}q^{\ell-(h-i+1)}<1$ for all integers $i\in [-k+1,0]$,  we  conclude that  (\ref{mm}) is equivalent to  
\begin{equation}\notag
   1\leq \alpha(a)\leq    \lfloor t(a)
   \cdot q^{2i-1}\rfloor \quad \hbox{and}\quad q\nmid \alpha(a).
\end{equation}
Similarly, 
the condition in  (\ref{p35}) can also   be written as  
 \begin{equation}\notag
 1\leq \alpha(a)\leq \sum\limits_{\ell=h-i}^{h+k}a_{\ell}q^{\ell-(h-i)} \quad\hbox{and}\quad q\nmid \alpha(a),\end{equation} which is further equivalent to 
 \begin{equation}\notag
 1\leq \alpha(a)\leq \lfloor t(a)\cdot q^{2i}\rfloor \quad\hbox{and}\quad q\nmid \alpha(a).
 \end{equation}
\end{Remark}

\begin{Remark}\label{rm6}
If   $a\in \mathcal{A}_k(i)$ is an integer with  $q$-adic expansion $\sum\limits_{\ell=0}^{h+k}a_{\ell}q^{\ell},$ then 
$\sum\limits_{\ell=0}^{h-k}a_{\ell}q^{\ell}\leq \sum\limits_{\ell=h-i+1}^{h+k}a_{\ell}q^{\ell-(h-i+1)}$.
Similarly,  if $a\in \mathcal{B}_k(i)$ is an integer with  $q$-adic expansion $\sum\limits_{\ell=0}^{h+k}a_{\ell}q^{\ell}$,  then   
$\sum\limits_{\ell=0}^{h-k-1}a_{\ell}q^{\ell}\leq \sum\limits_{\ell=h-i}^{h+k}a_{\ell}q^{\ell-(h-i)}$. 
\end{Remark}

\begin{Remark}\label{cor2} 
If   $a\in \mathcal{A}_k(i)$ is an integer with  $q$-adic expansion $\sum\limits_{\ell=0}^{h+k}a_{\ell}q^{\ell},$ then 
$i$ is the smallest integer in $[-k+1,k]$ such that $a_{h+i}>0$.  Consequently, $\mathcal{A}_k(i)\cap \mathcal{A}_k(j)=\varnothing$  for  distinct integers $i,j\in [-k+1,k]$. 

Similarly,  if $a\in \mathcal{B}_k(i)$ is an integer with  $q$-adic expansion $\sum\limits_{\ell=0}^{h+k}a_{\ell}q^{\ell}$,  then   $i$ is the smallest integer in $[-k,k]$ such that $a_{h+i}>0$.  As a result,  $\mathcal{B}_k(i)\cap \mathcal{B}_{k}(j)=\varnothing$ for distinct integers $i,j\in [-k,k]$. 
\end{Remark}

\section{Auxiliary Lemmas}\label{aux}
The following lemmas are needed to establish the main theorems on the dimension and Bose distance of BCH codes in the subsequent sections. Their proofs are given in Appendices A–H.
\begin{lemma}\label{le1}Suppose that  $x$ and $y$ are two positive integers. Then  
   \begin{equation}\notag
       \left|\{\alpha\in [1,x]: q\nmid \alpha \hbox{ and }\lambda\mid \alpha+y \}\right|=\lfloor \frac{x+y}{\lambda}\rfloor -\lfloor\frac{\lfloor {x}/{q}\rfloor+y}{\lambda} \rfloor.
    \end{equation}
\end{lemma}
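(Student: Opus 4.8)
The plan is to remove the condition $q\nmid\alpha$ by inclusion--exclusion and then count each resulting set of integers lying in an arithmetic progression directly. Set
$$A=\{\alpha\in[1,x]:\lambda\mid\alpha+y\},\qquad B=\{\alpha\in[1,x]:q\mid\alpha\ \text{and}\ \lambda\mid\alpha+y\}.$$
The set whose cardinality the lemma asks for is exactly $A\setminus B$, and since $B\subseteq A$ its size equals $|A|-|B|$. Thus it suffices to evaluate $|A|$ and $|B|$ separately and subtract.

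For $|A|$, I would reindex via $\alpha\mapsto\alpha+y$: as $\alpha$ runs over $[1,x]$, the quantity $\alpha+y$ runs over the integer interval $[y+1,\,x+y]$, and the condition becomes $\lambda\mid(\alpha+y)$. Counting multiples of $\lambda$ in $[y+1,\,x+y]$ by the elementary identity $\#\{1\le t\le N:\lambda\mid t\}=\lfloor N/\lambda\rfloor$ gives $|A|=\lfloor (x+y)/\lambda\rfloor-\lfloor y/\lambda\rfloor$.

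For $|B|$, the decisive step --- and really the only place the standing hypothesis $\lambda\mid q-1$ is used --- is the substitution $\alpha=q\beta$. Writing $\alpha=q\beta$ with $\beta\in[1,\lfloor x/q\rfloor]$, and using $q\equiv 1\pmod\lambda$, I get $q\beta+y\equiv\beta+y\pmod\lambda$, so the condition $\lambda\mid\alpha+y$ is equivalent to $\lambda\mid\beta+y$. Hence $|B|$ equals the number of $\beta\in[1,\lfloor x/q\rfloor]$ with $\lambda\mid\beta+y$, which by the same reindexing argument equals $\lfloor(\lfloor x/q\rfloor+y)/\lambda\rfloor-\lfloor y/\lambda\rfloor$.

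Subtracting, the two $\lfloor y/\lambda\rfloor$ terms cancel, and $|A|-|B|$ reduces to precisely the claimed right-hand side. I expect no genuine obstacle here; the single point requiring care is the invocation of $\lambda\mid q-1$ to collapse $q\beta+y$ to $\beta+y$ modulo $\lambda$. Without it, the progression defining $B$ would sit in the residue class of $-q^{-1}y$ rather than $-y$ modulo $\lambda$, the shift would not be the same $y$, and the two $\lfloor y/\lambda\rfloor$ terms would fail to cancel cleanly, so this reduction is exactly what makes the stated closed form work.
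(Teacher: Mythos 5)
Your proposal is correct and follows essentially the same route as the paper: both remove the condition $q\nmid\alpha$ by subtracting the multiples of $q$, use the bijection $\alpha\mapsto\alpha/q$ together with $q\equiv 1\pmod{\lambda}$ (the standing hypothesis $\lambda\mid q-1$) to show $\lambda\mid\alpha+y$ iff $\lambda\mid\frac{\alpha}{q}+y$, and then count residues in intervals by the standard floor identity. Your only cosmetic deviation is computing $|A|$ and $|B|$ separately and cancelling the two $\lfloor y/\lambda\rfloor$ terms, where the paper merges the difference into a single count over $\bigl[\lfloor x/q\rfloor+1,\,x\bigr]$ before applying the same identity.
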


\begin{lemma}\label{lll8} 
Suppose that  $x$ and $y$ are two integers such that  $x\leq y$. Then  
\begin{equation}\notag
\left| \{\alpha\in [x,y]:\lambda\mid 2\alpha\hbox{ and }q\nmid \alpha \}\right|= \begin{cases}
\lfloor\frac{y}{\lambda}\rfloor -\lfloor\frac{x-1}{\lambda} \rfloor -  \lfloor \frac{\lfloor y/q\rfloor}{\lambda}\rfloor+\lfloor \frac{\lceil {x}/{q} \rceil-1}{\lambda} \rfloor&\hbox{if $\lambda$ is odd,}\\
 \lfloor\frac{2y}{\lambda}\rfloor -\lfloor\frac{2x-2}{\lambda} \rfloor -  \lfloor \frac{2\lfloor y/q\rfloor}{\lambda}\rfloor+\lfloor \frac{2\lceil {x}/{q} \rceil-2}{\lambda} \rfloor&\hbox{if $ \lambda$ is even.}     
\end{cases}
\end{equation}
\end{lemma}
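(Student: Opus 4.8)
The plan is to reduce this divisibility-plus-coprimality count to the standard count of multiples of a fixed integer inside an interval, treating the two parities of $\lambda$ separately, since the constraint $\lambda \mid 2\alpha$ collapses differently in each case.

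First I would rewrite the condition $\lambda \mid 2\alpha$. Setting $g=\gcd(\lambda,2)$, one has $\lambda\mid 2\alpha$ if and only if $(\lambda/g)\mid \alpha$, because after dividing through by $g$ the remaining factor $2/g$ is coprime to $\lambda/g$. Hence when $\lambda$ is odd the condition becomes $\lambda\mid \alpha$, and when $\lambda$ is even it becomes $(\lambda/2)\mid \alpha$. In either case let $d$ denote the resulting modulus, so that the target set is $\{\alpha\in[x,y]:d\mid\alpha\text{ and }q\nmid\alpha\}$, with $d=\lambda$ or $d=\lambda/2$.

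Next I would apply inclusion–exclusion on the condition $q\nmid\alpha$: the desired cardinality equals the number of $\alpha\in[x,y]$ with $d\mid\alpha$ minus the number with both $d\mid\alpha$ and $q\mid\alpha$. The first count is the standard formula $\lfloor y/d\rfloor-\lfloor (x-1)/d\rfloor$ for the multiples of $d$ in $[x,y]$. For the second, I would substitute $\alpha=q\beta$; since the multiples of $q$ in $[x,y]$ correspond exactly to $\beta\in[\lceil x/q\rceil,\lfloor y/q\rfloor]$, and since $\gcd(d,q)=1$ (this follows from the standing assumption $\lambda\mid q-1$, which forces $\gcd(\lambda,q)=1$ and hence also $\gcd(\lambda/2,q)=1$), the condition $d\mid q\beta$ is equivalent to $d\mid\beta$. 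Thus this count equals $\lfloor \lfloor y/q\rfloor/d\rfloor-\lfloor (\lceil x/q\rceil-1)/d\rfloor$, and subtracting gives a closed form with denominator $d$.

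Finally I would specialize. For odd $\lambda$, setting $d=\lambda$ yields the first branch of the formula verbatim. For even $\lambda$, setting $d=\lambda/2$ and using the elementary identity $\lfloor a/(\lambda/2)\rfloor=\lfloor 2a/\lambda\rfloor$ (valid because the two quotients are equal as rationals) rewrites every floor term over the common denominator $\lambda$, producing the second branch. The work here is entirely bookkeeping rather than a genuine obstacle; the only points demanding care are confirming $\gcd(\lambda/2,q)=1$ in the even case and correctly identifying the endpoints $\lceil x/q\rceil$ and $\lfloor y/q\rfloor$ of the $\beta$-interval, so that the ceilings and floors match those appearing in the stated expression.
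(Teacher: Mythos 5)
Your proposal is correct and follows essentially the same route as the paper: both arguments apply inclusion--exclusion on the condition $q\nmid\alpha$, reduce the count of multiples of $q$ to counting over the interval $\left[\lceil x/q\rceil,\lfloor y/q\rfloor\right]$, and split into the parity cases where $\lambda\mid 2\alpha$ becomes $\lambda\mid\alpha$ or $(\lambda/2)\mid\alpha$. The only cosmetic difference is that you convert $\lambda\mid 2\alpha$ to $d\mid\alpha$ first and then invoke $\gcd(d,q)=1$, whereas the paper transports the condition $\lambda\mid 2\alpha$ through the map $\alpha\mapsto\alpha/q$ directly via $\lambda\mid q-1$; both hinge on the same coprimality fact and yield identical formulas.
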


\begin{lemma}\label{le0}
 Suppose that   $x$ and $y$ are two integers. Then  
 \begin{equation}\notag
  \sum\limits_{t=1}^{q-1}\left [\lfloor \frac{t+x}{\lambda}\rfloor- \lfloor \frac{t+y}{\lambda}\rfloor \right] =\frac{(q-1)(x-y)}{\lambda}.
\end{equation}  
\end{lemma}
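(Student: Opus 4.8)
The plan is to reduce the identity to a single increment computation. Define the one-variable function $f(z) = \sum_{t=1}^{q-1}\lfloor (t+z)/\lambda\rfloor$ for integer $z$, so that the claimed equality is precisely $f(x) - f(y) = (q-1)(x-y)/\lambda$. Since $f(x) - f(y)$ telescopes into a signed sum of the consecutive differences $f(z+1) - f(z)$ (running upward or downward according to the sign of $x - y$), it suffices to show that this one-step increment is the \emph{constant} $(q-1)/\lambda$, independent of $z$. This reduction is the only structural idea needed.

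To evaluate $f(z+1) - f(z)$, I would apply termwise the elementary fact that, for an integer $n$, the quantity $\lfloor (n+1)/\lambda\rfloor - \lfloor n/\lambda\rfloor$ equals $1$ when $\lambda \mid n+1$ and $0$ otherwise. Taking $n = t+z$ and summing over $t$ gives $f(z+1) - f(z) = \left|\{t \in [1, q-1] : \lambda \mid t+z+1\}\right|$, that is, the number of multiples of $\lambda$ lying in a window of $q-1$ consecutive integers.

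This is exactly where the standing hypothesis $\lambda \mid q-1$ (in force throughout Section~\ref{review}) does all the work: in any block of $q-1$ consecutive integers the count of multiples of $\lambda$ equals $(q-1)/\lambda$ with no dependence on the starting point. Hence $f(z+1) - f(z) = (q-1)/\lambda$ for every integer $z$, and telescoping yields $f(x) - f(y) = (x-y)(q-1)/\lambda$, as required.

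I do not expect a genuine obstacle here; the argument is uniform in the signs of $x$, $y$, and $x-y$, so no case analysis is needed, and the single point deserving care is the clean count of multiples of $\lambda$ in a window of length $q-1$, which is exact precisely because $\lambda \mid q-1$. A slightly more computational alternative would be to partition $t = 1, \ldots, q-1$ into $(q-1)/\lambda$ consecutive blocks of length $\lambda$ and apply the standard identity $\sum_{j=0}^{\lambda-1}\lfloor (n+j)/\lambda\rfloor = n$ to each block, obtaining closed forms for $f(x)$ and $f(y)$ before subtracting; the increment approach above is preferable since it avoids even this bookkeeping.
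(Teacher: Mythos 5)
Your proof is correct, but it takes a genuinely different route from the paper's. The paper (Appendix~C) first splits off the integer parts, setting $\beta_1 = x - \lambda\lfloor x/\lambda\rfloor$ and $\beta_2 = y - \lambda\lfloor y/\lambda\rfloor$, so that each summand contributes $\lfloor x/\lambda\rfloor - \lfloor y/\lambda\rfloor$ plus a term depending only on the residues; it then evaluates $\sum_{t=1}^{q-1}\lfloor (t+\beta_j)/\lambda\rfloor$ in closed form by recording exactly on which ranges of $t$ the floor equals $0,1,\ldots,(q-1)/\lambda$ (a staircase decomposition of $[1,q-1]$, valid because $\lambda\mid q-1$), and subtracts the two closed forms. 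You instead treat the sum as a function $f(z)$ of the shift and compute its discrete derivative: $f(z+1)-f(z)$ counts the multiples of $\lambda$ among the $q-1$ consecutive integers $z+2,\ldots,z+q$, which is exactly $(q-1)/\lambda$ since $\lambda\mid q-1$, and telescoping finishes the argument uniformly in the signs of $x$, $y$, and $x-y$. Your version is shorter and isolates the single point where the standing hypothesis $\lambda\mid q-1$ enters (the uniform count of multiples in a window of length $q-1$), avoiding the paper's range bookkeeping, which requires some care at boundary configurations such as $\beta_j=0$; the paper's computation, in exchange, yields explicit closed-form values of the individual sums $\sum_{t=1}^{q-1}\lfloor(t+\beta_j)/\lambda\rfloor$ rather than only their difference, though Lemma~\ref{le0} as stated needs only the difference. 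Both proofs are complete, and the divisibility fact you identify as the crux is the same one underlying the paper's staircase.
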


\begin{lemma}\label{adl}
Suppose that $a $ is a positive  integer. Then  
 \begin{equation}\notag
  \sum\limits_{t=q, q\nmid t }^{aq}\left [\lfloor \frac{\lfloor tq^{-1}\rfloor+t}{\lambda}\rfloor- \lfloor \frac{t}{\lambda}\rfloor \right] =\frac{a(a-1)(q-1)}{2\lambda}.
\end{equation}  
\end{lemma}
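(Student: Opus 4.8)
The plan is to reparametrize the summation index $t$ by its quotient and remainder modulo $q$, which reduces the sum to a double sum whose inner part is handled directly by Lemma \ref{le0}. First I would note that $\lfloor tq^{-1}\rfloor$ is the integer quotient $\lfloor t/q\rfloor$, and write each $t$ appearing in the sum as $t = qs + r$ with $s = \lfloor t/q\rfloor$ and $r\in[0,q-1]$. The condition $q\nmid t$ is exactly $r\neq 0$, i.e.\ $r\in[1,q-1]$. As $t$ ranges over the integers in $[q,aq]$ that are not divisible by $q$, the quotient $s$ ranges over $[1,a-1]$: indeed $t=q$ and $t=aq$ are both excluded (being divisible by $q$), and for each such $s$ the remainder $r$ covers the full set $\{1,\ldots,q-1\}$. (When $a=1$ the index set is empty and both sides vanish, so the formula holds trivially.)

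Substituting $\lfloor t/q\rfloor = s$ and $t=qs+r$ turns the summand into
$$\left\lfloor \frac{s+qs+r}{\lambda}\right\rfloor-\left\lfloor \frac{qs+r}{\lambda}\right\rfloor=\left\lfloor \frac{r+(q+1)s}{\lambda}\right\rfloor-\left\lfloor \frac{r+qs}{\lambda}\right\rfloor,$$
so the entire sum becomes $\sum_{s=1}^{a-1}\sum_{r=1}^{q-1}\left[\lfloor (r+(q+1)s)/\lambda\rfloor-\lfloor (r+qs)/\lambda\rfloor\right]$, where the inner sum over $r$ runs over a complete block $\{1,\ldots,q-1\}$.

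Next I would apply Lemma \ref{le0} to the inner sum for each fixed $s$, taking $x=(q+1)s$ and $y=qs$. This yields $\frac{(q-1)\bigl((q+1)s-qs\bigr)}{\lambda}=\frac{(q-1)s}{\lambda}$. Summing over $s$ and using $\sum_{s=1}^{a-1}s=\frac{a(a-1)}{2}$ gives $\frac{q-1}{\lambda}\cdot\frac{a(a-1)}{2}=\frac{a(a-1)(q-1)}{2\lambda}$, which is precisely the claimed value.

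The only point requiring care—rather than a genuine obstacle—is pinning down the exact range of the reindexed variables: that the quotient $s$ runs precisely over $[1,a-1]$ while the remainder $r$ covers all of $[1,q-1]$, so that the inner sum is a complete range of length $q-1$ over which Lemma \ref{le0} applies without any boundary correction. Once the endpoint bookkeeping at $t=q$ and $t=aq$ is settled, the rest of the computation is immediate.
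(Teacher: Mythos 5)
Your proposal is correct and follows essentially the same route as the paper's proof: the identical decomposition $t=iq+j$ with $i\in[1,a-1]$, $j\in[1,q-1]$, followed by Lemma \ref{le0} on the inner sum and $\sum_{s=1}^{a-1}s=\frac{a(a-1)}{2}$. The only cosmetic difference is that the paper first uses $\lambda\mid q-1$ to rewrite the summand as $\lfloor \frac{j+2i}{\lambda}\rfloor-\lfloor \frac{j+i}{\lambda}\rfloor$ before invoking Lemma \ref{le0} with $(x,y)=(2i,i)$, whereas you apply the lemma directly with $(x,y)=\bigl((q+1)s,qs\bigr)$, which is equally valid since the lemma holds for arbitrary integers $x,y$.
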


\begin{lemma}\label{ll9}  Suppose that $x$ is an even integer. Then
\begin{equation}\notag
{\sum\limits_{t=1}^{q-1}\left[\lfloor \frac{2t+x}{\lambda}\rfloor- \lfloor \frac{t+x}{\lambda}\rfloor\right]} =
  \begin{cases}
  \frac{q(q-1)}{2\lambda}&\hbox{ if $\lambda$ is odd},\\
    \frac{(q-1)(q+1)}{2\lambda}&\hbox{ if $\lambda$ is even}.\\
  \end{cases} 
\end{equation}
\end{lemma}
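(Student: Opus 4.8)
The plan is to reduce both floor functions to residues via the elementary identity $\lfloor z/\lambda\rfloor = (z - (z \bmod \lambda))/\lambda$, which turns the sum into a simple linear term plus a correction built entirely from residues modulo $\lambda$. Writing $r(z) = z \bmod \lambda$, I would rewrite the summand as
\begin{equation}\notag
\lfloor \tfrac{2t+x}{\lambda}\rfloor - \lfloor \tfrac{t+x}{\lambda}\rfloor = \frac{t}{\lambda} - \frac{r(2t+x) - r(t+x)}{\lambda},
\end{equation}
so that, setting $S$ equal to the sum in question,
\begin{equation}\notag
S = \frac{1}{\lambda}\sum_{t=1}^{q-1} t \;-\; \frac{1}{\lambda}\sum_{t=1}^{q-1}\bigl(r(2t+x) - r(t+x)\bigr).
\end{equation}
The first sum contributes $\frac{1}{\lambda}\cdot\frac{q(q-1)}{2}$ at once, so everything reduces to evaluating the two residue sums.

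The key structural input is that $\lambda \mid q-1$, so as $t$ ranges over $[1,q-1]$ it runs through exactly $M := (q-1)/\lambda$ complete residue systems modulo $\lambda$. Hence $\sum_{t=1}^{q-1} r(t+x) = M\cdot\frac{\lambda(\lambda-1)}{2} = \frac{(q-1)(\lambda-1)}{2}$, independent of $x$ and of the parity of $\lambda$. The whole argument then hinges on $\sum_{t=1}^{q-1} r(2t+x)$, and this is exactly where the case split and the hypothesis that $x$ is even enter.

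I would split on the parity of $\lambda$. When $\lambda$ is odd, $\gcd(2,\lambda)=1$, so $t \mapsto 2t$ permutes the residues modulo $\lambda$; thus $2t+x$ also runs through each residue class $M$ times, giving $\sum_{t=1}^{q-1} r(2t+x) = \frac{(q-1)(\lambda-1)}{2}$. The residue correction cancels exactly and $S = \frac{q(q-1)}{2\lambda}$. When $\lambda$ is even, $2t \bmod \lambda$ hits only the even residues $\{0,2,\dots,\lambda-2\}$, each twice per residue system; since $x$ is even, $2t+x$ stays even and still ranges over precisely the even residues, each $2M$ times. Summing the even residues gives $\sum_{t=1}^{q-1} r(2t+x) = 2M\cdot\frac{\lambda(\lambda-2)}{4} = \frac{(q-1)(\lambda-2)}{2}$, so the residue difference equals $-\frac{q-1}{2}$ and the corresponding term of $S$ is $+\frac{q-1}{2\lambda}$, whence $S = \frac{q(q-1)}{2\lambda} + \frac{q-1}{2\lambda} = \frac{(q-1)(q+1)}{2\lambda}$.

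The hard part will be the even-$\lambda$ residue count: one must verify that $2t+x \bmod \lambda$ covers each even residue with the correct multiplicity, and this is the single step that genuinely uses $x$ even — for odd $x$ the values $2t+x$ would be odd, the count would land on the odd residues instead, and the final formula would change. Everything else is routine arithmetic with $M=(q-1)/\lambda$. As an alternative route leading to the same case analysis, I could partition $[1,q-1]$ into $M$ blocks of $\lambda$ consecutive integers and apply the identity $\sum_{j=0}^{\lambda-1}\lfloor (n+j)/\lambda\rfloor = n$ blockwise, but the residue reduction above seems the most direct.
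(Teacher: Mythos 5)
Your proof is correct, and it takes a recognizably different computational route from the paper's. The paper works directly with the floors: it partitions $[0,q-2]$ into blocks $t=i\lambda+j$ with $i\in[0,\tfrac{q-1}{\lambda}-1]$, $j\in[0,\lambda-1]$, peels off the linear part $i$ from each floor, and is left with the single-block sum $\sum_{j=0}^{\lambda-1}\bigl[\lfloor\tfrac{2j+x}{\lambda}\rfloor-\lfloor\tfrac{j+x}{\lambda}\rfloor\bigr]$, which it evaluates by reducing $x$ to $y=x\bmod\lambda$ and explicitly locating the jump intervals of $\lfloor\tfrac{2j+y}{\lambda}\rfloor$ — a piecewise case analysis on $j$, split by the parity of $\lambda$ (this is your blockwise alternative, carried out in detail). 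You instead strip the floors immediately via $\lfloor z/\lambda\rfloor=(z-r(z))/\lambda$, which isolates the main term $\tfrac{1}{\lambda}\sum_{t=1}^{q-1}t=\tfrac{q(q-1)}{2\lambda}$ in one line and reduces everything to counting residues: since $\lambda\mid q-1$, the range $[1,q-1]$ is $M=(q-1)/\lambda$ complete residue systems, doubling permutes residues when $\lambda$ is odd, and is $2$-to-$1$ onto the even residues when $\lambda$ is even, with the even shift $x$ permuting the even residues. All of these counting claims check out (each even residue is hit exactly $2M$ times, giving $\sum r(2t+x)=\tfrac{(q-1)(\lambda-2)}{2}$, whence the extra $\tfrac{q-1}{2\lambda}$), and you correctly identify that the hypothesis that $x$ is even is used only in the even-$\lambda$ case — exactly where the paper uses it too. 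What your route buys is a cleaner structural explanation (the correction term is literally the defect of the doubling map from being a bijection modulo $\lambda$) and no interval bookkeeping; what the paper's route buys is that the same blockwise template and the fundamental-block evaluation of $\lfloor\tfrac{2j+y}{\lambda}\rfloor$ recur almost verbatim in its neighboring lemmas (e.g.\ Lemmas 3, 4 and 6), so its calculations are reusable there, whereas your residue bookkeeping is tailored to this one identity.
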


\begin{lemma}\label{lema12}
Let $k$ and $a$   be two positive integers with $a\leq q$.   
    Then 
\begin{equation}\notag
     \sum\limits_{t=q^{k},q\nmid t}^{aq^{k}-1}\left[\lfloor \frac{2t}{\lambda}\rfloor -\lfloor \frac{\lfloor tq^{-1}\rfloor+t}{\lambda}\rfloor \right]= {\frac{1}{2\lambda}(a^2-1)(q-1)^2q^{2k-2}}+\textstyle{\frac{(3+(-1)^{\lambda})}{4\lambda}(a-1)(q-1)q^{k-1}}.
 \end{equation} 
\end{lemma}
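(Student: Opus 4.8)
The plan is to reduce the single sum over $t$ to a clean double sum and then apply the previously established Lemmas \ref{le0} and \ref{ll9} term by term. First I would reindex by writing each $t$ with $q \nmid t$ as $t = qs + r$ with $r \in [1,q-1]$ and $s = \lfloor t/q\rfloor$. A short boundary check shows that as $t$ runs over all integers in $[q^{k},aq^{k}-1]$ not divisible by $q$, the pair $(s,r)$ runs over exactly $s \in [q^{k-1},aq^{k-1}-1]$ and $r \in [1,q-1]$ with no truncation at either end (for $a=1$ both ranges are empty and the identity is trivially $0=0$). Since $\lfloor t/q\rfloor + t = (q+1)s + r$ and $2t = 2qs + 2r$, the sum becomes
\[
\sum_{s=q^{k-1}}^{aq^{k-1}-1}\sum_{r=1}^{q-1}\left[\Big\lfloor \tfrac{2r+2qs}{\lambda}\Big\rfloor - \Big\lfloor \tfrac{r+(q+1)s}{\lambda}\Big\rfloor\right].
\]

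The key idea is to insert the intermediate term $\lfloor (r+2qs)/\lambda\rfloor$ so that the inner sum splits as
\[
\sum_{r=1}^{q-1}\left[\Big\lfloor \tfrac{2r+2qs}{\lambda}\Big\rfloor - \Big\lfloor \tfrac{r+2qs}{\lambda}\Big\rfloor\right] + \sum_{r=1}^{q-1}\left[\Big\lfloor \tfrac{r+2qs}{\lambda}\Big\rfloor - \Big\lfloor \tfrac{r+(q+1)s}{\lambda}\Big\rfloor\right].
\]
Because $2qs$ is even, the first sum is handled by Lemma \ref{ll9} with $x = 2qs$ and equals an $s$-independent constant $c_\lambda$, namely $q(q-1)/(2\lambda)$ for odd $\lambda$ and $(q-1)(q+1)/(2\lambda)$ for even $\lambda$. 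The second sum is a pure difference of shifted floors, so Lemma \ref{le0} with $x = 2qs$ and $y = (q+1)s$ gives $(q-1)^2 s/\lambda$. Hence the inner sum equals $c_\lambda + (q-1)^2 s/\lambda$.

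It then remains to sum $c_\lambda + (q-1)^2 s/\lambda$ over $s \in [q^{k-1},aq^{k-1}-1]$. There are $(a-1)q^{k-1}$ terms, and the arithmetic series evaluates to $\sum s = \tfrac12\big[(a^2-1)q^{2k-2} - (a-1)q^{k-1}\big]$. Substituting, the $(q-1)^2 s/\lambda$ contribution produces exactly $\tfrac{1}{2\lambda}(a^2-1)(q-1)^2 q^{2k-2}$ together with a correction $-\tfrac{(q-1)^2}{2\lambda}(a-1)q^{k-1}$, which I would combine with the constant contribution $c_\lambda(a-1)q^{k-1}$. The last step is to verify that $c_\lambda - (q-1)^2/(2\lambda)$ equals $(q-1)/(2\lambda)$ when $\lambda$ is odd and $(q-1)/\lambda$ when $\lambda$ is even; both cases are captured uniformly by the coefficient $(3+(-1)^{\lambda})/(4\lambda)$, giving the stated second term.

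The main obstacle I anticipate is twofold. First, getting the index ranges of the reindexing exactly right so that the double sum has no boundary defects; this is precisely where an off-by-one slip would corrupt the lower-order term $\tfrac{(3+(-1)^{\lambda})}{4\lambda}(a-1)(q-1)q^{k-1}$. Second, spotting the intermediate term $\lfloor (r+2qs)/\lambda\rfloor$ that simultaneously fits the hypothesis of Lemma \ref{ll9} (an \emph{even} shift $2qs$) and that of Lemma \ref{le0} (a clean difference of shifts); once this bridge is in place, everything reduces to bookkeeping, including the final parity unification via $(3+(-1)^{\lambda})/4$.
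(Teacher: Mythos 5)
Your proposal is correct and follows essentially the same route as the paper: the identical reindexing $t = qs+r$ with $s \in [q^{k-1}, aq^{k-1}-1]$, $r \in [1,q-1]$, a split of the inner difference into a parity-dependent piece evaluated by Lemma \ref{ll9} (with an even shift) plus a linear-in-$s$ piece, and a final arithmetic-series summation over $s$. The only cosmetic difference is that you evaluate the linear piece via Lemma \ref{le0} with $(x,y) = (2qs,(q+1)s)$, whereas the paper extracts the same quantity $\frac{s(q-1)}{\lambda}$ directly from the floors using $\lambda \mid q-1$; the two are interchangeable.
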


\begin{lemma}\label{lead2}
Let $k$ be an integer, and let $a$ be  a positive   integer with $a\leq q $.  
Then 
\begin{equation}\notag
\sum\limits_{t=q^{k}}^{aq^{k}-1}N(t+1)=
\begin{cases}
\frac{1}{2}a(a-1) &\hbox{if } k=0,\\
\frac{1}{2}(a^2-1)(q-1)q^{2k-1} &\hbox{if }  k \geq 1.\\
\end{cases}
\end{equation}
\end{lemma}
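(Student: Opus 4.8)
The plan is to evaluate the sum $\sum_{t=q^k}^{aq^k-1} N(t+1)$ by unwinding the definition $N(t+1)=\lfloor t\rfloor - \lfloor t/q\rfloor = t - \lfloor t/q\rfloor$ (since $t$ is an integer here), which counts the integers in $[1,t]$ not divisible by $q$. Equivalently, $N(t+1)$ is the number of integers $s$ with $1\le s\le t$ and $q\nmid s$. So the quantity to compute is $\sum_{t=q^k}^{aq^k-1}\bigl(t-\lfloor t/q\rfloor\bigr)$. I would separate the two pieces: a clean arithmetic sum $\sum_{t=q^k}^{aq^k-1} t$, and the sum $\sum_{t=q^k}^{aq^k-1}\lfloor t/q\rfloor$, which I would handle by the standard device of writing $t=qs+r$ with $0\le r\le q-1$ and summing over complete residue blocks.

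First I would dispose of the base case $k=0$, where the range is $[1,a-1]$ and none of these integers (all less than $a\le q$) is divisible by $q$, so every term contributes $N(t+1)=t$, giving $\sum_{t=1}^{a-1}t=\tfrac12 a(a-1)$, matching the claimed value. For $k\ge 1$ the range $[q^k,\,aq^k-1]$ consists of exactly $(a-1)q^k$ consecutive integers. The first sum is $\sum_{t=q^k}^{aq^k-1} t = \tfrac12\bigl((aq^k-1)+q^k\bigr)\cdot(a-1)q^k = \tfrac12 (a-1)q^k\bigl((a+1)q^k-1\bigr)$. For the floor sum, I would use that over the full block $t\in[q^k, aq^k-1]$ the values $\lfloor t/q\rfloor$ run over each integer in $[q^{k-1}, aq^{k-1}-1]$ exactly $q$ times, so $\sum_{t=q^k}^{aq^k-1}\lfloor t/q\rfloor = q\sum_{u=q^{k-1}}^{aq^{k-1}-1} u = q\cdot\tfrac12(a-1)q^{k-1}\bigl((a+1)q^{k-1}-1\bigr)$.

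Subtracting the floor sum from the arithmetic sum, the two ``$-1$'' correction terms cancel after the factor $q$ is distributed, and what remains is $\tfrac12(a-1)(a+1)(q^{2k}-q^{2k-1}) = \tfrac12(a^2-1)(q-1)q^{2k-1}$, which is exactly the claimed formula for $k\ge 1$. I expect the only real care needed is the clean justification that $\lfloor t/q\rfloor$ takes each value in $[q^{k-1}, aq^{k-1}-1]$ precisely $q$ times over the summation range; this hinges on the endpoints $q^k$ and $aq^k$ being exact multiples of $q$, so the range decomposes into complete residue blocks with no partial block at either end. Once that block structure is established, the remainder is the routine algebraic simplification sketched above, and no subtle divisibility hypotheses (such as $\lambda$-conditions) enter, since $\lambda$ does not appear in this lemma.
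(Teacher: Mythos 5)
Your proposal is correct and takes essentially the same approach as the paper's Appendix G proof: both write $N(t+1)=t-\lfloor t/q\rfloor$, dispose of $k=0$ by noting the floor term vanishes on $[1,a-1]$, compute $\sum_{t=q^k}^{aq^k-1}t$ directly, and evaluate $\sum_{t=q^k}^{aq^k-1}\lfloor t/q\rfloor$ by partitioning the range into the $(a-1)q^{k-1}$ complete length-$q$ blocks $\left[q^k+iq,\,q^k+(i+1)q-1\right]$, on each of which the floor is constant (your observation that the endpoints $q^k$ and $aq^k$ are multiples of $q$ is exactly what makes the paper's partition exhaustive with no partial blocks). Your final algebra matches the claimed values, so there is no gap.
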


\begin{lemma}\label{le5}
Let $k$  and $a$   be two  positive integers with $a\leq q$. Then
\begin{equation}\label{1}
     \sum\limits_{t=q^{k-i},q\nmid t}^{aq^{k-i}-1}\left[\lfloor \frac{\lfloor tq^{2i-1}\rfloor+t}{\lambda}\rfloor -\lfloor \frac{\lfloor tq^{2i-2}\rfloor+t}{\lambda}\rfloor\right]= \begin{cases}
         \frac{1}{2\lambda}(a^2-1)(q-1)^2q^{2k-3} &\hbox{if } -k+2\leq i\leq  k-1,\\
          \frac{1}{2\lambda}a(a-1)(q-1)q^{2k-2} & \hbox{if } i=k \hbox{ or }-k+1,
     \end{cases}
 \end{equation} 
and 
\begin{equation}\label{1.5}
     \sum\limits_{t=q^{k-i},q\nmid t}^{aq^{k-i}-1} \left[\lfloor \frac{\lfloor tq^{2i}\rfloor+t}{\lambda}\rfloor -\lfloor \frac{\lfloor tq^{2i-1}\rfloor+t}{\lambda}\rfloor\right]= \begin{cases}
         \frac{1}{2\lambda}(a^2-1)(q-1)^2q^{2k-2}  &\hbox{if } -k+1\leq i\leq  k-1\hbox{ and }i\neq 0,\\
         \frac{1}{2\lambda}a(a-1)(q-1)q^{2k-1} & \hbox{if } i=k \hbox{ or } -k.
     \end{cases}
 \end{equation} 
\end{lemma}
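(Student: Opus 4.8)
The plan is to prove the two identities together, splitting the index $i$ into the positive range $i\geq 1$ and the non-positive range $i\leq 0$; the sign of the exponents $2i-1,2i-2$ (resp.\ $2i,2i-1$) decides whether the inner floor functions act trivially or as genuine truncations.

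For $i\geq 1$ every power occurring inside a floor is a non-negative integer power of $q$, so $\lfloor tq^{2i-1}\rfloor=tq^{2i-1}$, and the summand of (\ref{1}) becomes $\lfloor t(q^{2i-1}+1)/\lambda\rfloor-\lfloor t(q^{2i-2}+1)/\lambda\rfloor$. Since $\lambda\mid q-1$ gives $q^{s}\equiv 1\ (\mathrm{mod}\ \lambda)$ for all $s\geq 0$, both $q^{2i-1}+1$ and $q^{2i-2}+1$ are $\equiv 2\ (\mathrm{mod}\ \lambda)$, so $t(q^{2i-1}+1)$ and $t(q^{2i-2}+1)$ share the same residue mod $\lambda$ and the two floors differ by the integer $t(q^{2i-1}-q^{2i-2})/\lambda=t(q-1)q^{2i-2}/\lambda$. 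The summand therefore linearises to $t(q-1)q^{2i-2}/\lambda$ (resp.\ $t(q-1)q^{2i-1}/\lambda$ for (\ref{1.5})), and what remains is to sum $t$ over those $t\in[q^{k-i},aq^{k-i}-1]$ with $q\nmid t$. Computing the full arithmetic-progression sum and subtracting its sub-sum of multiples of $q$ gives $\tfrac{(a^{2}-1)(q-1)q^{2k-2i-1}}{2}$ for $1\leq i\leq k-1$, and $\tfrac{a(a-1)}{2}$ for $i=k$ (where $q^{k-i}=1$ and, as $a\leq q$, no multiple of $q$ lies in the range); multiplying by the linear coefficient reproduces the interior and boundary values in the positive range.

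For $i\leq 0$ I would use a block decomposition combined with Lemma \ref{le0}. Writing $j=-i\geq 0$ and partitioning $[q^{k+j},aq^{k+j}-1]$ into the consecutive blocks $[qB,qB+q-1]$ indexed by $B\in[q^{k+j-1},aq^{k+j-1}-1]$, the elementary identity $\lfloor(qB+\rho)/q^{s}\rfloor=\lfloor B/q^{s-1}\rfloor$ for $0\leq\rho<q$ and $s\geq 1$ shows that each inner floor is constant on a block. Summing over the $q-1$ admissible residues $\rho\in[1,q-1]$ of a block and invoking Lemma \ref{le0} collapses the block contribution to $\tfrac{q-1}{\lambda}$ times the difference of the two constant floors, thereby reducing (\ref{1}) (for every $i\leq 0$) to $\tfrac{q-1}{\lambda}\sum_{B}\big[\lfloor B/q^{2j}\rfloor-\lfloor B/q^{2j+1}\rfloor\big]$ and (\ref{1.5}) (for $i\leq -1$) to $\tfrac{q-1}{\lambda}\sum_{B}\big[\lfloor B/q^{2j-1}\rfloor-\lfloor B/q^{2j}\rfloor\big]$. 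In the interior ranges the relevant divisors tile $[q^{k+j-1},aq^{k+j-1}-1]$ exactly, so each floor-sum is again evaluated by a constant-on-a-block count; cancelling the common factor $q^{k+j-1}$ then yields exactly $\tfrac{(a^{2}-1)(q-1)q^{2k-3}}{2}$ and $\tfrac{(a^{2}-1)(q-1)q^{2k-2}}{2}$, matching the interior formulas.

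The two non-positive boundaries, $i=-k+1$ for (\ref{1}) and $i=-k$ for (\ref{1.5}), are handled separately: there the larger divisor equals $q^{k+j}$ (namely $q^{2j+1}=q^{2k-1}$ for (\ref{1}) and $q^{2j}=q^{2k}$ for (\ref{1.5})), and since $a\leq q$ one has $B<aq^{k+j-1}\leq q^{k+j}$, so the corresponding floor vanishes identically; the surviving floor then contributes $q^{2k-2}\tfrac{a(a-1)}{2}$ for (\ref{1}) and $q^{2k-1}\tfrac{a(a-1)}{2}$ for (\ref{1.5}), and after the factor $\tfrac{q-1}{\lambda}$ these give the boundary values $\tfrac{a(a-1)(q-1)q^{2k-2}}{2\lambda}$ and $\tfrac{a(a-1)(q-1)q^{2k-1}}{2\lambda}$. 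The case $j=0$ (that is, $i=0$) is the genuinely delicate point: for (\ref{1}) the larger floor is still $\lfloor t/q\rfloor$, constant on each block, so $i=0$ falls under the interior analysis above, whereas for (\ref{1.5}) the larger floor is $\lfloor t\rfloor=t$, which varies within a block and defeats the collapse---precisely why $i=0$ is excluded from (\ref{1.5}) and treated apart in Lemma \ref{lema12}. I expect the main obstacle to lie in this non-positive range: certifying the constancy of each inner floor on a residue block, applying Lemma \ref{le0} with the correctly shifted arguments, and isolating the boundary sub-cases in which a divisor overshoots the summation interval.
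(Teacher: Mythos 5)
Your proposal is correct. For $i\geq 1$ it coincides with the paper's Case 1: both arguments trivialize the inner floors, use $q\equiv 1\pmod{\lambda}$ to telescope the summand to $t(q-1)q^{2i-2}/\lambda$ (resp.\ $t(q-1)q^{2i-1}/\lambda$ for (\ref{1.5})), and then sum over $t\in[q^{k-i},aq^{k-i}-1]$ by subtracting the multiples of $q$, with the boundary $i=k$ degenerating because $a\leq q$ leaves no multiple of $q$ in $[1,a-1]$. For $i\leq 0$ you share the paper's pivotal step --- collapsing the unit digit $\rho=t_0\in[1,q-1]$ via Lemma \ref{le0}, which is exactly how the paper derives $\sum_{t_0=1}^{q-1}[\cdots]=\frac{q-1}{\lambda}N(\lfloor tq^{2i-1}\rfloor+1)$ --- but you finish differently. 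The paper decomposes $t$ into leading part, middle digits and $t_0$, reduces to $\sum_u N(u+1)$ over $u\in[q^{k+i-1},aq^{k+i-1}-1]$, and invokes Lemma \ref{lead2}; you instead stay at the level of the quotient $B=\lfloor t/q\rfloor$ and evaluate $\sum_B\left[\lfloor B/q^{s}\rfloor-\lfloor B/q^{s+1}\rfloor\right]$ directly by the exact-tiling count (each value of $\lfloor B/q^{s}\rfloor$ is attained exactly $q^{s}$ times on $[q^{M},aq^{M}-1]$ when $s\leq M$, where $M=k+j-1$), with the boundary cases $i=-k+1$ for (\ref{1}) and $i=-k$ for (\ref{1.5}) handled by the vanishing of the larger floor since $B<aq^{M}\leq q^{M+1}$. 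I checked the resulting values: before the prefactor $\frac{q-1}{\lambda}$ they are $\frac{1}{2}(a^{2}-1)(q-1)q^{2k-3}$ resp.\ $\frac{1}{2}(a^{2}-1)(q-1)q^{2k-2}$ in the interior and $\frac{1}{2}a(a-1)q^{2k-2}$ resp.\ $\frac{1}{2}a(a-1)q^{2k-1}$ at the boundaries, all matching (\ref{1}) and (\ref{1.5}). What your route buys: it bypasses Lemma \ref{lead2} and the digit bookkeeping, and it treats (\ref{1}) and (\ref{1.5}) uniformly, whereas the paper proves only (\ref{1}) and declares (\ref{1.5}) analogous; what the paper's route buys is the intermediate $N$-formulation, which is reused elsewhere (e.g.\ the sums $\sum N(tq^{2i-1}+1)$ in the proof of Corollary \ref{example1}). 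Your diagnosis of the excluded case $i=0$ in (\ref{1.5}) is also exactly right: there the summand is $\lfloor \frac{2t}{\lambda}\rfloor-\lfloor \frac{\lfloor t/q\rfloor+t}{\lambda}\rfloor$, the residue inside the first floor steps by $2$ rather than $1$ so Lemma \ref{le0} no longer collapses the block, and this is precisely the sum of Lemma \ref{lema12}, which the paper handles separately via Lemma \ref{ll9}.
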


\section{The dimension of narrow-sense BCH codes of length \texorpdfstring{$(q^m-1)/\lambda$} \ }\label{dim}

Suppose that  $m\geq 4$.   Let $\delta$ be an integer satisfying $2\leq \delta\leq  \frac{q^{\lfloor (2m-1)/3\rfloor+1}-1}{\lambda}+1$.   Define $k_{\delta} = \lfloor \log_q \lambda (\delta-1) \rfloor - h$. This implies  that   $q^{h+k_{\delta}}\leq \lambda(\delta-1)<q^{h+k_{\delta}+1}$.  Write the $q$-adic expansion of $\lambda(\delta-1)$ as  $ \lambda(\delta-1)= \sum\limits_{\ell=0}^{h+k_{\delta}}\delta_{\ell}q^{\ell}$.   If $k_{\delta}\geq m-2h$, let  $s_{\delta}$ denote  the smallest integer in $[m-2h-k_{\delta}, k_{\delta}]$ such that $\delta_{h+s_{\delta}}>0$ and let   $w_{\delta}=\sum\limits_{\ell=h+s_{\delta}}^{h+k_{\delta}}\delta_{\ell}q^{\ell}$.

If $m$ is odd,  for each integer $\delta\in \left[2, \frac{q^{\lfloor (2m-1)/3\rfloor+1}-1}{\lambda}+1\right]$, we define the function $f(\delta)$ by 
\begin{equation} \label{ff}
f(\delta)=
\begin{cases}                  
0  &    \hbox{if }
\delta \leq \frac{q^{h+1}-1}{\lambda}+1, \\
 { \left(\begin{aligned}&\frac{ (q-1)^2}{\lambda}(k_{\delta}-1)q^{2k_{\delta}-3}+\lfloor \frac{\mu(\delta)+w_{\delta}}{\lambda}\rfloor -\lfloor \frac{\lfloor \mu(\delta)/q\rfloor+w_{\delta}}{\lambda}\rfloor \\
 &+ \sum\limits_{i=-k_{\delta}+1}^{k_{\delta}}\sum\limits_{t\in \mathcal{T}_i(\delta)} \left [\lfloor \frac{\lfloor tq^{2i-1}\rfloor+t}{\lambda}\rfloor -\lfloor \frac{\lfloor tq^{2i-2}\rfloor+t}{\lambda}\rfloor \right]
  \end{aligned}\right)} &\hbox{if }
\delta >\frac{q^{h+1}-1}{\lambda}+1,  
\end{cases}
\end{equation}
where $\mu(\delta)=\min\left \{\sum\limits_{\ell=0}^{h-k_{\delta}}\delta_{\ell}q^{\ell}, \sum\limits_{\ell=h-s_{\delta}+1}^{h+k_{\delta}}\delta_{\ell}q^{\ell-(h-s_{\delta}+1)}  \right\}$ and $\mathcal{T}_i(\delta)=\left\{t\in \mathbb{Z}: q^{k_{\delta}-i} \leq t  < \sum\limits_{\ell=h+s_{\delta}}^{h+k_{\delta}}\delta_{\ell}q^{\ell-h-i}\hbox{ and }q\nmid t\right\}. 
$

If  $m$ is even, for each integer $\delta\in \left[2, \frac{q^{\lfloor (2m-1)/3\rfloor+1}-1}{\lambda}+1\right]$, we define  the  function $\widetilde{f}(\delta)$ by 
 \begin{equation}\label{wf}
    \widetilde{f}(\delta)=\begin{cases}
        0 &\hbox{if }\delta\leq \frac{q^h-1}{\lambda}+1,\\     
         \lfloor \frac{\widetilde{\mu}(\delta)+w_{\delta}}{\lambda}\rfloor-\lfloor \frac{\lfloor\widetilde{\mu}(\delta)/q\rfloor+w_{\delta}}{\lambda}\rfloor
     +\sum\limits_{t=1}^{\delta_h-1}\left[ \lfloor \frac{2t}{\lambda}\rfloor - \lfloor \frac{t}{\lambda}\rfloor \right] &\hbox{if }\frac{q^h-1}{\lambda}+1<\delta\leq \frac{q^{h+1}-1}{\lambda}+1,\\
 \left({\begin{aligned}& q^{2k_{\delta}-2}(k_{\delta}-\textstyle\frac{1}{2})\textstyle\frac{(q-1)^2}{\lambda}  + \frac{q-1}{2\lambda}\left(q^{k_{\delta}-1}+ \textstyle\frac{1+(-1)^{\lambda}}{2}\right)\\& +  \lfloor \frac{\widetilde{\mu}(\delta)+w_{\delta}}{\lambda}\rfloor-\lfloor\textstyle \frac{\lfloor\widetilde{\mu}(\delta)/q\rfloor+w_{\delta}}{\lambda}\rfloor  \\
        &  +\sum\limits_{i=-k_{\delta}}^{k_{\delta}}\sum\limits_{t\in {\mathcal{T}}_i(\delta)}\left[\lfloor\textstyle \frac{\lfloor tq^{2i}\rfloor+t}{\lambda}\rfloor -\lfloor \textstyle\frac{\lfloor tq^{2i-1}\rfloor+t}{\lambda}\rfloor 
        \right] 
   \end{aligned}}\right)  
     &\hbox{if }\delta >\frac{q^{h+1}-1}{\lambda}+1, 
    \end{cases}
\end{equation}
where
$\widetilde{\mu}(\delta)=\min\left \{\sum\limits_{\ell=0}^{h-k_{\delta}-1}\delta_{\ell}q^{\ell}, \sum\limits_{\ell=h-s_{\delta}}^{h+k_{\delta}}\delta_{\ell}q^{\ell-(h-s_{\delta})}  \right\}.$
Define the function  $\tau(\delta)$ for each integer $\delta\in \left[2, \frac{q^{\lfloor (2m-1)/3\rfloor+1}-1}{\lambda}+1\right]$ by 
\begin{equation}\label{tau}
    \tau(\delta)=\begin{cases}
        1 &\hbox{if }\sum\limits_{\ell=h}^{h+k_{\delta}}\delta_{\ell}q^{\ell-h}\leq \sum\limits_{\ell=0}^{h-1}\delta_{\ell}q^{\ell},\\
        & \delta_h>0 \hbox{ and } \lambda \mid 2\sum\limits_{\ell=h}^{h+k_{\delta}}\delta_{\ell},\\
        0 &\hbox{otherwise.}   
    \end{cases}
\end{equation}
Additionally, we define the function $g(\delta)$ for each integer $\delta\in \left[2, \frac{q^{\lfloor (2m-1)/3\rfloor+1}-1}{\lambda}+1\right]$ by 
\begin{equation}\label{g}
    g(\delta)= \begin{cases}
    0&\hbox{if }\delta\leq \frac{q^h-1}{\lambda}+1,\\
\lfloor \frac{(\delta_h-1)\left(3+(-1)^{\lambda}
            \right)}{2\lambda}\rfloor+\tau(\delta)
    & \hbox{if }\frac{q^h-1}{\lambda}+1<\delta\leq \frac{q^{h+1}-1}{\lambda}+1,\\
        \lfloor  \frac{\phi(\delta)\left(3+(-1)^{\lambda}
            \right)}{2\lambda} \rfloor -\lfloor  \frac{\lfloor \phi(\delta)/q \rfloor\left(3+(-1)^{\lambda}
            \right) }{2\lambda}   \rfloor +\tau(\delta) &\hbox{if } \frac{q^{h+1}-1}{\lambda}+1<\delta, 
    \end{cases}
\end{equation}
where $\phi(\delta)=\sum\limits_{\ell=h}^{h+k_{\delta}}\delta_{\ell}q^{\ell-h}-1$.
\begin{theorem}\label{odda}
Let $m\ge 4$ be an integer,  let $\lambda$ be a positive divisor of $q-1$, and  let $\delta$ be an integer such that $
2 \le \delta \le \frac{q^{\lfloor (2m-1)/3\rfloor+1}-1}{\lambda}+1.
$
Set $n=(q^{m}-1)/\lambda$.  
\begin{itemize}
    \item 
If $m$ is odd, then
\begin{equation}\label{th31}
\mathrm{dim}\bigl(\mathcal{C}_{(q,n,\delta)}\bigr)
= n - m\bigl[N(\delta)-f(\delta)\bigr].
\end{equation}
\item 
If $m$ is even, then
\begin{equation}\label{th311}
\mathrm{dim}\bigl(\mathcal{C}_{(q,n,\delta)}\bigr)
= n - m\bigl[N(\delta)-\widetilde{f}(\delta)\bigr]
- \frac{m}{2}g(\delta). 
\end{equation}
Here  $N(\delta) = \delta -1 - \lfloor (\delta-1)/q\rfloor$ denotes the number of integers in $[1,\delta-1] $ not divisible by $q$, and   $f(\delta)$,   $\widetilde{f}(\delta)$ and $g(\delta)$ are defined in (\ref{ff}), (\ref{wf}) and (\ref{g}) respectively. 
\end{itemize}
\end{theorem}

The proof of Theorem~\ref{odda} proceeds via Assertions~\ref{as1}--\ref{as3.5}. Readers may find the proofs for these assertions  in Appendices \ref{API}--\ref{APM}.  

\begin{assertion}\label{as1}
Suppose that    $ \frac{q^{m-h}-1}{\lambda}+1< \delta\leq  \frac{q^{\lfloor (2m-1)/3\rfloor+1}-1}{\lambda}+1.$ 
\begin{itemize}
\item 
If $m$ is odd, then 
\begin{equation}\label{as1e0} 
\left|\left[\sum\limits_{\ell=h+s_{\delta}}^{h+k_{\delta}}\delta_{\ell}q^{\ell}, \lambda(\delta-1)\right]\cap \mathcal{S}\cap \mathcal{D}_{\lambda}\right|=\lfloor \frac{\mu(\delta)+w_{\delta}}{\lambda}\rfloor -\lfloor \frac{\lfloor \mu(\delta)/q\rfloor+w_{\delta}}{\lambda}\rfloor. 
\end{equation}
\item If $m$ is even, then \begin{equation}\label{as1e0-1} 
\left|\left[\sum\limits_{\ell=h+s_{\delta}}^{h+k_{\delta}}\delta_{\ell}q^{\ell}, \lambda(\delta-1)\right]\cap (\mathcal{S}\cup \mathcal{H})\cap \mathcal{D}_{\lambda}\right|=  \lfloor \frac{\widetilde{\mu}(\delta)+w_{\delta}}{\lambda}\rfloor-\lfloor \frac{\lfloor\widetilde{\mu}(\delta)/q\rfloor+w_{\delta}}{\lambda}\rfloor
\end{equation}
and 
\begin{equation}\label{as1e0-2} 
\left|\left[\sum\limits_{\ell=h}^{h+k_{\delta}}\delta_{\ell}q^{\ell}, \lambda(\delta-1)\right]\cap  \mathcal{H}\cap \mathcal{D}_{\lambda}\right|=\tau(\delta).
\end{equation}
\end{itemize}
\end{assertion}

\begin{assertion}\label{as4}
 Suppose that  $ \frac{q^{m-h}-1}{\lambda}+1<\delta\leq  \frac{q^{\lfloor (2m-1)/3\rfloor+1}-1}{\lambda}+1.$    
    \begin{itemize}
 \item   If $m$ is odd, then 
\begin{equation}\label{as2e}
        \left|\left[q^{h+k_{\delta}}, \sum\limits_{\ell=h+s_{\delta}}^{h+k_{\delta}}
        \delta_{\ell}q^{\ell}\right)\cap \mathcal{S}\cap \mathcal{D}_{\lambda}\right|=\sum\limits_{i=-k_{\delta}+1}^{k_{\delta}}\sum\limits_{t\in \mathcal{T}_i(\delta)} \left [\lfloor \frac{\lfloor tq^{2i-1}\rfloor+t}{\lambda}\rfloor -\lfloor \frac{\lfloor tq^{2i-2}\rfloor+t}{\lambda}\rfloor \right]. 
    \end{equation} 
\item If $m$ is even, then 
\begin{equation}\label{as2e2}
        \left|\left[q^{h+k_{\delta}}, \sum\limits_{\ell=h+s_{\delta}}^{h+k_{\delta}}
        \delta_{\ell}q^{\ell}\right)\cap (\mathcal{S}\cup \mathcal{H})\cap \mathcal{D}_{\lambda}\right|=\sum\limits_{i=-k_{\delta}}^{k_{\delta}}\sum\limits_{t\in {\mathcal{T}}_i(\delta)}\left[\lfloor \frac{\lfloor tq^{2i}\rfloor+t}{\lambda}\rfloor -\lfloor \frac{\lfloor tq^{2i-1}\rfloor+t}{\lambda}\rfloor  \right]
    \end{equation} 
    and 
    \begin{equation}\label{as2e3} 
        \left|\left[q^{h+k_{\delta}}, \sum\limits_{\ell=h}^{h+k_{\delta}}
        \delta_{\ell}q^{\ell}\right)\cap  \mathcal{H} \cap \mathcal{D}_{\lambda}\right|= \begin{cases}
            \lfloor \frac{(\delta_h-1)\left(3+(-1)^{\lambda}
            \right)}{2\lambda} \rfloor &\hbox{if }k_{\delta}=0, \\
        \lfloor  \frac{\phi(\delta)\left(3+(-1)^{\lambda}
            \right)}{2\lambda} \rfloor -\lfloor  \frac{\lfloor \phi(\delta)/q \rfloor\left(3+(-1)^{\lambda}
            \right) }{2\lambda}   \rfloor  -\frac{q^{k_{\delta}-1}(q-1)\left(3+(-1)^{\lambda}
            \right)}{2\lambda} &\hbox{if }k_{\delta}\geq 1,
        \end{cases}
    \end{equation} 
 \end{itemize}
where $\phi(\delta)=\sum\limits_{\ell=h}^{h+k_{\delta}}\delta_{\ell}q^{\ell-h}-1$.
 \end{assertion}

\begin{assertion}\label{as2}
  Let $k$ be an integer such that  $1\leq  k\leq \lfloor (2m-1)/3\rfloor-h$. 
    \begin{itemize}
        \item 
If $m$ is odd, then 
    \begin{equation}\label{as3ee0}
        |\mathcal{A}_{k}(i)\cap \mathcal{D}_{\lambda}|=\begin{cases}
         \frac{1}{2\lambda}(q-1)^3(q+1)q^{2k-3}  &\hbox{if } -k+2\leq i \leq  k-1,\\
         \frac{1}{2\lambda}(q-1)^2q^{2k-1} &\hbox{if }i=-k+1 \hbox{ or }k.
     \end{cases}
    \end{equation}     
\item If $m$ is even, then 
    \begin{equation}\label{as3ee05}
        |\mathcal{B}_{k}(i)\cap \mathcal{D}_{\lambda}|=\begin{cases}
         \frac{1}{2\lambda}(q-1)^3(q+1) q^{2k-2}  &\hbox{if } -k+1 \leq i \leq k-1\hbox{ and }i\neq 0,\\
         \frac{1}{2\lambda}(q-1)^2q^{2k} &\hbox{if }i=-k \hbox{ or }k.
     \end{cases}
    \end{equation}  
     \end{itemize}
 \end{assertion}

\begin{assertion}\label{as3}
Suppose that $m$ is even.    Let  $k$ be  an integer such that   $0\leq k\leq \lfloor (2m-1)/3\rfloor-h]$. 
 \begin{itemize}
     \item 
 If 
 $\lambda$ is odd, then 
\begin{equation}\notag
    |B_k(0)\cap \mathcal{D}_{\lambda}| =  \begin{cases}
   \frac{q(q-1)}{2\lambda} &\hbox{if $k=0$,}\\
\frac{(q-1)^2}{2\lambda}(q^{2k}-q^{2k-2}+q^{k-1})
&\hbox{if $k\geq 1$.}
\end{cases}
 \end{equation} 
\item If $\lambda$ is even, then 
 \begin{equation}\notag
    |B_k(0)\cap \mathcal{D}_{\lambda}| =  \begin{cases}
   \frac{(q+1)(q-1)}{2\lambda} &\hbox{if $k=0$,}\\
\frac{(q-1)^2}{2\lambda}(q^{2k}-q^{2k-2}+2q^{k-1})
&\hbox{if $k\geq 1$.}
\end{cases}
 \end{equation} 
 \end{itemize}
\end{assertion}

\begin{assertion}\label{as3.5}
Suppose that $m$ is even. Let $k$ be an integer such that   $0\leq k\leq \lfloor (2m-1)/3\rfloor-h]$. 
 \begin{itemize}
 \item    If $\lambda$ is odd, then 
 \begin{equation}
   \left| \left[q^{h+k}, q^{h+k+1}\right)\cap \mathcal{H}\cap \mathcal{D}_{\lambda}\right| =\begin{cases}
    \frac{q-1}{\lambda}  & \hbox{if }k=0, \\
       \frac{(q-1)^2q^{k-1}}{\lambda} & \hbox{if }k\geq 1.      
   \end{cases} 
\end{equation}
\item If $\lambda$ is even, then 
\begin{equation}
   \left| \left[q^{h+k}, q^{h+k+1}\right)\cap \mathcal{H}\cap \mathcal{D}_{\lambda}\right| =\begin{cases}
\frac{2(q-1)}{\lambda}     & \hbox{if }k=0,\\
         \frac{2(q-1)^2q^{k-1}}{\lambda}& \hbox{if }k\geq 1.    
   \end{cases} 
\end{equation}

 \end{itemize}
\end{assertion}

\noindent\textbf{Proof of Theorem \ref{odda}.} Suppose that $m$ is odd. We first aim to show that 
\begin{equation}\label{dz}
   \left| \left[1,\lambda(\delta-1)\right]\cap \mathcal{S}\cap \mathcal{D}_{\lambda} \right|=f(\delta) \quad \hbox{for } 2\leq \delta\leq \frac{q^{\lfloor (2m-1)/3\rfloor+1}-1}{\lambda}+1.
\end{equation}

If  $ 2\leq \delta \leq  \frac{q^{h+1}-1}{\lambda}+1$, then  $\lambda\leq \lambda(\delta-1)\leq  q^{h+1}-1$. By applying Lemma \ref{lll}, we obtain 
\begin{equation}\notag
    [1,\lambda(\delta-1)]\cap \mathcal{S}\cap \mathcal{D}_{\lambda}=0.
\end{equation}
In particular, 
    \begin{equation}\label{zjd}
  \left|\left[1,q^{h+1}-1\right]\cap \mathcal{S}\cap \mathcal{D}_{\lambda}\right| =0.
\end{equation}

If   $\frac{q^{h+1}-1}{\lambda}+1<\delta\leq \frac{q^{\lfloor (2m-1)/3\rfloor+1}-1}{\lambda}+1$,  then 
$q^{h+1}\leq \lambda(\delta-1)<  q^{\lfloor (2m-1)/3\rfloor+1}$. This  implies  $1\leq k_{\delta}\leq \lfloor(2m-1)/{3} \rfloor-h$. 
By applying Lemma  \ref{th2}, we can conclude from  Assertion \ref{as2} that 
 for any integer $ k\in \left[1,  \lfloor ({2m-1})/{3}\rfloor-h\right],$ 
\begin{equation}
  \begin{split}
 \left|\left[q^{h+k}, q^{h+k+1} \right)\cap \mathcal{S}\cap \mathcal{D}_{\lambda}\right| 
&=\sum\limits_{i=-k+1}^k\left |\mathcal{A}_k(i)\cap \mathcal{D}_{\lambda}\right|  \\
&=  \frac{1}{\lambda}\left[(k-1)q^{2k-3}(q-1)^3(q+1) \nonumber + q^{2k-1}(q-1)^2\right]. 
 \end{split}  
\end{equation}
Therefore, 
\begin{equation}\notag
\begin{split}
\left|\left[q^{h+1},q^{h+k_{\delta}}\right)\cap \mathcal{S}\cap \mathcal{D}_{\lambda}\right|  & =  \sum\limits_{{k}=1}^{k_{\delta}-1}
\left|\left[q^{h+{k}},q^{h+{k}+1}\right)\cap \mathcal{S}\cap \mathcal{D}_{\lambda}\right|  \\ 
&=\sum\limits_{{k}=1}^{k_{\delta}-1}\frac{(q-1)^3}{\lambda}(q+1)({k}-1) q^{2{k}-3}+ \sum\limits_{{k}=1}^{k_{\delta}-1}\frac{q^{2{k}-1}(q-1)^2}{\lambda}.
\end{split}
\end{equation}
It is straightforward  to verify  that 
\begin{equation}\notag
\sum\limits_{{k}=1}^{k_{\delta}-1}\frac{q^{2{k}-1}(q-1)^2}{\lambda}=\frac{(q^{2{k_{\delta}}-1}-q)(q-1)}{\lambda(q+1)}
\end{equation}
and 
\begin{equation}\notag
   \sum\limits_{k=1}^{k_{\delta}-1}\frac{(q-1)^3}{\lambda}(q+1)({k}-1)q^{2{k}-3}=  \frac{\left[q-(k_{\delta}-1)q^{2k_{\delta}-3}+(k_{\delta}-2)q^{2k_{\delta}-1}\right](q-1)}{\lambda(q+1)}.
\end{equation}
By adding these two sums,  we can obtain 
\begin{equation}\label{ssxx}
\left|\left[q^{h+1},q^{h+k_{\delta}}\right)\cap \mathcal{S}\cap \mathcal{D}_{\lambda}\right| =\frac{ (q-1)^2}{\lambda}(k_{\delta}-1)q^{2k_{\delta}-3}.
\end{equation}
Combining equations (\ref{as1e0}), 
(\ref{as2e}), 
(\ref{zjd}) and 
(\ref{ssxx})   we obtain 
\begin{equation}
\begin{split}
\left|\left[1,\lambda(\delta-1)\right]\cap \mathcal{S}\cap \mathcal{D}_{\lambda}\right| 
 = &\frac{ (q-1)^2}{\lambda}(k_{\delta}-1)q^{2k_{\delta}-3}+\lfloor \frac{\mu(\delta)+w_{\delta}}{\lambda}\rfloor -\lfloor \frac{\lfloor \mu(\delta)/q\rfloor+w_{\delta}}{\lambda}\rfloor   \nonumber \\& +  \sum\limits_{i=-k_{\delta}+1}^{k_{\delta}}\sum\limits_{t\in \mathcal{T}_i(\delta)} \left [\lfloor \frac{\lfloor tq^{2i-1}\rfloor+t}{\lambda}\rfloor -\lfloor \frac{\lfloor tq^{2i-2}\rfloor+t}{\lambda}\rfloor \right]. 
  \end{split}
\end{equation}
Recalling the definition of $f(\delta)$, we can now claim that equation (\ref{dz}) holds.

Next, we establish equation (\ref{th31}). 
Noticing that
$\lfloor ({2m-1})/{3}\rfloor+1\leq m-\lfloor {m}/{3}\rfloor$, we have $\lambda(\delta-1)< q^{m-\lfloor {m}/{3}\rfloor}$. Therefore, we can apply     Lemmas \ref{cos2} and  \ref{th1}  to obtain 
\begin{equation}\label{j1}
    |C_{n}(a)|= |C_{\lambda n}(\lambda a)|=m  \quad \hbox{for all integers }a\in [1,\delta-1]
\end{equation}
and 
\begin{equation}\notag
     \mathcal{L}_m^{\lambda n}(1,\lambda(\delta-1))= \mathcal{L}^{\lambda n}(1,\lambda(\delta-1)).
\end{equation}
Recalling the definition of $\mathcal{S}$ and noting  that an integer $a$ cannot be the  coset leader of $C_n(a)$ if  $q\mid a$, it follows that 
\begin{align*}
 \left| \mathcal{L}_m^{\lambda n}(1,\lambda(\delta-1))\cap \mathcal{D}_{\lambda}\right| &= \left|\mathcal{L}^{\lambda n}(1,\lambda(\delta-1))\cap \mathcal{D}_{\lambda}\right|\\
 & =
  \left| \left\{a\in [1,\lambda(\delta-1)]: q\nmid a \right \} \cap \mathcal{D}_{\lambda}\right| - \left|\left [1,\lambda(\delta-1)\right]\cap\mathcal{S}\cap \mathcal{D}_{\lambda}\right|.
\end{align*}
It  can be easily verified that 
\begin{equation}\notag
  \left|   \left\{a\in [1,\lambda(\delta-1)]: q\nmid a \right \} \cap \mathcal{D}_{\lambda} \right| =N(\delta).
\end{equation}
Utilizing Corollary \ref{cor1}
and equation  (\ref{dz}), we have 
 \begin{equation}\label{j2}
   \left|\mathcal{L}_m^{n}(1,\delta-1)\right| =  \left |   \mathcal{L}_m^{\lambda n}(1,\lambda(\delta-1))\cap \mathcal{D}_{\lambda} \right| = N(\delta)-f(\delta).
\end{equation}
Recalling the equality in (\ref{dimeq}), we can now conclude from (\ref{j1}) and  (\ref{j2}) that (\ref{th31})  holds. 

Suppose that  $m$ is even. Our first goal is to show that \begin{equation}\label{even1}
    \left|\left[1,\lambda(\delta-1)\right] \cap(\mathcal{S}\cup \mathcal{H})\cap \mathcal{D}_{\lambda}  \right|=\widetilde{f}(\delta)
\end{equation}
and 
\begin{equation}\label{even2}
       \left|\left[1,\lambda(\delta-1)\right] \cap \mathcal{H} \cap \mathcal{D}_{\lambda}  \right|=g(\delta)
\end{equation} for $2\leq \delta\leq  \frac{q^{\lfloor (2m-1)/3\rfloor+1}-1}{\lambda}+1.$

If $2\leq \delta\leq \frac{q^h-1}{\lambda}+1$, then $\lambda\leq \lambda(\delta-1)\leq q^h-1$. By applying Lemma \ref{lll}, we  conclude that 
\begin{equation}\notag
    \left|\left[1,\lambda(\delta-1)\right] \cap(\mathcal{S}\cup \mathcal{H})\cap \mathcal{D}_{\lambda}  \right|=0
\end{equation}
and 
\begin{equation}\notag
       \left|\left[1,\lambda(\delta-1)\right] \cap \mathcal{H} \cap \mathcal{D}_{\lambda}  \right|=0.
\end{equation}
In particular, we have  
\begin{equation}\label{p1}
    \left|\left[1,q^h-1\right] \cap(\mathcal{S}\cup \mathcal{H})\cap \mathcal{D}_{\lambda}  \right|=0 
\end{equation}
and
\begin{equation}\label{p2}
       \left|\left[1, q^h-1\right] \cap \mathcal{H} \cap \mathcal{D}_{\lambda}  \right|=0.
\end{equation}

If $ \frac{q^h-1}{\lambda}+1<\delta\leq \frac{q^{h+1}-1}{\lambda}+1,$ then we have $q^h\leq \lambda(\delta-1)\leq q^{h+1}-1$. In this case, we have 
\begin{itemize}
    \item $k_{\delta}=s_{\delta}=0$;
    \item $\mathcal{T}_0(\delta)=\{t\in \mathbb{Z}: 1\leq t\leq \delta_h-1\}.$
\end{itemize}
By substituting $k_{\delta}$, $s_{\delta}$ and $\mathcal{T}_0(\delta)$ as above  into equations (\ref{as1e0-1}) and (\ref{as2e2}), we obtain 
\begin{equation}\notag
\left|\left[\delta_h q^h,  \lambda(\delta-1)\right]\cap (\mathcal{S}\cup \mathcal{H})\cap \mathcal{D}_{\lambda}\right|=  \lfloor \frac{\widetilde{\mu}(\delta)+w_{\delta}}{\lambda}\rfloor-\lfloor \frac{\lfloor\widetilde{\mu}(\delta)/q\rfloor+w_{\delta}}{\lambda}\rfloor
\end{equation}
and \begin{equation}\notag
\left|\left[q^h,  \delta_hq^h\right)\cap (\mathcal{S}\cup \mathcal{H})\cap \mathcal{D}_{\lambda}\right|= \sum\limits_{t=1}^{\delta_h-1}\left[ \lfloor \frac{2t}{\lambda}\rfloor -\lfloor \frac{t}{\lambda}\rfloor \right]. 
\end{equation}
Combining the above two equalities with (\ref{p1}),  we obtain 
\begin{equation}\label{485}
\left|\left[1, \lambda(\delta-1)\right]\cap (\mathcal{S}\cup \mathcal{H})\cap \mathcal{D}_{\lambda}\right|= \lfloor \frac{\widetilde{\mu}(\delta)+w_{\delta}}{\lambda}\rfloor-\lfloor \frac{\lfloor\widetilde{\mu}(\delta)/q\rfloor+w_{\delta}}{\lambda}\rfloor + \sum\limits_{t=1}^{\delta_h-1}\left[ \lfloor \frac{2t}{\lambda}\rfloor -\lfloor \frac{t}{\lambda}\rfloor \right]. 
\end{equation}
Additionally, by substituting $k_{\delta}=0$ into (\ref{as1e0-2}) and  (\ref{as2e3}), we have 
\begin{equation}\notag
   \left| \left[ \delta_h q^h,\lambda(\delta-1) \right]\cap \mathcal{H} \cap \mathcal{D}_{\lambda} \right| =\tau(\delta)
\end{equation}
and 
\begin{equation}\notag
    \left|\left[q^h, \delta_hq^h\right)\cap \mathcal{H}\cap \mathcal{D}_{\lambda}\right| =
       \lfloor \frac{\left(3+(-1)^{\lambda}
            \right)(\delta_h-1)}{2\lambda}\rfloor.
\end{equation}
Combining these two equalities with (\ref{p2}), we obtain 
\begin{equation}\label{50}
    \left| \left[1, \lambda(\delta-1)\right]\cap \mathcal{H}\cap \mathcal{D}_{\lambda} \right| = \lfloor \frac{\left(3+(-1)^{\lambda}
            \right)(\delta_h-1)}{2\lambda}\rfloor+\tau(\delta).
\end{equation}

Notice that $\lambda( \delta-1)= q^{h+1}-1=\sum\limits_{\ell=0}^h(q-1)q^{\ell}$ when  $\delta=\frac{q^{h+1}-1}{\lambda}+1$. Therefore, for $\delta=\frac{q^{h+1}-1}{\lambda}+1$,  we have 
\begin{itemize}
    \item $\widetilde{\mu}(\delta)=q-1$;
    \item $ w_{\delta}=(q-1)q^h$;
    \item$ \delta_h=q-1.$
\end{itemize} 
Consequently, by substituting  $\delta=\frac{q^{h+1}-1}{\lambda}+1$ into equation (\ref{485}) and  applying Lemma \ref{ll9},  we obtain   
\begin{equation}\label{v7}
\begin{split}
\left|\left[1, q^{h+1}-1\right]\cap (\mathcal{S}\cup \mathcal{H}) \cap \mathcal{D}_{\lambda} \right|&=  \lfloor \frac{(q-1)(q^h+1)}{\lambda}\rfloor-\lfloor \frac{(q-1)q^{h}}{\lambda}\rfloor + \sum\limits_{t=1}^{q-2}\left[ \lfloor \frac{2t}{\lambda}\rfloor -\lfloor \frac{t}{\lambda}\rfloor \right]\\
& = 
\frac{q-1}{2\lambda}\left(q+\frac{1+(-1)^{\lambda
}}{2}\right).
\end{split}
\end{equation}

If 
$\frac{q^{h+1}-1}{\lambda}+1<\delta\leq  \frac{q^{\lfloor (2m-1)/3\rfloor+1}-1}{\lambda}+1$, then we have $q^{h+1}\leq \lambda(\delta-1)< q^{\lfloor (2m-1)/3\rfloor+1}$. In this case, we have 
 $1\leq k_\delta\leq \lfloor ({2m-1})/{3}\rfloor-h.$
By applying Lemma  \ref{th2}, we can conclude from Assertions \ref{as2} and \ref{as3}  that 
\begin{equation} \notag
\left|\left[q^{h+k},q^{h+k+1}\right)\cap (\mathcal{S}\cup \mathcal{H})\cap \mathcal{D}_{\lambda}\right|  =\sum\limits_{i=-k}^k  |\mathcal{B}_k(i) |
    =\frac{(q-1)^3}{\lambda} (k-\frac{1}{2})q^{2k-2}(q+1) 
+ )\frac{(q-1)^2}{\lambda}(\frac{1}{2}q^{k-1}+q^{2k})
\end{equation}
for  each integer $k\in [1, \lfloor ({2m-1})/{3}\rfloor-h]$. 
It follows that
\begin{eqnarray}
\left|\left[q^{h+1},q^{h+k_{\delta}}\right)\cap (\mathcal{S}\cup \mathcal{H})\right| & =&\sum\limits_{k=1}^{k_{\delta}-1}
\left|\left[q^{h+k},q^{h+k+1}\right)\cap  (\mathcal{S}\cup \mathcal{H})\right|  \label{w3} \\ 
&=&\frac{(q-1)^2}{\lambda} (k_{\delta}-\frac{1}{2})q^{2k_{\delta}-2} +\frac{q-1}{2\lambda}(q^{k_{\delta}-1}-q).  \nonumber
\end{eqnarray}
Combining (\ref{as1e0-1}), (\ref{as2e2}), (\ref{v7}) and (\ref{w3}), we derive   
\begin{equation}\notag
\begin{split}
    \left| \left[1,\lambda(\delta-1)\right]\cap (\mathcal{S}\cup \mathcal{H})\cap \mathcal{D}_{\lambda} \right| = &\frac{q-1}{2\lambda}\left(q^{k_{\delta}-1}+ \frac{1+(-1)^{\lambda}}{2}\right) +  \lfloor \frac{\widetilde{\mu}(\delta)+w_{\delta}}{\lambda}\rfloor-\lfloor \frac{\lfloor\widetilde{\mu}(\delta)/q\rfloor+w_{\delta}}{\lambda}\rfloor  \\
        & + \frac{(q-1)^2}{\lambda} (k_{\delta}-\frac{1}{2})q^{2k_{\delta}-2}  +\sum\limits_{i=-k_{\delta}}^{k_{\delta}}\sum\limits_{t\in {\mathcal{T}}_i(\delta)}\left[\lfloor \frac{\lfloor tq^{2i}\rfloor+t}{\lambda}\rfloor -\lfloor \frac{\lfloor tq^{2i-1}\rfloor+t}{\lambda}\rfloor  
        \right] 
        \end{split}
\end{equation}

On the other hand, we conclude from Assertion \ref{as3.5} that 
\begin{equation}\notag
    \left|\left[q^{h}, q^{h+k_{\delta}} \right)\cap \mathcal{H}\cap \mathcal{D}_{\lambda}   \right| =\sum\limits_{k=0}^{k_{\delta}-1}\left|
    \left[q^{h+k}, q^{h+k+1}\right)\cap \mathcal{H}\cap \mathcal{D}_{\lambda}\right| = \frac{(3+(-1)^{\lambda})(q-1)q^{k_{\delta}-1}}{2\lambda}.
\end{equation}
With  (\ref{as1e0-2}) and (\ref{as2e3}), it follows that 
\begin{equation}\notag
     \left|\left[1, \lambda(\delta-1) \right)\cap \mathcal{H}\cap \mathcal{D}_{\lambda}   \right| =  \lfloor  \frac{\phi(\delta)\left(3+(-1)^{\lambda}
            \right)}{\lambda} \rfloor -\lfloor  \frac{\lfloor \phi(\delta)/q \rfloor\left(3+(-1)^{\lambda}
            \right) }{2\lambda}   \rfloor +\tau(\delta),
\end{equation}
where $\phi(\delta)=\sum\limits_{\ell=h}^{h+k_{\delta}}\delta_{\ell}q^{\ell-h}-1$. 

By now, we have already demonstrated that both  (\ref{even1}) and (\ref{even2}) hold for $ 2 \leq \delta\leq  \frac{q^{\lfloor (2m-1)/3\rfloor+1}-1}{\lambda}+1.$  Next, we show that equation (\ref{th311}) holds. 
Notice that
 $\lambda(\delta-1)< q^{m-\lfloor {m}/{3}\rfloor}$. We can apply    Lemmas \ref{cos2} and \ref{th1}  to obtain  \begin{equation}\label{evenm}
    |C_{n}(a)|= |C_{\lambda n}(\lambda a)|= m  \hbox{ or } \frac{m}{2} \quad  \hbox{for all  integers }  a\in [1,\delta-1].
    \end{equation}
By applying Corollary \ref{cor1}, we can derive from (\ref{even1}) and (\ref{even2}) that 
 \begin{equation}\label{jj1}
   \left |   \mathcal{L}^n_m(1,\delta-1)\right| = 
   \left |   \mathcal{L}^{\lambda n}_m\left(1,\lambda(\delta-1)\right)\cap \mathcal{D}_{\lambda} \right| = N(\delta)-\widetilde{f}(\delta)
\end{equation}
and 
 \begin{equation}\label{jj2}
  \left |   \mathcal{L}^{n}_{\frac{m}{2}}(1,\delta-1)\right|= \left |   \mathcal{L}^{\lambda n}_{\frac{m}{2}}\left(1,\lambda(\delta-1)\right)\cap \mathcal{D}_{\lambda} \right| = g(\delta).
\end{equation}
Recalling the equality in (\ref{dimeq}), we can now conclude from (\ref{evenm}), (\ref{jj1}) and (\ref{jj2}) that (\ref{th311})  holds.  This completes the proof. \qed

Zhu et al.\ determined the dimension of narrow-sense BCH codes of length $(q^m-1)/\lambda$ for designed distance $2\leq \delta\leq \frac{q^{\lceil (m+1)/2 \rceil}-1}{\lambda} + 1$ in \cite[Theorem 1]{ZSK2019} for odd $m$ and in \cite[Theorem 3]{ZSK2019} for even  $m$, respectively.
As an illustration of   Theorem \ref{odda} in this paper, we present the dimension of narrow-sense BCH codes of length $(q^m-1)/\lambda$ for  designed distances  $2\leq \delta \leq \frac{q^{\lceil m/2\rceil+1}-1}{\lambda}+1$ in the following two corollaries. 
Notably, for even $m$,  although the range of designed distances considered in Corollary  \ref{CC} coincides with that in
 \cite[Theorem 3]{ZSK2019}, our derivation leads to a simpler expression, involving only four cases compared to  eleven cases in \cite[Theorem 3]{ZSK2019}.

\begin{Corollary}\label{CC}
 Let $m\geq 4$ be an even integer, and let $\delta$ be an integer with $2\leq \delta \leq  \frac{q^{h+1}-1}{\lambda}+1$. 
Then  \begin{equation}\label{cor2e1}
           \mathrm{dim}(\mathcal{C}_{(q,n,\delta)}) =\begin{cases}  n-mN(\delta) \\   \quad \quad\quad \hbox{if } \delta \leq \frac{q^h-1}{\lambda}+1,\\    n-mN(\delta)+m\sum\limits_{t=1}^{\delta_h}\left[ \lfloor \frac{2t}{\lambda} \rfloor -\lfloor \frac{t}{\lambda} \rfloor\right]
      -\frac{m}{2}  \lfloor \frac{\left(3+(-1)^{\lambda}\right)(\delta_h-1)}{2\lambda}\rfloor- \frac{m}{2}\\
         \quad \quad\quad  \hbox{if } \delta > \frac{q^h-1}{\lambda}+1, \delta_h\leq \sum\limits_{\ell=0}^{h-1}\delta_{\ell}q^{\ell} \hbox{ and }  \lambda\mid 2\delta_h, \\
             n-mN(\delta)+m\sum\limits_{t=1}^{\delta_h}\left[ \lfloor \frac{2t}{\lambda} \rfloor -\lfloor \frac{t}{\lambda} \rfloor\right]
      -\frac{m}{2}  \lfloor \frac{\left(3+(-1)^{\lambda}\right)(\delta_h-1)}{2\lambda}\rfloor
            \\ \quad \quad\quad \hbox{if }\delta > \frac{q^h-1}{\lambda}+1, \delta_h\leq \sum\limits_{\ell=0}^{h-1}\delta_{\ell}q^{\ell} 
            \hbox{ and } \lambda\nmid 2\delta_h,\\
n-mN(\delta)+m \left[\sum\limits_{t=1}^{\delta_h}\left[ \lfloor\textstyle \frac{2t}{\lambda} \rfloor -\lfloor\textstyle \frac{t}{\lambda} \rfloor\right] +\lfloor \textstyle\frac{\delta_0+\delta_h}{\lambda}\rfloor-\lfloor\textstyle \frac{2\delta_h}{\lambda}\rfloor\right]
      -\textstyle\frac{m}{2} \lfloor \textstyle\frac{\left(3+(-1)^{\lambda}\right)(\delta_h-1)}{2\lambda}\rfloor
           \\ \quad \quad\quad \hbox{if }\delta > \frac{q^h-1}{\lambda}+1\hbox{ and }\delta_h> \sum\limits_{\ell=0}^{h-1}\delta_{\ell}q^{\ell}.
     \end{cases}
 \end{equation}  
\end{Corollary}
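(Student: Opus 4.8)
The plan is to obtain Corollary \ref{CC} as a direct specialization of the even-$m$ case of Theorem \ref{odda}, namely equation (\ref{th311}), to the range $2 \leq \delta \leq \frac{q^{h+1}-1}{\lambda}+1$. In this range $\lambda(\delta-1) < q^{h+1}$, so the index $k_\delta$ satisfies $k_\delta \leq 0$, and all of the auxiliary data $s_\delta, w_\delta, \widetilde\mu(\delta), \tau(\delta)$ collapse to very simple expressions. The formula (\ref{th311}) should then reduce to each of the four stated cases after routine floor-function manipulation, the single unifying tool being the congruence $q^h\equiv 1\pmod\lambda$ that follows from $\lambda\mid q-1$.

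First I would dispose of the subrange $\delta \leq \frac{q^h-1}{\lambda}+1$: here $\lambda(\delta-1)\leq q^h-1$, so the defining equations (\ref{wf}) and (\ref{g}) give $\widetilde f(\delta)=g(\delta)=0$, and (\ref{th311}) immediately yields $\mathrm{dim}(\mathcal C_{(q,n,\delta)})=n-mN(\delta)$, the first branch of (\ref{cor2e1}).

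For $\frac{q^h-1}{\lambda}+1<\delta\leq \frac{q^{h+1}-1}{\lambda}+1$ one has $q^h\leq \lambda(\delta-1)<q^{h+1}$, hence $k_\delta=0$ and $\delta_h\geq 1$. Because $m$ is even, $m-2h=0$, forcing $s_\delta=0$; consequently $w_\delta=\delta_h q^h$ and $\widetilde\mu(\delta)=\min\{\sum_{\ell=0}^{h-1}\delta_\ell q^\ell,\,\delta_h\}$, while (\ref{tau}) specializes to $\tau(\delta)=1$ exactly when $\delta_h\leq \sum_{\ell=0}^{h-1}\delta_\ell q^\ell$ and $\lambda\mid 2\delta_h$, and $\tau(\delta)=0$ otherwise. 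Writing $\delta_h q^h=\delta_h+\lambda c$ and noting $\widetilde\mu(\delta)\leq\delta_h<q$ (so $\lfloor\widetilde\mu(\delta)/q\rfloor=0$), the leading floor pair in (\ref{wf}) collapses to $\lfloor\frac{\widetilde\mu(\delta)+\delta_h}{\lambda}\rfloor-\lfloor\frac{\delta_h}{\lambda}\rfloor$. I would then split into three sub-cases according to which branch of $\widetilde\mu(\delta)$ is active and the value of $\tau(\delta)$: when $\delta_h\leq\sum_{\ell=0}^{h-1}\delta_\ell q^\ell$ one has $\widetilde\mu(\delta)=\delta_h$, the leading pair becomes $\lfloor\frac{2\delta_h}{\lambda}\rfloor-\lfloor\frac{\delta_h}{\lambda}\rfloor$, absorbing the $t=\delta_h$ term and turning $\widetilde f(\delta)$ into $\sum_{t=1}^{\delta_h}[\lfloor\frac{2t}{\lambda}\rfloor-\lfloor\frac{t}{\lambda}\rfloor]$; the choices $\lambda\mid 2\delta_h$ and $\lambda\nmid 2\delta_h$ give $\tau(\delta)=1$ (contributing the extra $-\tfrac m2$) and $\tau(\delta)=0$, producing the second and third branches of (\ref{cor2e1}). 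When $\delta_h>\sum_{\ell=0}^{h-1}\delta_\ell q^\ell$ the lower digits must vanish except $\delta_0$, so $\widetilde\mu(\delta)=\delta_0$ and the leading pair becomes $\lfloor\frac{\delta_0+\delta_h}{\lambda}\rfloor-\lfloor\frac{\delta_h}{\lambda}\rfloor$; combining with $\sum_{t=1}^{\delta_h-1}$ and here $\tau(\delta)=0$ reproduces the fourth branch.

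The proof is essentially bookkeeping rather than conceptual, so the main obstacle is purely organizational: correctly matching the three sub-cases of (\ref{cor2e1}) to the combinations $(\widetilde\mu=\delta_h,\tau=1)$, $(\widetilde\mu=\delta_h,\tau=0)$, $(\widetilde\mu=\delta_0,\tau=0)$, and in particular verifying that the hypothesis $\delta_h>\sum_{\ell=0}^{h-1}\delta_\ell q^\ell$ forces $\delta_1=\cdots=\delta_{h-1}=0$, so that the substitution $\widetilde\mu(\delta)=\delta_0$ is legitimate. Care is also needed in tracking the single shift between the summation ranges $\sum_{t=1}^{\delta_h-1}$ in (\ref{wf}) and $\sum_{t=1}^{\delta_h}$ in the statement, and in confirming that $g(\delta)=\lfloor\frac{(\delta_h-1)(3+(-1)^\lambda)}{2\lambda}\rfloor+\tau(\delta)$ supplies exactly the $-\tfrac m2\lfloor\cdots\rfloor$ term and, when $\tau(\delta)=1$, the additional $-\tfrac m2$.
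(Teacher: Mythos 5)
Your proposal is correct and follows essentially the same route as the paper's proof: both specialize the even-$m$ formula (\ref{th311}) to the range where $k_\delta=0$ and $s_\delta=0$, so that $w_\delta=\delta_h q^h$, $\widetilde{\mu}(\delta)=\min\{\sum_{\ell=0}^{h-1}\delta_\ell q^\ell,\delta_h\}$, use $q^h\equiv 1\pmod{\lambda}$ to collapse the leading floor pair (absorbing or extracting the $t=\delta_h$ term of the sum), and split into the same three sub-cases $(\widetilde{\mu}=\delta_h,\tau=1)$, $(\widetilde{\mu}=\delta_h,\tau=0)$, $(\widetilde{\mu}=\delta_0,\tau=0)$. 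Your explicit observation that $\delta_h>\sum_{\ell=0}^{h-1}\delta_\ell q^\ell$ forces $\delta_1=\cdots=\delta_{h-1}=0$ (justifying $\widetilde{\mu}(\delta)=\delta_0$) is a detail the paper uses silently; otherwise the two arguments coincide.
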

  \begin{proof}  
 If $2\leq \delta \leq \frac{q^h-1}{\lambda}+1$, then we have $\widetilde{f}(\delta)=0$ and $g(\delta)=0$. 
 
If  $\delta>\frac{q^h-1}{\lambda}+1  $, then  we distinguish the following two cases. 

\textbf{Case 1.} Suppose that  $\delta_h\leq \sum\limits_{\ell=0}^{h-1}\delta_{\ell}q^{\ell}$. Then we have  $\widetilde{\mu}(\delta)=\delta_h$ and $w_{\delta}=\delta_hq^h$. By substituting them  into equation (\ref{wf}),  we obtain  
\begin{equation}\notag
\begin{split}
\widetilde{f}(\delta)&=\lfloor \frac{\delta_h+\delta_{h}q^h}{\lambda}\rfloor-\lfloor \frac{\delta_{h}q^h}{\lambda}\rfloor + \sum\limits_{t=1}^{\delta_h-1}\left[ \lfloor \frac{2t}{\lambda} \rfloor -\lfloor \frac{t}{\lambda} \rfloor\right]   \\
&=\sum\limits_{t=1}^{\delta_h}\left[ \lfloor \frac{2t}{\lambda} \rfloor -\lfloor \frac{t}{\lambda} \rfloor\right].
\end{split}
\end{equation}
In addition,  it is straightforward to verify that  
\begin{equation}\notag
\tau(\delta)= 
\begin{cases}
  1 &\hbox{if }\lambda \mid 2\delta_h,\\
      0  & \hbox{if }\lambda \nmid 2\delta_h.
\end{cases}
\end{equation}
Recalling equation (\ref{g}), it follows that  
\begin{equation}\notag
    g(\delta)=\begin{cases}
\lfloor \frac{\left(3+(-1)^{\lambda}\right)(\delta_h-1)}{2\lambda}\rfloor+  1 &\hbox{if }\lambda \mid 2\delta_h,\\
     \lfloor \frac{\left(3+(-1)^{\lambda}\right)(\delta_h-1)}{2\lambda}\rfloor  &  \hbox{if }\lambda \nmid 2\delta_h.
\end{cases}
\end{equation}

\textbf{Case 2.}  Suppose that  $\delta_h>  \sum\limits_{\ell=0}^{h-1}\delta_{\ell}q^{\ell}$.  Then  $\widetilde{\mu}(\delta)=\sum\limits_{\ell=0}^{h-1}\delta_{\ell}q^{\ell}=\delta_0$,   $w_{\delta}=\delta_hq^h$ and $\tau(\delta)=0$. 
By substituting these values  into equations (\ref{wf}) and (\ref{g}),  we obtain  
\begin{equation}\notag
\widetilde{f}(\delta)=\lfloor \frac{\delta_0+\delta_h}{\lambda}\rfloor-\lfloor \frac{2\delta_h}{\lambda}\rfloor+\sum\limits_{t=1}^{\delta_h}\left[ \lfloor \frac{2t}{\lambda} \rfloor -\lfloor \frac{t}{\lambda} \rfloor\right] 
\end{equation}
and 
\begin{equation}\notag
   g(\delta)= \lfloor \frac{\left(3+(-1)^{\lambda}\right)(\delta_h-1)}{2\lambda}\rfloor.
\end{equation}

 Finally, substituting the values of $g(\delta)$ and $\widetilde{f}(\delta)$ for corresponding cases into equation   (\ref{th311}), we derive the desired equation (\ref{cor2e1}). 
  \end{proof}  
 
\begin{Corollary}\label{hlx}
     Let $m\geq 5$ be an odd integer, and let $\delta$  be an integer with $2\leq \delta \leq  \frac{q^{h+2}-1}{\lambda}+1.$    Then  
\begin{equation}\label{cor3e0}
           \mathrm{dim}(\mathcal{C}_{(q,n,\delta)})\! = \!
           \begin{cases} 
           n-mN(\delta)\\
           \quad \quad\quad \hbox{if }\delta \leq \frac{q^{h+1}-1}{\lambda}+1,\\
            n\!-\!mN(\delta)\!+\!m \left(\lfloor\frac{\delta_0+\delta_{h+1}+\delta_h}{\lambda} \rfloor - \lfloor\frac{2\delta_{h+1}+\delta_h}{\lambda} \rfloor
           \!+\! \frac{\delta_{h+1}^2(q-1)}{\lambda}\!+\!\sum\limits_{i=1}^{\delta_h}\left[\lfloor \frac{2\delta_{h+1}+i}{\lambda}\rfloor \!-\! \lfloor \frac{\delta_{h+1}+i}{\lambda}\rfloor \right] \right)  \\
           \quad \quad \quad 
                \hbox{if } \delta> \frac{q^{h+1}-1}{\lambda}+1,  \delta_h>0  \hbox{ and }\delta_{h+1}> \sum\limits_{\ell=0}^{h-1}\delta_{\ell}q^{\ell},\\
            n-mN(\delta)+m \left(\frac{\delta_{h+1}^2(q-1)}{\lambda} +\sum\limits_{i=1}^{\delta_h}\left[\lfloor \frac{2\delta_{h+1}+i}{\lambda}\rfloor - \lfloor \frac{\delta_{h+1}+i}{\lambda}\rfloor \right]\right)
             \\ \quad \quad \quad   \hbox{if } \delta> \frac{q^{h+1}-1}{\lambda}+1,  \delta_h>0  \hbox{ and }\delta_{h+1}\leq  \sum\limits_{\ell=0}^{h-1}\delta_{\ell}q^{\ell},\\
n-mN(\delta)+m\left(\frac{\left(\delta_{h+1}^2-\delta_{h+1}+\delta_1\right)(q-1)}{\lambda}+\lfloor \frac{\delta_1+\delta_0+\delta_{h+1}}{\lambda}\rfloor -\lfloor \frac{\delta_1+\delta_{h+1}}{\lambda}\rfloor\right)\\
\quad \quad \quad  \hbox{if
      }\delta> \frac{q^{h+1}-1}{\lambda}+1, \delta_h=0 \hbox{ and }\delta_{h+1}q>\sum\limits_{\ell=0}^{h-1}\delta_{\ell}q^{\ell},\\
   n-mN(\delta) +m   \frac{\delta_{h+1}^2(q-1)}{\lambda} \\ \quad \quad \quad  \hbox{if
      }\delta> \frac{q^{h+1}-1}{\lambda}+1, \delta_h=0 \hbox{ and } \delta_{h+1}q\leq \sum\limits_{\ell=0}^{h-1}\delta_{\ell}q^{\ell}.
     \end{cases}
 \end{equation}
\end{Corollary}

\begin{proof}
   If  $\delta \leq \frac{q^{h+1}-1}{\lambda}+1, $ we  directly have $f(\delta)=0$. 
   
     If  $\delta >\frac{q^{h+1}-1}{\lambda}+1, $ then we have $k_{\delta}=1$.  We distinguish following cases. 

 \textbf{Case 1.} Suppose that $\delta_h>0$.  In this case, we have $s_{\delta}=0$. This leads to  
 \begin{itemize}
     \item $\mathcal{T}_0(\delta)=\left\{t\in \mathbb{Z}: q\leq  t\leq  \delta_{h+1}q+ \delta_h-1\hbox{ and }q\nmid t\right\}$; 
     \item 
     $\mathcal{T}_1(\delta)=\left\{t\in \mathbb{Z}:1 \leq t \leq \delta_{h+1}\right\}$;
     \item $w_{\delta}=\delta_{h+1}q^{h+1}+\delta_hq^h$;
      \item  $\mu(\delta)=\min\left\{\delta_{h+1},\sum\limits_{\ell=0}^{h-1}\delta_{\ell}q^{\ell}\right\}$.
      \end{itemize}
     By substituting these values into equation (\ref{ff}), we obtain  
\begin{equation}\label{cor3e1} 
\begin{split}
f(\delta)=&\lfloor \frac{\mu(\delta)+w_{\delta}}{\lambda}\rfloor - \lfloor \frac{{\lfloor \mu(\delta})/q\rfloor+w_{\delta}}{\lambda}\rfloor  + \sum\limits_{t=q, q\nmid t}^{\delta_{h+1}q+\delta_{h}-1}\left[ \lfloor\frac{\lfloor tq^{-1}\rfloor+t}{\lambda}\rfloor - \lfloor\frac{t}{\lambda}\rfloor\right]+ \sum\limits_{t=1}^{\delta_{h+1}}\left[ \lfloor\frac{ tq+t}{\lambda}\rfloor - \lfloor\frac{2t}{\lambda}\rfloor\right]. 
\end{split}
\end{equation} 
Note that  $\mu(\delta)=\delta_{h+1}$  if $\delta_{h+1}\leq \sum\limits_{\ell=0}^{h-1}\delta_{\ell}q^{\ell}$, and $\mu(\delta)=\sum\limits_{\ell=0}^{h-1}\delta_{\ell}q^{\ell}=\delta_0$ if  $\delta_{h+1}> \sum\limits_{\ell=0}^{h-1}\delta_{\ell}q^{\ell}$.  Thus, 
\begin{equation}\label{588}
    \lfloor \frac{\mu(\delta)+w_{\delta}}{\lambda}\rfloor - \lfloor \frac{{\lfloor \mu(\delta})/q\rfloor+w_{\delta}}{\lambda}\rfloor =\begin{cases}
\lfloor\frac{\delta_0+\delta_{h+1}+\delta_h}{\lambda} \rfloor - \lfloor\frac{\delta_{h+1}+\delta_h}{\lambda} \rfloor& \hbox{if } \delta_{h+1}> \sum\limits_{\ell=0}^{h-1}\delta_{\ell}q^{\ell},\\
 \lfloor\frac{2\delta_{h+1}+\delta_h}{\lambda} \rfloor - \lfloor\frac{\delta_{h+1}+\delta_h}{\lambda} \rfloor  & \hbox{if } \delta_{h+1}\leq \sum\limits_{\ell=0}^{h-1}\delta_{\ell}q^{\ell}.
    \end{cases}
\end{equation}
By applying Lemma \ref{adl}, we derive \begin{equation}\notag
     \sum\limits_{t=q,q\nmid t}^{\delta_{h+1}q}\left[ \lfloor\frac{\lfloor tq^{-1}\rfloor+t}{\lambda}\rfloor - \lfloor\frac{t}{\lambda}\rfloor\right] =  \frac{\delta_{h+1}(\delta_{h+1}-1)(q-1)}{2\lambda}. 
\end{equation}
Since each integer $t\in \left[\delta_{h+1}q+1,\delta_{h+1}q+\delta_h-1\right]$ can be uniquely expressed as $t=\delta_{h+1}q+i$ with an integer $i\in \left[1,\delta_h-1\right]$,  we have 
    \begin{equation} \notag  \sum\limits_{t=\delta_{h+1}q+1}^{\delta_{h+1}q+\delta_h-1}\left[ \lfloor\frac{\lfloor tq^{-1}\rfloor+t}{\lambda}\rfloor - \lfloor\frac{t}{\lambda}\rfloor\right] =    \sum\limits_{i=1}^{\delta_h-1}\left[\lfloor \frac{2\delta_{h+1}+i}{\lambda}\rfloor - \lfloor \frac{\delta_{h+1}+i}{\lambda}\rfloor \right].
\end{equation}
Adding above two sums, we get  
\begin{equation}\label{59}
     \sum\limits_{t=q, q\nmid t}^{\delta_{h+1}q+\delta_{h}-1}\left[ \lfloor\frac{\lfloor tq^{-1}\rfloor+t}{\lambda}\rfloor - \lfloor\frac{t}{\lambda}\rfloor\right]= \frac{\delta_{h+1}(\delta_{h+1}-1)(q-1)}{2\lambda} + \sum\limits_{i=1}^{\delta_h-1}\left[\lfloor \frac{2\delta_{h+1}+i}{\lambda}\rfloor - \lfloor \frac{\delta_{h+1}+i}{\lambda}\rfloor \right].
\end{equation}
Furthermore, it is straightforward to verify that 
\begin{equation} \label{60}
\begin{split} 
\sum\limits_{t=1}^{\delta_{h+1}}\left[ \lfloor\frac{ tq+t}{\lambda}\rfloor - \lfloor\frac{2t}{\lambda}\rfloor\right]&= \sum\limits_{t=1}^{\delta_{h+1}}\frac{t(q-1)}{\lambda}\\
&= \frac{\delta_{h+1}(\delta_{h+1}+1)(q-1)}{2\lambda}. 
\end{split}
\end{equation}
We can now conclude from (\ref{cor3e1}) -- (\ref{60}) that 
\begin{equation}\notag
f(\delta)= 
\begin{cases} 
\lfloor\frac{\delta_0+\delta_{h+1}+\delta_h}{\lambda} \rfloor - \lfloor\frac{\delta_{h+1}+\delta_h}{\lambda} \rfloor  + \frac{\delta_{h+1}^2(q-1)}{\lambda}+\sum\limits_{i=1}^{\delta_h-1}\left[\lfloor \frac{2\delta_{h+1}+i}{\lambda}\rfloor - \lfloor \frac{\delta_{h+1}+i}{\lambda}\rfloor \right]&  \hbox{if } \delta_h>0\hbox{ and }\delta_{h+1}> \sum\limits_{\ell=0}^{h-1}\delta_{\ell}q^{\ell},\\
 \frac{\delta_{h+1}^2(q-1)}{\lambda}+\sum\limits_{i=1}^{\delta_h}\left[\lfloor \frac{2\delta_{h+1}+i}{\lambda}\rfloor - \lfloor \frac{\delta_{h+1}+i}{\lambda}\rfloor \right]&  \hbox{if } \delta_h>0\hbox{ and }\delta_{h+1}\leq  \sum\limits_{\ell=0}^{h-1}\delta_{\ell}q^{\ell}.
\end{cases}
\end{equation}

 \textbf{Case 2.} Suppose that  $\delta_h=0$. Then we have $s_{\delta}=1$.  This  leads to 
\begin{itemize}
     \item $\mathcal{T}_0(\delta)=\left\{t\in \mathbb{Z}: q\leq t\leq  \delta_{h+1}q-1\hbox{ and }q\nmid t\right\}$;
     
    \item  $\mathcal{T}_1(\delta)=\{t\in \mathbb{Z}:1\leq t\leq \delta_{h+1}-1\}$;
    \item $w_{\delta}=\delta_{h+1}q^{h+1}$;
    \item  $\mu(\delta)=\min\left\{ \delta_{h+1}q, \sum\limits_{\ell=0}^{h-1}\delta_{\ell}q^{\ell} \right\}.$
    
\end{itemize}
Note that  $\mu(\delta)=\sum\limits_{\ell=0}^{h-1}\delta_{\ell}q^{\ell}=\delta_1q+\delta_0$ if $\delta_{h+1}q>\sum\limits_{\ell=0}^{h-1}\delta_{\ell}q^{\ell}$, and $\mu(\delta)=\delta_{h+1}q$ if $\delta_{h+1}q\leq \sum\limits_{\ell=0}^{h-1}\delta_{\ell}q^{\ell}$. Thus,
by  substituting these  terms into equation (\ref{ff}), we can use  an analogous argument as in {Case 1}  to obtain 
\begin{equation}\notag
    f(\delta)= \begin{cases}
      \frac{(\delta_{h+1}^2-\delta_{h+1}+\delta_1)(q-1)}{\lambda}+\lfloor \frac{\delta_1+\delta_0+\delta_{h+1}}{\lambda}\rfloor -\lfloor \frac{\delta_1+\delta_{h+1}}{\lambda}\rfloor& \hbox{if
      } \delta_h=0\hbox{ and }\delta_{h+1}q>\sum\limits_{\ell=0}^{h-1}\delta_{\ell}q^{\ell},\\
      \frac{\delta_{h+1}^2(q-1)}{\lambda} &\hbox{if
      }\delta_h=0\hbox{ and }\delta_{h+1}q\leq \sum\limits_{\ell=0}^{h-1}\delta_{\ell}q^{\ell}.
    \end{cases}
\end{equation} 

Finally, substituting the value of $f(\delta)$ into equation (\ref{th31}) for each corresponding case, we can conclude that  (\ref{cor3e0}) holds. This completes the proof.
\end{proof}

\section{The Bose distance  of narrow-sense BCH codes of length \texorpdfstring{$ (q^m-1)/\lambda $}\  }\label{bose}
In this section, we investigate the Bose distance of narrow-sense BCH codes of length $n=(q^m-1)/\lambda$ for designed distances  $2\leq \delta \leq \frac{q^{\lfloor (2m-1)/3\rfloor+1}-1}{\lambda}$.
Note that if $q \mid \delta$, then $\delta$ is not a coset leader modulo $n$. Therefore, we have $d_B(\mathcal{C}_{(q,n,\delta)})=d_B(\mathcal{C}_{(q,n,\delta+1)})$ for $q \mid \delta$.
Given this property, it is sufficient to focus on BCH codes $\mathcal{C}_{(q,n,\delta)}$ with $q \nmid \delta$. Their Bose distances are established in the following three theorems.

\begin{theorem}\label{th5}
    Let $m$ be a positive integer, and let $\lambda$ be a positive divisor of $q-1$. Set $n=(q^{m}-1)/\lambda$ and $h=\lfloor \frac{m}{2}\rfloor$. Let $\delta$ be an integer with   $2\leq \delta\leq \frac{q^{m-h}-1}{\lambda}$ and $q\nmid \delta$.    Then     $d_B(\mathcal{C}_{(q,n,\delta)})=\delta.$
\end{theorem}
\begin{proof}

    Note that $\lambda \mid q-1$ implies that $\mathrm{gcd}(q,\lambda)=1$. Since $q\nmid \delta$, it follows that $q\nmid \lambda\delta$. Additionally, it is clear that $2\leq \lambda\delta<q^{m-h}$.   By Lemma \ref{lll}, it follows that $\lambda \delta$ is  a coset leader modulo $\lambda n$. Then applying Lemma \ref{cos2}, we conclude that $\delta$ is  a coset leader modulo $n$. Consequently, $d_B(\mathcal{C}_{(q,n,\delta)})=\delta.$ 
\end{proof}

\begin{theorem}\label{bosth}
Let  $m\geq 5$ be an  odd integer, and let $\lambda$ be a positive divisor of $q-1$.  Set $n=(q^m-1)/\lambda$ and $h=\lfloor\frac{m}{2}\rfloor$. Let    $\delta$  be an integer with    $\frac{q^{h+1}-1}{\lambda}+1 \leq \delta \leq \frac{q^{\lfloor (2m-1)/3\rfloor+1}-1}{\lambda}$ and  $q\nmid \delta$. Define $j_{\delta} = \lfloor \log_q (\lambda \delta) \rfloor - h$, and  
write the $q$-adic expansion of $\lambda \delta$ as
$\lambda \delta=\sum\limits_{\ell=0}^{h + j_{\delta}} \delta_{\ell} q^{\ell}$. Let $r_{\delta}$ be   the smallest integer in $[-j_{\delta} + 1,\, j_{\delta}]$ such that $\delta_{h + r_{\delta}} > 0$
and let 
$
\hat{\delta} = \sum\limits_{\ell = h + r_{\delta}}^{h + j_{\delta}} \delta_{\ell} q^{\ell} + \sum\limits_{\ell =h - r_{\delta}+1}^{h + j_{\delta}} \delta_{\ell} q^{\ell - (h - r_{\delta}+1)}.
$
Then 
\begin{equation}\label{bose1}
d_B(\mathcal{C}_{(q,n,\delta)}) = \begin{cases}
\delta &\hbox{if }\sum\limits_{\ell=0}^{h-j_{\delta}} \delta_{\ell} q^{\ell} > \sum\limits_{\ell= h - r_{\delta}+1}^{h + j_{\delta}} \delta_{\ell} q^{\ell - (m - h - r_{\delta})}, \\
\lfloor \frac{\hat{\delta} }{\lambda}\rfloor+1 &\hbox{if } \sum\limits_{\ell=0}^{h-j_{\delta}} \delta_{\ell} q^{\ell} \leq \sum\limits_{\ell=h - r_{\delta}+1}^{h + j_{\delta}} \delta_{\ell} q^{\ell - (h - r_{\delta} + 1)} \hbox{ and  } \delta_{h - r_{\delta}+1}+\lambda -q  \neq \hat{\delta} \bmod \lambda, \\
\lfloor \frac{\hat{\delta} }{\lambda}\rfloor+ 2 &\hbox{if } \sum\limits_{\ell=0}^{h-j_{\delta}} \delta_{\ell} q^{\ell} \leq \sum\limits_{\ell=h - r_{\delta}+1}^{h + j_{\delta}} \delta_{\ell} q^{\ell - (h - r_{\delta} + 1)} \hbox{ and  } 
\delta_{h - r_{\delta}+1}+\lambda -q= \hat{\delta} \bmod \lambda.
\end{cases}
\end{equation}
\end{theorem}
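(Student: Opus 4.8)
The plan is to reduce the computation of the Bose distance to a statement about coset leaders modulo $\lambda n = q^m-1$. As recorded after \eqref{dimeq}, the Bose distance of $\mathcal{C}_{(q,n,\delta)}$ equals the smallest coset leader modulo $n$ lying in $[\delta,n-1]$. By Lemma \ref{cos2}, an integer $d$ is a coset leader modulo $n$ if and only if $\lambda d$ is a coset leader modulo $q^m-1$; hence I would compute $d_B(\mathcal{C}_{(q,n,\delta)})$ by locating the smallest multiple of $\lambda$ that is at least $\lambda\delta$ and is a coset leader modulo $q^m-1$, then dividing by $\lambda$. Since $q\nmid\delta$ and $\gcd(q,\lambda)=1$ give $q\nmid\lambda\delta$, the number $\lambda\delta$ is itself the first candidate. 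Because $q^{h+1}\le\lambda\delta<q^{\lfloor(2m-1)/3\rfloor+1}$, Theorem \ref{th2} applies with $k=j_\delta$: among integers not divisible by $q$, the non-coset-leaders in $[q^{h+j_\delta},q^{h+j_\delta+1})$ form exactly $\bigsqcup_i\mathcal{A}_{j_\delta}(i)$, and by Remark \ref{cor2} the only block that can contain $\lambda\delta$ is $\mathcal{A}_{j_\delta}(r_\delta)$, since $r_\delta$ is the least index with $\delta_{h+r_\delta}>0$. Throughout I write $T_\delta:=\sum_{\ell=h-r_\delta+1}^{h+j_\delta}\delta_\ell q^{\ell-(h-r_\delta+1)}$ for the threshold appearing on the right-hand sides, and decompose $\lambda\delta=t(\lambda\delta)q^{h+r_\delta}+\alpha(\lambda\delta)$ as in Remark \ref{rr2}.

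For the first case I would show the stated strict inequality is precisely the condition under which $\lambda\delta$ fails to lie in $\mathcal{A}_{j_\delta}(r_\delta)$, hence is a coset leader and forces $d_B=\delta$. By Remark \ref{rm4}, membership in $\mathcal{A}_{j_\delta}(r_\delta)$ requires $\alpha(\lambda\delta)\le T_\delta$ together with the vanishing of the middle digits dictated by \eqref{p33}. Using the necessary inequality of Remark \ref{rm6}, namely $\sum_{\ell=0}^{h-j_\delta}\delta_\ell q^\ell\le T_\delta$ for every element of $\mathcal{A}_{j_\delta}(r_\delta)$, I would argue that when $\lambda\delta$ has the structure \eqref{p33} its low part $\sum_{\ell=0}^{h-j_\delta}\delta_\ell q^\ell$ coincides with $\alpha(\lambda\delta)$; consequently $\sum_{\ell=0}^{h-j_\delta}\delta_\ell q^\ell> T_\delta$ violates this necessary condition, so $\lambda\delta\notin\mathcal{A}_{j_\delta}(r_\delta)$ and $d_B=\delta$.

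For the remaining two cases the opposite inequality holds and $\lambda\delta\in\mathcal{A}_{j_\delta}(r_\delta)$ is not a coset leader, so the key object is $\hat\delta=t(\lambda\delta)q^{h+r_\delta}+T_\delta$, obtained by replacing the low part of $\lambda\delta$ with the maximal admissible value $T_\delta$. Any integer with the same high part $t(\lambda\delta)$ but with low part exceeding $T_\delta$ violates \eqref{p32} while keeping the same least nonzero high digit, so by Remark \ref{cor2} it lies in no block $\mathcal{A}_{j_\delta}(i)$ and is therefore a coset leader as soon as it is not divisible by $q$. Thus the smallest coset leader exceeding $\hat\delta$ sits just above $\hat\delta$, and the smallest multiple of $\lambda$ exceeding $\hat\delta$ is $\lambda(\lfloor\hat\delta/\lambda\rfloor+1)=\hat\delta+(\lambda-\hat\delta\bmod\lambda)$. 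I would then verify this candidate is a coset leader exactly when it is not divisible by $q$. Since $h+r_\delta\ge1$, the high part contributes a multiple of $q$, whence $\hat\delta\equiv\delta_{h-r_\delta+1}\pmod q$; because $\delta_{h-r_\delta+1}+\lambda-(\hat\delta\bmod\lambda)\in[1,2q-2]$, it is divisible by $q$ if and only if it equals $q$, i.e. $\hat\delta\bmod\lambda=\delta_{h-r_\delta+1}+\lambda-q$, which is the equality of the third case. In that situation the candidate is divisible by $q$ and is discarded, so one advances to $\lambda(\lfloor\hat\delta/\lambda\rfloor+2)$ and obtains $d_B=\lfloor\hat\delta/\lambda\rfloor+2$; otherwise $d_B=\lfloor\hat\delta/\lambda\rfloor+1$.

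The main obstacle I anticipate is the bookkeeping in these last two cases: one must confirm that adding at most $2\lambda<2q$ to $\hat\delta$ produces no carry into position $h+r_\delta$, so that the high part $t(\lambda\delta)$ is preserved and the candidate genuinely stays in the regime controlled by $\mathcal{A}_{j_\delta}(r_\delta)$, with divisibility by $q$ the only possible obstruction to being a coset leader. This requires comparing the threshold $T_\delta=\lfloor t(\lambda\delta)q^{2r_\delta-1}\rfloor$ against $q^{h+r_\delta}$ and ruling out both the neighbouring blocks $\mathcal{A}_{j_\delta}(i)$ with $i\neq r_\delta$ and the upper boundary of $[q^{h+j_\delta},q^{h+j_\delta+1})$. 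Combined with the delicate residue identity $\hat\delta\equiv\delta_{h-r_\delta+1}\pmod q$, this is the most technical part of the argument and is where the careful handling of the $q$-adic digits of $\lambda\delta$ must be carried out.
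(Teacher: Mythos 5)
Your proposal is correct and takes essentially the same route as the paper's proof: reduce via Lemma \ref{cos2} to coset leaders modulo $q^m-1$, settle the first case by showing $\lambda\delta$ avoids every block $\mathcal{A}_{j_\delta}(i)$ using Remarks \ref{rm6} and \ref{cor2} together with Theorem \ref{th2}, and in the remaining cases show that every integer in $[\lambda\delta,\hat\delta]$ either lies in $\mathcal{A}_{j_\delta}(r_\delta)$ or is divisible by $q$ (the step the paper imports as ``an argument similar to the proof of Theorem 6 of \cite{zheng2025}''), then run the identical mod-$q$ residue analysis $\hat\delta\equiv\delta_{h-r_\delta+1}\pmod q$ on the candidates $\lambda\lfloor\hat\delta/\lambda\rfloor+\lambda$ and $\lambda\lfloor\hat\delta/\lambda\rfloor+2\lambda$ to decide between $\lfloor\hat\delta/\lambda\rfloor+1$ and $\lfloor\hat\delta/\lambda\rfloor+2$. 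The only difference is presentational: you make explicit the carry bookkeeping (absorbed by the block of $h-j_\delta\geq 1$ middle zeros) that the paper leaves implicit.
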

\begin{proof}
We show that equation (\ref{bose1}) holds through the following cases. 

\textbf{Case 1.} Suppose that   $\sum\limits_{\ell=0}^{h-j_{\delta}} \delta_{\ell} q^{\ell} > \sum\limits_{\ell=h - r_{\delta} + 1}^{h + j_{\delta}} \delta_{\ell} q^{\ell - (h - r_{\delta} + 1)}$. By Remarks \ref{rm6} and  \ref{cor2}, the inequality implies that   $\lambda \delta\not\in \mathcal{A}_{j_{\delta}}(i)$ for every integer $i\in [-j_{\delta}+1,j_{\delta}]$. Therefore,  by Lemma  \ref{th2}, we conclude  $\lambda\delta\not \in \mathcal{S}$. Note that the assumptions   $q\nmid\delta$ and $\lambda \mid q-1$ imply $q\nmid \lambda \delta$. It follows that   $\lambda \delta$ is a coset leader modulo $\lambda n$. Applying Lemma \ref{cos2}, we further obtain that   $\delta$ is  a coset leader modulo $n$. Consequently, we have $ d_B(\mathcal{C}_{(q,n,\delta)})=\delta$.

\textbf{Case 2.}
Suppose that $\sum\limits_{\ell=0}^{h-j_{\delta}} \delta_{\ell} q^{\ell} \leq \sum\limits_{\ell=h - r_{\delta} + 1}^{h + j_{\delta}} \delta_{\ell} q^{\ell - (h - r_{\delta} + 1)}$. With an argument similar to that in  the proof of \cite[Theorem 6]{zheng2025}, we can  show that each integer in $ [\lambda\delta, \hat{\delta}]$ is not  a coset leader modulo $\lambda n$.  Applying Lemma \ref{cos2}, it follows  that each integer in $\left[\delta, \lfloor \frac{\hat{\delta}}{\lambda}\rfloor \right]$ is not a coset leader modulo $n$. Notice  that when $m$ is odd,  \begin{equation*}
    \hat{\delta}=\lambda \lfloor\frac{\hat{\delta}}{\lambda} \rfloor+\hat{\delta}\bmod \lambda= \sum\limits_{\ell = h + r_{\delta}}^{h + j_{\delta}} \delta_{\ell} q^{\ell} + \sum\limits_{\ell =h - r_{\delta}+1}^{h + j_{\delta}} \delta_{\ell} q^{\ell - ( h- r_{\delta}+1)}.\end{equation*} Thus, we have 
\begin{equation*}
    \lambda \lfloor\frac{\hat{\delta}}{\lambda} \rfloor+ \lambda = \sum\limits_{\ell = h + r_{\delta}}^{h + j_{\delta}} \delta_{\ell} q^{\ell} + \sum\limits_{\ell = h - r_{\delta}+2}^{h + j_{\delta}} \delta_{\ell} q^{\ell - (h - r_{\delta}+1)} +\delta_{h-r_{\delta}+1}-\hat{\delta}\bmod \lambda+\lambda.
\end{equation*}
This implies that \begin{equation*}
    q\mid  \lambda \lfloor\frac{\hat{\delta}}{\lambda} \rfloor+ \lambda\quad  \hbox{if and only if}\quad  
 \delta_{h-r_{\delta}+1}+\lambda-q=\hat{\delta} \bmod \lambda.\end{equation*}  
Then we distinguish the following two subcases: 

\textit{Subcase 1.} If $\delta_{h-r_{\delta}+1}+\lambda-q\neq \hat{\delta} \bmod \lambda$, then $q\nmid \lambda\lfloor \frac{\hat{\delta}}{\lambda}\rfloor+\lambda$. Applying a similar argument as in Case 1, we can show that $\lambda \lfloor \frac{\hat{\delta}}{\lambda}\rfloor+\lambda$ is a coset leader modulo $\lambda n=q^m-1$.  By applying Lemma \ref{cos2}, it follows that $\lfloor \frac{\hat{\delta}}{\lambda}\rfloor+1$ is a coset leader modulo $n$. Consequently, $d_B(\mathcal{C}_{(q,n,\delta)}) =\lfloor \frac{\hat{\delta}}{\lambda}\rfloor+1.$

\textit{Subcase 2.}
 If $\delta_{h-r_{\delta}+1}+\lambda -q= \hat{\delta} \bmod \lambda$, then $q\mid \lambda \lfloor \frac{\hat{\delta}}{\lambda}\rfloor+\lambda$ and $q\nmid \lambda \lfloor \frac{\hat{\delta}}{\lambda}\rfloor+2\lambda $. This implies that $\lambda \lfloor \frac{\hat{\delta}}{\lambda}\rfloor+\lambda $ is not a coset leader of $\lambda n$. Furthermore, using a similar argument as in Case 1 again, we can conclude that  while $\lambda\lfloor \frac{\hat{\delta}}{\lambda}\rfloor+2\lambda $ is a coset leader modulo $\lambda  n$. By applying Lemma \ref{cos2}, it follows that  $\lfloor \frac{\hat{\delta}}{\lambda}\rfloor+1$ is not a coset leader modulo $n$, while $\lfloor \frac{\hat{\delta}}{\lambda}\rfloor+2$ is  a coset leader modulo $n$. Therefore, we have  $d_B(\mathcal{C}_{(q,n,\delta)}) =\lfloor \frac{\hat{\delta}}{\lambda}\rfloor+2.$
 \end{proof}

\begin{theorem}\label{bosth2}
Let   $m\geq 4$ be an  even integer,  and  let $\lambda$ be a positive divisor of $q-1$.  
Set $n=(q^m-1)/\lambda$ and $h=\lfloor\frac{m}{2}\rfloor$.  
Let  $\delta$ be an integer such that  $ \frac{q^{h}-1}{\lambda}+1 \leq \delta \leq \frac{q^{\lfloor (2m-1)/3\rfloor+1}-1}{\lambda}$ and  $q\nmid \delta$.  Define $j_{\delta} = \lfloor \log_q (\lambda \delta) \rfloor - h$,  and write   the $q$-adic expansion of $\lambda \delta$ as $\lambda \delta=\sum\limits_{\ell=0}^{h + j_{\delta}} \delta_{\ell} q^{\ell}$. Let $r_{\delta}$ be   the smallest integer in $[-j_{\delta},  j_{\delta}]$ such that $\delta_{h + r_{\delta}} > 0$ and let 
$
\hat{\delta} = \sum\limits_{\ell = h + r_{\delta}}^{h + j_{\delta}} \delta_{\ell} q^{\ell} + \sum\limits_{\ell = h - r_{\delta}}^{h + j_{\delta}} \delta_{\ell} q^{\ell - (h- r_{\delta})}.
$
\begin{itemize}
\item 
 If $r_{\delta} \neq 0$, then
\begin{equation}\label{64}
d_B(\mathcal{C}_{(q,n, \delta)}) = \begin{cases}
\delta &\hbox{if }  \sum\limits_{\ell=0}^{h-j_{\delta}-1} \delta_{\ell} q^{\ell} > \sum\limits_{\ell=h - r_{\delta}}^{h + j_{\delta}} \delta_{\ell} q^{\ell - (h - r_{\delta})}, \\
\lfloor \frac{\hat{\delta}}{\lambda}\rfloor + 1 &\hbox{if }  \sum\limits_{\ell=0}^{h-j_{\delta}-1} \delta_{\ell} q^{\ell} \leq \sum\limits_{\ell=h - r_{\delta}}^{h + j_{\delta}} \delta_{\ell} q^{\ell - (h - r_{\delta})} \text{ and } \delta_{h - r_{\delta}}+\lambda -q  \neq \hat{\delta} \bmod \lambda, \\
\lfloor \frac{\hat{\delta}}{\lambda}\rfloor + 2 &\hbox{if }  \sum\limits_{\ell=0}^{h-j_{\delta}-1} \delta_{\ell} q^{\ell} \leq \sum\limits_{\ell=h - r_{\delta}}^{h + j_{\delta}} \delta_{\ell} q^{\ell - (h - r_{\delta})}\text{ and } \delta_{h - r_{\delta}}+\lambda -q  = \hat{\delta} \bmod \lambda.
\end{cases}
\end{equation}

 \item If $r_{\delta} = 0$, then
\begin{equation}\label{65}
d_B(\mathcal{C}_{(q,n,\delta)}) = \begin{cases}
\delta &\hbox{if } \sum\limits\limits_{\ell=0}^{h-j_{\delta}-1} \delta_{\ell} q^{\ell} \geq  \sum\limits_{\ell=h}^{h + j_{\delta}} \delta_{\ell} q^{\ell - h}, \\
 \frac{\hat{\delta}}{\lambda}  &\hbox{if } \sum\limits_{\ell=0}^{h-j_{\delta}-1} \delta_{\ell} q^{\ell} < \sum\limits_{\ell=h}^{h + j_{\delta}} \delta_{\ell} q^{\ell - h}\hbox{ and }\lambda\mid \hat{\delta},\\
\lfloor \frac{\hat{\delta}}{\lambda} \rfloor +1 &\hbox{if } \sum\limits_{\ell=0}^{h-j_{\delta}-1} \delta_{\ell} q^{\ell} < \sum\limits_{\ell=h}^{h + j_{\delta}} \delta_{\ell} q^{\ell - h}, \lambda\nmid \hat{\delta} \hbox{ and }  \delta_h+\lambda-q\neq \hat{\delta} \bmod \lambda,\\
\lfloor \frac{\hat{\delta}}{\lambda} \rfloor +2 &\hbox{if } \sum\limits_{\ell=0}^{h-j_{\delta}-1} \delta_{\ell} q^{\ell} < \sum\limits_{\ell=h}^{h + j_{\delta}} \delta_{\ell} q^{\ell - h}, \lambda\nmid \hat{\delta} \hbox{ and }\delta_h+\lambda-q=  \hat{\delta} \bmod \lambda. 
\end{cases}
\end{equation}
\end{itemize}
\end{theorem}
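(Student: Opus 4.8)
The plan is to follow the strategy of the odd-case proof of Theorem \ref{bosth}, reducing the computation of the Bose distance to locating the smallest coset leader modulo $n$ in the range $[\delta,n-1]$. By the discussion following equation (\ref{dimeq}) this smallest coset leader equals $d_B(\mathcal{C}_{(q,n,\delta)})$, and by Lemma \ref{cos2} an integer $a\in[\delta,n-1]$ is a coset leader modulo $n$ if and only if $\lambda a$ is a coset leader modulo $\lambda n=q^m-1$. Thus it suffices to find the smallest multiple of $\lambda$ that is at least $\lambda\delta$ and is a coset leader modulo $q^m-1$, and then divide by $\lambda$. Since $q\nmid\delta$ and $\gcd(q,\lambda)=1$ give $q\nmid\lambda\delta$, the integer $\lambda\delta$ is either a coset leader modulo $q^m-1$ or lies in $\mathcal{S}$; moreover, writing $\lambda\delta\in[q^{h+j_\delta},q^{h+j_\delta+1})$ so that $j_\delta$ plays the role of $k$, Theorem \ref{th2} together with Remark \ref{cor2} shows that $\lambda\delta\in\mathcal{S}\cup\mathcal{H}$ forces $\lambda\delta\in\mathcal{B}_{j_\delta}(r_\delta)$, the index being exactly $r_\delta$.

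For the case $r_\delta\neq 0$ I would argue exactly as in Theorem \ref{bosth}, replacing $\mathcal{A}$ by $\mathcal{B}$ and the reflected sum started at $h-r_\delta+1$ by the even-case reflected sum started at $h-r_\delta$. Here Lemma \ref{corr} guarantees that no element of $\mathcal{B}_{j_\delta}(r_\delta)$ lies in $\mathcal{H}$, since the palindrome form of $\mathcal{H}$ requires the first nonzero high digit at position $h$, i.e. $r_\delta=0$; hence $\mathcal{B}_{j_\delta}(r_\delta)\subseteq\mathcal{S}$. When $\sum_{\ell=0}^{h-j_\delta-1}\delta_\ell q^\ell>\sum_{\ell=h-r_\delta}^{h+j_\delta}\delta_\ell q^{\ell-(h-r_\delta)}$, Remark \ref{rm6} shows the necessary inequality for $\mathcal{B}_{j_\delta}(r_\delta)$-membership fails, so $\lambda\delta\notin\mathcal{S}\cup\mathcal{H}$ is a coset leader and $d_B=\delta$. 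Otherwise, imitating the argument of \cite[Theorem 6]{zheng2025}, every integer in $[\lambda\delta,\hat\delta]$ is shown to be a non-coset-leader modulo $q^m-1$, whence by Lemma \ref{cos2} no integer of $[\delta,\lfloor\hat\delta/\lambda\rfloor]$ is a coset leader modulo $n$. The next candidate is the least multiple of $\lambda$ exceeding $\hat\delta$, namely $\lambda(\lfloor\hat\delta/\lambda\rfloor+1)$; expanding $\hat\delta$ in its even palindrome form yields the equivalence $q\mid\lambda(\lfloor\hat\delta/\lambda\rfloor+1)\Leftrightarrow\delta_{h-r_\delta}+\lambda-q=\hat\delta\bmod\lambda$, which separates the two subcases $\lfloor\hat\delta/\lambda\rfloor+1$ and $\lfloor\hat\delta/\lambda\rfloor+2$, the latter skipping the $q$-divisible value and landing on $\lambda(\lfloor\hat\delta/\lambda\rfloor+2)$, verified to be a coset leader as in the first subcase.

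The case $r_\delta=0$ is where the even-case phenomenon appears, since now $\mathcal{B}_{j_\delta}(0)$ meets $\mathcal{H}$. By construction $\hat\delta=\sum_{\ell=h}^{h+j_\delta}\delta_\ell q^\ell+\sum_{\ell=h}^{h+j_\delta}\delta_\ell q^{\ell-h}$ is a perfect palindrome of the form in Lemma \ref{corr} with $a_0=\delta_h>0$ and top digit $\delta_{h+j_\delta}>0$, hence $\hat\delta\in\mathcal{H}$ is itself a coset leader modulo $q^m-1$. When $\sum_{\ell=0}^{h-j_\delta-1}\delta_\ell q^\ell\geq\sum_{\ell=h}^{h+j_\delta}\delta_\ell q^{\ell-h}$, the strict inequality makes $\lambda\delta\notin\mathcal{B}_{j_\delta}(0)$ a coset leader by Remark \ref{rm6}, while equality, combined with $r_\delta=0$ forcing $\delta_{h-j_\delta}=\cdots=\delta_{h-1}=0$, compels the bottom $h-j_\delta$ digits to match the top $j_\delta+1$ digits with zeros in between, so that $\lambda\delta=\hat\delta\in\mathcal{H}$ is a coset leader; either way $d_B=\delta$. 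When the inequality is strict in the other direction, the interval $[\lambda\delta,\hat\delta)$ consists of non-coset-leaders with $\hat\delta\in\mathcal{H}$ as its first coset leader: if $\lambda\mid\hat\delta$ then $\hat\delta$ is the sought multiple of $\lambda$ and $d_B=\hat\delta/\lambda$; if $\lambda\nmid\hat\delta$ then $\hat\delta$ does not descend to a coset leader modulo $n$, and the same $q$-divisibility test on $\lambda(\lfloor\hat\delta/\lambda\rfloor+1)$, now with $\delta_h$ in place of $\delta_{h-r_\delta}$, distinguishes $\lfloor\hat\delta/\lambda\rfloor+1$ from $\lfloor\hat\delta/\lambda\rfloor+2$.

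The main obstacle, as in the odd case, is the reflection lemma underlying the second group of subcases: establishing that the whole block $[\lambda\delta,\hat\delta]$, respectively $[\lambda\delta,\hat\delta)$ when $r_\delta=0$, consists of non-coset-leaders modulo $q^m-1$. Adapting \cite[Theorem 6]{zheng2025} is more delicate here because the decomposition in Theorem \ref{th2} mixes $\mathcal{S}$ with $\mathcal{H}$, so one must simultaneously control membership in $\mathcal{B}_{j_\delta}(i)$ and exclude spurious $\mathcal{H}$ coset leaders inside the block using Lemma \ref{corr}; the endpoint $\hat\delta$ must be shown to be the unique palindrome in the block, which is precisely what makes the $r_\delta=0$, $\lambda\mid\hat\delta$ subcase produce $d_B=\hat\delta/\lambda$ rather than $\lfloor\hat\delta/\lambda\rfloor+1$.
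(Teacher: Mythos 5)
Your proposal is correct and takes essentially the same route as the paper's own proof: the reduction via Lemma \ref{cos2} to finding the smallest multiple of $\lambda$ that is a coset leader modulo $q^m-1$, the case split $r_\delta\neq 0$ (argued as in Theorem \ref{bosth}) versus $r_\delta=0$, coset-leader detection through Theorem \ref{th2} with Remarks \ref{rm6}--\ref{cor2} and Lemma \ref{corr}, the adaptation of \cite[Theorem 6]{zheng2025} to show $[\lambda\delta,\hat\delta]$ (resp.\ $[\lambda\delta,\hat\delta)$ when $r_\delta=0$) contains no coset leaders, and the same mod-$q$ criterion $\delta_{h-r_\delta}+\lambda-q=\hat\delta\bmod\lambda$ separating the $+1$ and $+2$ subcases. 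Your explicit observations --- that $\mathcal{B}_{j_\delta}(i)\cap\mathcal{H}=\varnothing$ for $i\neq 0$, that equality in the $r_\delta=0$ case forces $\lambda\delta=\hat\delta\in\mathcal{H}$, and that $\hat\delta$ is the unique element of $\mathcal{H}$ in the block --- are precisely the points the paper leaves implicit, so the two arguments coincide in substance.
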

\begin{proof}
We can use an argument analogous to that in the proof of Theorem \ref{bosth} to conclude that \ref{64}) holds if $r_{\delta}\neq 0$.  Thus, we now only demonstrate equation (\ref{65}) by considering the following cases. 

 \textbf{Case 1.}
Suppose  that $\sum\limits_{\ell=0}^{h-j_{\delta}-1}\delta_\ell q^\ell
          \ge \sum\limits_{\ell=h}^{h+j_{\delta}}\delta_\ell q^{\ell-h}$.
If the inequality is strict, by Lemma~\ref{th2} and 
Remarks~\ref{rm6} -- \ref{cor2}, we have 
$\lambda\delta\notin\mathcal{S}\cup\mathcal{H}$; 
If equality holds, by Lemma~\ref{corr}, we 
$\lambda\delta\in\mathcal{H}$. Note that the assumptions   $q\nmid\delta$ and $\lambda \mid q-1$ imply $q\nmid \lambda \delta$. Therefore, 
in either case, $\lambda\delta$ is a coset leader modulo
$\lambda n$.
By Lemma~\ref{cos2}, it follows that $\delta$ is a coset leader of
modulo $n$. Therefore,
$d_B(\mathcal{C}_{(q,n,\delta)})=\delta$.

\textbf{Case 2.} 
Suppose that $\sum\limits_{\ell=0}^{h-j_{\delta}-1}\delta_\ell q^\ell<
    \sum\limits_{\ell=h}^{h+j_{\delta}}\delta_\ell  q^{\ell-h}$. We can  use an analogous argument as in the proof of \cite[Theorem 6]{zheng2025}  to show that each integer in $ [\lambda\delta, \hat{\delta})$ is not  a coset leader modulo $\lambda n$.   By Lemma \ref{cos2}, it follows that each integer $a\in \left[\delta, \lfloor \frac{\hat{\delta}}{\lambda}\rfloor \right)$ is not a coset leader modulo $n$. Then we distinguish the following two subcases:

\textit{Subcase 1.} Assume that $\lambda\mid \hat{\delta}$.  Using Lemma \ref{corr}, it is straightforward to verify that $\hat{\delta}\in \mathcal{H}$, and hence $\hat{\delta}$ is a coset leader modulo $\lambda n.$   
    It follows that $\frac{\hat{\delta}}{\lambda}$ is a coset leader modulo $n$. Therefore, we can obtain $d_B(\mathcal{C}_{(q,n,\delta)})=\frac{\hat{\delta}}{{\lambda}}$.  

\textit{Subcase 2.}
Assume that $\lambda \nmid \hat{\delta}$.  Then we have $\lambda\lfloor \frac{\hat{\delta}}{\lambda}\rfloor<\hat{\delta}$. Recalling that each integer in $[\lambda \delta, \hat{\delta} )$ is not a coset leader modulo $\lambda n$, this implies that $\lambda \lfloor \frac{\hat{\delta}}{\lambda}\rfloor$ is not a coset leader modulo $\lambda n$. By Lemma \ref{cos2}, it follows that $\lfloor \frac{\hat{\delta}}{\lambda}\rfloor$ is not a coset leader modulo $n$.  Then we can use an analogous argument as in the proof of Case 2 of Theorem \ref{bosth} to conclude that $d_B(\mathcal{C}_{(q,n,\delta)})=\lfloor \frac{\hat{\delta}}{\lambda}\rfloor + 1 
$ if $\delta_h+\lambda-q\neq \hat{\delta} \bmod \lambda$,
 and $d_B(\mathcal{C}_{(q,n,\delta)})=\lfloor \frac{\hat{\delta}}{\lambda}\rfloor + 2 
$ if  $\delta_h+\lambda-q=  \hat{\delta} \bmod \lambda$. 
\end{proof}

An $[n,k,d]$ linear code over $\mathbb{F}_q$ is called \emph{optimal} if there does not exist an $[n,k',d]$ linear code with $k' > k$, or an $[n,k,d']$ linear code with $d' > d$ over $\mathbb{F}_q$.
Using the formulas given in Theorems~\ref{odda} -- \ref{bosth2}, we can readily determine the dimension and Bose distance of narrow-sense BCH codes of length 
$(q^m-1)/\lambda$ for designed distances  $
2 \le \delta \le \frac{q^{\lfloor (2m-1)/3\rfloor+1}-1}{\lambda}$, where $\lambda$ is a positive divisor of $q-1$. 
  By comparing these codes with the \textit{Database}, we identify several BCH codes that are either optimal or have parameters matching those of the best-known linear codes.
But most of these codes arise in the primitive case $\lambda = 1$. See Tables~\uppercase\expandafter{\romannumeral 2}--\uppercase\expandafter{\romannumeral 3} in our previous work \cite{zheng2025} for examples of such primitive BCH codes. 

In addition, we list further examples of narrow-sense BCH codes of length $(q^m-1)/\lambda$ with $\lambda \neq 1$ in Table~\ref{table1}. The parameters of these codes are nearly optimal or best-known in the \textit{Database}. All parameters were verified by Magma, and for all BCH codes listed in Table~\ref{table1}, the Bose distance $d_B$ coincides with the minimum distance $d$. The Python programs implementing these formulas, together with additional examples, are available at \url{https://zheng-run.github.io}.

\begin{table}[htbp]
\centering
\caption{Examples of BCH code $\mathcal{C}_{(q,(q^m-1)/\lambda,\delta)}$ with $\lambda\neq 1$} 
\label{table1}
\centering
\begin{tabular}{|c|c|c|c|c|c|c|c|c|}
\hline
$q$&	$m$&	$\lambda$&	$n$&	$\delta$&	$k$&	$d_B$& 	Optimality\\ \hline
3&	4&	2&	40&	2&	36&	2&	$d_{\text{optimal}}=3$\\ \hline
3&	4&	2&	40&	3 -- 4&	32&	4	& $d_{\text{optimal}}=5$\\ \hline
3&	4&	2&	40&	5&	28&	5&	$d_{\text{best}}=6$\\ \hline
3&	4&	2&	40&	6 -- 7&	26&	7	&Best known \\ \hline
3&	4&	2&	40&	8&	22&	8&	$d_{\text{best}}=9$\\ \hline
3&	5&	2&	121&	6 -- 7&	101&	$7$&	$d_{\text{best}}=8$\\ \hline
3&	5&	2&	121&	9 -- 10&	91&	$10$	&$d_{\text{best}}=11$\\ \hline
4&	4&	3&	85&	4 -- 5 &	73 &	$5$	&$d_{\text{best}}=6$\\ \hline
\end{tabular} 
\end{table}

 \section{BCH codes with designed distance \texorpdfstring{$\delta  $ } \  satisfying \texorpdfstring{$\lambda \delta = aq^{h+k}+b$. } \ }\label{illustration} 

 As an illustration of   Theorems~\ref{odda} -- \ref{bosth2} established in the previous two sections, we  derive the following two corollaries in this section. These corollaries present the dimension and Bose distance of narrow-sense BCH code
 $\mathcal{C}_{(q,n,\delta)}$ for 
 $\delta$ satisfying  $\lambda \delta = a q^{h+k} + b$, where $k,a,b$ are integers such that $m - 2h \leq k \leq \lfloor (2m - 1)/3 \rfloor - h$, $1 \leq a \leq q - 1$, $\lambda \leq b \leq q^{m - h - k}$  and $q \nmid b$.
 
\begin{Corollary}\label{example1}
   Let $m\geq 5$ be an odd integer. Let $\delta$ be an integer such that   $\lambda\delta=a q^{h+k}+b$ for some  integers $k$, $a$, $b$ with 
$1 \le k \le \lfloor (2m-1)/3 \rfloor - h$, 
$1 \le a \le q-1$, 
$\lambda \le b \le q^{h-k+1}$ and $q \nmid b$.
 Then 
  \begin{equation}\label{cor51}
    d_B(\mathcal{C}_{(q,n,\delta)})=
    \begin{cases}
     \delta& \hbox{if }  b> a q^{2k-1},\\
 \lfloor \frac{aq^{h+k}+aq^{2k-1}}{\lambda}\rfloor+1 
 &\hbox{if }   b\leq a q^{2k-1},
    \end{cases}
    \end{equation}
  and 
 \begin{equation}\label{cor52}
    \mathrm{dim}(\mathcal{C}_{(q,n,\delta)})=
    \begin{cases}
      n-mN(\delta)+  \frac{q-1}{\lambda}ma^2q^{2k-3}\left[(q-1)k+1\right]& \hbox{if }  b\geq  a q^{2k-1}+\lambda,\\
 n-\frac{q-1}{\lambda}m a q^{h+k-1}+ \frac{q-1}{\lambda}ma  q^{2k-3}\left[ a(q-1)k+a-q 
 \right]
 &\hbox{if }   b< a q^{2k-1}+\lambda.
    \end{cases}
    \end{equation}   
\end{Corollary}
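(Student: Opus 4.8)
The plan is to specialize the general formulas of Theorems~\ref{bosth} and~\ref{odda} to the prescribed shape $\lambda\delta=aq^{h+k}+b$. The first step, shared by both halves, is to read off the relevant $q$-adic expansions. Since $1\le a\le q-1$ and $\lambda\le b<q^{h-k+1}$ with $q\nmid b$, the expansion of $\lambda\delta$ carries the digit $a$ at position $h+k$, zeros in positions $h-k+1,\dots,h+k-1$, and the digits of $b$ in positions $0,\dots,h-k$; the expansion of $\lambda(\delta-1)=aq^{h+k}+(b-\lambda)$ has the same shape with $b-\lambda$ occupying the low block. In particular the unique nonzero digit inside the window $[h-k+1,h+k]$ sits at position $h+k$, so $j_\delta=k_\delta=k$ and $r_\delta=s_\delta=k$.

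For the Bose distance I would invoke Theorem~\ref{bosth}. With $r_\delta=k$ the quantity $\hat\delta$ collapses to $aq^{h+k}+aq^{2k-1}$, and because $m$ is odd one has $m-h=h+1$, so the two differently written thresholds in Theorem~\ref{bosth} coincide and the dichotomy reduces to comparing $\sum_{\ell=0}^{h-k}\delta_\ell q^\ell=b$ with $aq^{2k-1}$. This yields the two branches of~\eqref{cor51}: $d_B=\delta$ when $b>aq^{2k-1}$, and $d_B=\lfloor\hat\delta/\lambda\rfloor+1$ when $b\le aq^{2k-1}$. The only delicate point is excluding the ``$+2$'' alternative: here $\delta_{h-r_\delta+1}=\delta_{h-k+1}=0$, so the test value $\delta_{h-k+1}+\lambda-q=\lambda-q$ is strictly negative (as $\lambda\le q-1$), while $\hat\delta\bmod\lambda\ge0$; the equality $\delta_{h-k+1}+\lambda-q=\hat\delta\bmod\lambda$ is therefore impossible and only the ``$+1$'' value survives.

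For the dimension I would feed the parameters into~\eqref{th31} through $f(\delta)$ in~\eqref{ff}. From the expansion of $\lambda(\delta-1)$ one reads $w_\delta=aq^{h+k}$, $\mathcal{T}_i(\delta)=\{t:q^{k-i}\le t<aq^{k-i},\ q\nmid t\}$, and $\mu(\delta)=\min\{b-\lambda,\,aq^{2k-1}\}$; this minimum is exactly what produces the split according to whether $b\ge aq^{2k-1}+\lambda$ or $b<aq^{2k-1}+\lambda$. The double sum over the $\mathcal{T}_i(\delta)$ is evaluated termwise by the first identity of Lemma~\ref{le5}: the $2k-2$ interior indices $i\in[-k+2,k-1]$ each give $\tfrac1{2\lambda}(a^2-1)(q-1)^2q^{2k-3}$ and the two boundary indices $i=k$ and $i=-k+1$ each give $\tfrac1{2\lambda}a(a-1)(q-1)q^{2k-2}$. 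Everywhere I would use $q\equiv1\pmod\lambda$ to replace a floor difference whose two numerators share a residue mod $\lambda$ by their exact quotient (an integer since $\lambda\mid q-1$). In Case~A ($\mu=aq^{2k-1}$) the surviving floor term equals $\tfrac{a(q-1)q^{2k-2}}{\lambda}$, and summing all contributions with the leading term of~\eqref{ff} gives $f(\delta)=\tfrac{(q-1)a^2q^{2k-3}}{\lambda}\bigl[(q-1)k+1\bigr]$, which is the first line of~\eqref{cor52}.

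The genuinely subtle step is Case~B, where $\mu(\delta)=b-\lambda$ and the floor term now depends on the low digits of $b$ and cannot be simplified on its own; it must instead be amalgamated with $N(\delta)$. The mechanism is that the integrality $\lambda\mid aq^{h+k}+b$ forces $\lfloor\tfrac{(b-\lambda)+aq^{h+k}}{\lambda}\rfloor=\delta-1$, while Lemma~\ref{le1} (taken with $y=0$) gives $N(\delta)=(\delta-1)-\lfloor\tfrac{aq^{h+k-1}+\lfloor(b-\lambda)/q\rfloor}{\lambda}\rfloor$. Hence, writing $B'=\lfloor(b-\lambda)/q\rfloor$, the combination $N(\delta)$ minus the floor term telescopes to $\lfloor\tfrac{B'+aq^{h+k}}{\lambda}\rfloor-\lfloor\tfrac{B'+aq^{h+k-1}}{\lambda}\rfloor$, whose numerators agree mod $\lambda$, and the $q\equiv1$ reduction evaluates it to $\tfrac{a(q-1)q^{h+k-1}}{\lambda}$. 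Substituting into $\dim=n-mN(\delta)+mf(\delta)$ turns $-mN(\delta)$ into $-\tfrac{q-1}{\lambda}maq^{h+k-1}$ and recovers the second line of~\eqref{cor52}. I expect this amalgamation to be the main obstacle, as it is the sole place where the low digits of $b$ truly enter and where the hidden constraint $\lambda\mid a+b$ must be used.
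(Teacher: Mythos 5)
Your proposal is correct and follows essentially the same route as the paper's proof: specializing Theorem~\ref{bosth} with $j_\delta=r_\delta=k$ and $\hat\delta=aq^{h+k}+aq^{2k-1}$ for the Bose distance, and feeding $w_\delta=aq^{h+k}$, $\mu(\delta)=\min\{b-\lambda,\,aq^{2k-1}\}$ and $\mathcal{T}_i(\delta)$ into \eqref{ff} via Lemma~\ref{le5}, with the same telescoping of $N(\delta)$ against the floor term to get $\frac{aq^{h+k-1}(q-1)}{\lambda}$ in the case $b<aq^{2k-1}+\lambda$. Your explicit exclusion of the ``$+2$'' branch (since $\delta_{h-k+1}+\lambda-q=\lambda-q<0\le\hat\delta\bmod\lambda$) is a detail the paper leaves implicit, and it is valid.
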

\begin{proof}
Note that $q\nmid b$ implies that $q\nmid \lambda \delta$. 
Since  $\lambda$ is a divisor of  $q-1$, it follows that $q\nmid \delta$.   Then 
 by applying Theorem \ref{bosth}, we can directly derive equation (\ref{cor51}).
 
  We now demonstrate that (\ref{cor52}) holds.
It is clear that 
$V(\lambda(\delta-1))$ has the form 
\begin{equation}\notag
    (\mathbf{0}_{h-k}, a, \mathbf{0}_{2k_{}-1},\delta_{h-k},\ldots,\delta_0)
\end{equation}
with $\sum\limits_{\ell=0}^{h-k}\delta_{\ell}q^{\ell}=b-\lambda$.  
In this case, we have 
\begin{itemize}
    \item $s_{\delta}=k_{\delta}=k$; 
    \item $w_{\delta}=aq^{h+k}$;
    \item $\mu(\delta)=\min\left\{b-\lambda, a q^{2k-1}\right\}$;
    \item $\mathcal{T}_i(\delta)=\left[q^{k-i},a q^{k-i}\right]\cap \{t\in \mathbb{Z}: q\nmid t\}$ for each integer $i\in [-k+1,k]$.
\end{itemize}
Then applying Lemma  \ref{le5}, we can   obtain  
\begin{equation}\notag
    \sum\limits_{t\in \mathcal{T}_i(\delta)} \left[\lfloor \frac{\lfloor tq^{2i}\rfloor+t}{\lambda}\rfloor -\lfloor \frac{\lfloor tq^{2i-1}\rfloor+t}{\lambda}\rfloor\right]  = 
    \begin{cases}
        \frac{1}{2\lambda}(a^2-1)(q-1)^2q^{2k-3}& \hbox{if }-k+2 \leq i\leq    k-1,\\
        \frac{1}{2\lambda} a(a-1)(q-1)q^{2k-2}& \hbox{if }i=k \hbox{ or }-k+1.
        \end{cases}       
\end{equation}
It follows that 
\begin{equation}\notag
\begin{split}
\sum\limits_{i=-k_{\delta}+1}^{k_{\delta}}\sum\limits_{t\in \mathcal{T}_i(\delta)}N(tq^{2i-1}+1)&= \sum\limits_{i=-k+1}^{k}\sum\limits_{t\in \mathcal{T}_i(\delta)}N(tq^{2i-1}+1) \\
&=\frac{1}{\lambda}\left[a (a-1)(q-1)q^{2k-2}+(a^2-1)(k-1)(q-1)^2q^{2k-3}\right].
    \end{split}
\end{equation}
By substituting the values of $k_{\delta}$ and  $\mu(\delta)$, and the above expression into (\ref{ff}), we obtain 
\begin{equation}\notag
     f(\delta)=
     \begin{cases}        
  \frac{q-1}{\lambda}a^2q^{2k-3}\left[k(q-1)+1\right]& \hbox{if }b\geq  a q^{2k-1}+\lambda,\\
\frac{q-1}{\lambda}  a  q^{2k-3}\left[ ak(q-1)+a-q 
 \right]+\frac{b+a-\lambda}{\lambda} - \lfloor\frac{\lfloor (b-\lambda+aq)/q\rfloor}{\lambda} \rfloor  &\hbox{if } b< a q^{2k-1}+\lambda.
     \end{cases}
 \end{equation}
 Furthermore, notice that \begin{equation}\notag\begin{split}   
     N(\delta)- \left[\frac{b+a-\lambda}{\lambda} - \lfloor\frac{\lfloor (b-\lambda+aq)/q\rfloor}{\lambda} \rfloor\right]& = \frac{aq^{h+k}+b-\lambda}{\lambda} -\lfloor \frac{aq^{h+k}+b-\lambda}{\lambda q}\rfloor -\frac{b+a-\lambda}{\lambda}+\lfloor\frac{\lfloor (b-\lambda+aq)/q\rfloor}{\lambda} \rfloor\\
     &= \frac{aq^{h+k-1}(q-1)}{\lambda} -\lfloor \frac{aq+b-\lambda}{\lambda q}\rfloor + \lfloor\frac{\lfloor (b-\lambda+aq)/q\rfloor}{\lambda} \rfloor\\
     &= \frac{aq^{h+k-1}(q-1)}{\lambda}. 
     \end{split}
 \end{equation}
 Therefore,  we can apply Theorem  \ref{odda} to derive the desired equality in (\ref{cor52}).   
\end{proof}

\begin{Remark}
In Corollary~\ref{example1}, we assume $q\nmid b$. Since $\lambda$ is a divisor of $q-1$, we have $\gcd(q,\lambda)=1$. From $\lambda \delta = a q^{h+k} + b$, it follows that $q\mid b$ if and only if $q\mid \lambda\delta$, which is further equivalent to $q\mid \delta$. In this case,
we have $
\mathcal{C}_{(q,n,\delta)} = \mathcal{C}_{(q,n,\delta+1)},
$
while clearly $q\nmid (\delta+1)$. Hence it suffices to consider the case $q\nmid b$. Similarly, in Corollary~\ref{example2} below we also impose the condition $q\nmid b$.
\end{Remark}

\begin{Corollary}\label{example2}
 Let $m\geq 4$ be an even integer. Let $\delta$ be an integer such that $\lambda\delta=a q^{h+k}+b$ for some integers $k, a, b $ with    $0 \leq k\leq \lfloor (2m-1)/3\rfloor-h$,   
  $1\leq a \leq q-1$,     $\lambda\leq b\leq q^{h-k} $ and  $q\nmid b$. 
 \begin{itemize}
\item If $k=0$, then 
\begin{equation}\label{db1}
    d_B(\mathcal{C}_{(q,n,\delta)}) =\begin{cases}
        \delta&\hbox{if } b\geq a,\\
        \frac{aq^h+a}{\lambda}&\hbox{if }b<a \hbox{ and }\lambda \mid 2a,\\
      \lfloor  \frac{aq^h+a}{\lambda}\rfloor+1&\hbox{if } b<a,\lambda\nmid 2a \hbox{ and }a+\lambda-q\neq 2a \bmod \lambda,\\
      \lfloor  \frac{aq^h+a}{\lambda}\rfloor+2&\hbox{if } b<a,\lambda\nmid 2a \hbox{ and }a+\lambda-q = 2a \bmod \lambda,
    \end{cases}
\end{equation}
and 
\begin{equation}\label{69}
   \mathrm{dim}(\mathcal{C}_{(q,n,\delta)})\!=\!
    \begin{cases}
      n\!-\!mN(\delta)\!+\!m\left[\frac{a+b-\lambda}{\lambda}\!-\!\lfloor\frac{2a}{\lambda}\rfloor+\sum\limits_{t=1}^a\left[\lfloor \frac{2t}{\lambda}\rfloor\!-\![\lfloor \frac{t}{\lambda}\rfloor\right]\right] \!-\!\frac{m(a-1)\left(3+(-1)^{\lambda}\right)}{4\lambda}&\hspace{-6pt} \hbox{if }  b<a+\lambda, \\
n\!-\!m N(\delta)+m\sum\limits_{t=1}^a\left[\lfloor\frac{2t}{\lambda}\rfloor\!-\!\lfloor\frac{t}{\lambda}\rfloor\right]-\frac{m(a-1)\left(3+(-1)^{\lambda}\right)}{4\lambda}
 &\hspace{-6pt}\hbox{if } b\geq a+\lambda \hbox{ and }  \lambda\nmid 2a,\\
  n\!-\!m N(\delta)\!+\!m\sum\limits_{t=1}^a\left[\lfloor\frac{2t}{\lambda}\rfloor\!-\!\lfloor\frac{t}{\lambda}\rfloor\right]-\frac{m(a-1)\left(3+(-1)^{\lambda}\right)}{4\lambda}\!-\!\frac{m}{2}
 &\hspace{-6pt}\hbox{if } b\geq a+\lambda \hbox{ and }  \lambda\mid 2a.     
    \end{cases}
\end{equation}
 
     \item If $k\geq 1$, then 
\begin{equation}\label{db2}
    d_B(\mathcal{C}_{(q,n,\delta)}) =\begin{cases}
        \delta&\hbox{if } b> aq^{2k},\\
        \lfloor\frac{aq^{h+k}+aq^{2k}}{\lambda}\rfloor+1&\hbox{if }b\leq aq^{2k} \hbox{ and }a+\lambda-q \neq 2a \bmod \lambda,\\
      \lfloor\frac{aq^{h+k}+aq^{2k}}{\lambda}\rfloor+2&\hbox{if }b\leq aq^{2k} \hbox{ and }a+\lambda-q = 2a \bmod \lambda,\\
    \end{cases}
\end{equation}
  and    
\begin{equation}\label{cor5e1}
 \mathrm{dim}(\mathcal{C}_{(q,n,\delta)})=
  \begin{cases}
        n-mN(\delta)+m\frac{q-1}{\lambda}a^2q^{2k-2}\left[\left(k-\frac{1}{2}\right)(q-1)+q\right] \\ 
        \quad \quad \quad \hbox{if }  b> aq^{2k} + \lambda \hbox{ and }\lambda \hbox{ is odd},\\
        n-mN(\delta)+m\frac{q-1}{\lambda}a^2q^{2k-2}\left[\left(k-\frac{1}{2}\right)(q-1)+q\right]-\frac{m(q-1)(q^{k-1}-1)}{2\lambda}\\
        \quad \quad \quad \hbox{if }  b> aq^{2k} + \lambda \hbox{ and }\lambda \hbox{ is even},\\
n+m\frac{q-1}{\lambda}aq^{2k-2}   \left[a(k-\frac{1}{2})(q-1)+(a-1)q\right] -\frac{maq^{h+k-1}(q-1)}{\lambda}
\\
 \quad \quad \quad \hbox{if } b\leq  aq^{2k}+\lambda
   \hbox{ and } \lambda \hbox{ is odd,} \\
\textstyle{n+m\frac{q-1}{\lambda}aq^{2k-2}   \left[a(k-\frac{1}{2})(q-1)+(a-1)q\right]}
\textstyle{-\frac{maq^{h+k-1}(q-1)}{\lambda}-\frac{m(q-1)(q^{k-1}-1)}{2\lambda}}
\\
 \quad \quad \quad\hbox{if } b\leq  aq^{2k}+\lambda
   \hbox{ and } \lambda \hbox{ is even.}   \end{cases}
\end{equation}

 \end{itemize}
\end{Corollary}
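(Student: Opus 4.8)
The plan is to mirror the proof of Corollary \ref{example1}, but working with the even-$m$ machinery: Theorem \ref{bosth2} for the Bose distance, and the functions $\widetilde{f}(\delta)$, $g(\delta)$, $\tau(\delta)$ of (\ref{wf})--(\ref{tau}) together with Theorem \ref{odda} for the dimension. As a common starting point I would record that $q\nmid b$ forces $q\nmid\lambda\delta$, and since $\lambda\mid q-1$ gives $\gcd(q,\lambda)=1$, also $q\nmid\delta$; this legitimizes every appeal to Lemma \ref{cos2} and every reduction using $q\equiv 1\pmod\lambda$. From $\lambda\delta=aq^{h+k}+b$ with $1\le a\le q-1$ and $b<q^{h-k}$, the top digit sits at position $h+k$ and $b$ fills the bottom positions, so $j_\delta=k$, and $r_\delta=k$ when $k\ge 1$ while $r_\delta=0$ when $k=0$.

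For the Bose distance I would substitute these data into Theorem \ref{bosth2}. This yields $\hat\delta=aq^{h+k}+aq^{2k}$ for $k\ge 1$ and $\hat\delta=aq^{h}+a$ for $k=0$; the threshold comparison $\sum_{\ell=0}^{h-j_\delta-1}\delta_\ell q^\ell$ versus $\sum_{\ell=h-r_\delta}^{h+j_\delta}\delta_\ell q^{\ell-(h-r_\delta)}$ collapses to $b$ versus $aq^{2k}$ (resp.\ $b$ versus $a$), and $q\equiv 1\pmod\lambda$ gives $\hat\delta\bmod\lambda=2a\bmod\lambda$ and $\lambda\mid\hat\delta\iff\lambda\mid 2a$. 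Feeding this into (\ref{64})--(\ref{65}) produces (\ref{db2}) and (\ref{db1}). In the case $k=0$ the lowest digit of $\hat\delta$ equals $\delta_h=a$, which is exactly what turns the boundary test $\delta_h+\lambda-q=\hat\delta\bmod\lambda$ into the condition $a+\lambda-q=2a\bmod\lambda$ displayed in (\ref{db1}); the $r_\delta=0$ split according to $\lambda\mid\hat\delta$ or $\lambda\nmid\hat\delta$ accounts for the $\hat\delta/\lambda$, $\lfloor\hat\delta/\lambda\rfloor+1$, $\lfloor\hat\delta/\lambda\rfloor+2$ branches.

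For the dimension I would expand $\lambda(\delta-1)=aq^{h+k}+(b-\lambda)$, read off $V(\lambda(\delta-1))$, and extract $k_\delta=k$, $s_\delta=k$ (for $k\ge 1$; $s_\delta=0$ with $\delta_h=a$ when $k=0$), $w_\delta=aq^{h+k}$, $\widetilde{\mu}(\delta)=\min\{b-\lambda,\,aq^{2k}\}$, $\phi(\delta)=aq^{k}-1$, and $\mathcal{T}_i(\delta)=[q^{k-i},aq^{k-i})\cap\{t:q\nmid t\}$ directly from the definitions. Since $\delta_h=0$ for $k\ge 1$, the first clause of (\ref{tau}) fails and $\tau(\delta)=0$; for $k=0$ one gets $\tau(\delta)=1$ precisely when $a\le b-\lambda$ and $\lambda\mid 2a$. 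The core is the double sum in (\ref{wf}): I would partition the index into $i=\pm k$, $i\in[-k+1,k-1]\setminus\{0\}$, and $i=0$, evaluate the first two ranges with (\ref{1.5}) of Lemma \ref{le5} and the central term with Lemma \ref{lema12}, then add the closed-form leading terms and the two floor terms; a shorter parallel computation handles $g(\delta)$ via (\ref{g}) using $\lfloor\phi(\delta)/q\rfloor=aq^{k-1}-1$. For $k=0$ the double sum degenerates to $\sum_{t=1}^{a-1}[\lfloor 2t/\lambda\rfloor-\lfloor t/\lambda\rfloor]$, matching the middle branch of $\widetilde{f}$, which is summed directly. I would then verify the bookkeeping identity that $N(\delta)$ minus the floor contribution equals $\tfrac{aq^{h+k-1}(q-1)}{\lambda}$, exactly as in Corollary \ref{example1}, and substitute $\widetilde{f}(\delta)$ and $g(\delta)$ into (\ref{th311}) to obtain (\ref{cor5e1}) and (\ref{69}).

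The delicate steps will be twofold. First, the central index $i=0$ must be treated by Lemma \ref{lema12} rather than Lemma \ref{le5}; its extra summand $\tfrac{3+(-1)^{\lambda}}{4\lambda}(a-1)(q-1)q^{k-1}$ is precisely what forces every dimension formula to split into the ``$\lambda$ odd'' and ``$\lambda$ even'' branches, so the parity bookkeeping has to be tracked consistently through $\widetilde{f}$, $g$, and $\tau$. Second, I expect the main obstacle to be keeping the numerous threshold and floor/mod subcases of Theorem \ref{bosth2} aligned with the displayed branches of (\ref{db1})--(\ref{db2})---in particular, correctly identifying which $q$-adic digit enters the boundary test $\delta_{h-r_\delta}+\lambda-q=\hat\delta\bmod\lambda$ in each of the two regimes $k=0$ and $k\ge 1$, since the position of the lowest nonzero digit of $\hat\delta$ (and hence whether the ``$+2$'' branch can be triggered at all) differs between them.
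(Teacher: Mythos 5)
Your proposal follows essentially the same route as the paper's own proof: the Bose-distance formulas come from substituting $j_{\delta}=k$, $r_{\delta}=k$ (resp.\ $r_{\delta}=0$ when $k=0$) and $\hat{\delta}=aq^{h+k}+aq^{2k}$ into Theorem~\ref{bosth2}, while the dimension formulas come from extracting $k_{\delta}=s_{\delta}=k$, $w_{\delta}=aq^{h+k}$, $\widetilde{\mu}(\delta)=\min\{b-\lambda,\,aq^{2k}\}$, $\phi(\delta)=aq^{k}-1$, $\tau(\delta)$ and $\mathcal{T}_i(\delta)$, evaluating the double sum via Lemma~\ref{le5} for $i\neq 0$ and Lemma~\ref{lema12} for $i=0$ (which is indeed the source of the parity split in $\lambda$), and then feeding $\widetilde{f}(\delta)$ and $g(\delta)$ into Theorem~\ref{odda}, exactly as the paper does, including the final bookkeeping identity absorbing the floor terms into $N(\delta)$. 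Your flagged delicate points coincide with the places where the paper is terse, so the plan is sound and essentially identical to the published argument.
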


\begin{proof}
By Theorem \ref{bosth2}, it is easy to conclude that (\ref{db1}) holds if $k=0$, and (\ref{db2}) holds if $k\geq 1$. Therefore, we only demonstrate that (\ref{69}) holds if $k=0$, and   (\ref{cor5e1}) holds if $k\geq 1$.

    If $k=0$, then we have $\lambda(\delta-1) = a q^h+b-\lambda$ with  $0\leq b-\lambda<q^{h-1}$. In this case, we have  
    \begin{itemize}
   \item  $\frac{q^h-1}{\lambda}+1<\delta\leq \frac{q^{h+1}-1}{\lambda}+1$;
   \item $\delta_h=a$;
        \item $\widetilde{\mu}(
    \delta)=\min\{b-\lambda, a\}$;
    \item $w_{\delta}=aq^h$;
    \item $\tau(\delta)=1$ if $a \leq b-\lambda$ and $\lambda\mid 2a$, and $\tau(\delta)=0$ if otherwise. 
    \end{itemize}
By substituting   the above values  for the corresponding  case into equations (\ref{wf}) and (\ref{g}), we derive  \begin{equation}\notag
    \widetilde{f}(\delta)=\begin{cases}
    \frac{a+b-\lambda}{\lambda}-\lfloor\frac{2a}{\lambda}\rfloor+ \sum\limits_{t=1}^a\left[ \lfloor \frac{2t}{\lambda}\rfloor -  \lfloor \frac{t}{\lambda}\rfloor \right]& \hbox{if } b<a+\lambda,\\
   \sum\limits_{t=1}^a\left[ \lfloor \frac{2t}{\lambda}\rfloor -  \lfloor \frac{t}{\lambda}\rfloor \right]& \hbox{if }  b\geq  a+\lambda, 
    \end{cases}
\end{equation}
and 
\begin{equation}\notag
g(\delta)=\begin{cases}
    \frac{(a-1)(3+(-1)^{\lambda})}{2\lambda}+1& \hbox{if }b\geq a+\lambda \hbox{ and } \lambda\mid  2a,\\
  \frac{(a-1)(3+(-1)^{\lambda})}{2\lambda}& \hbox{otherwise}. 
    \end{cases}
\end{equation}
Then by Theorem \ref{odda}, it follows that (\ref{69})  holds. 

If $k\geq 1$, then we have  $k_{\delta}=k$ and $V(\lambda(\delta -1))$ has the form 
\begin{equation}\notag
    (\mathbf{0}_{h-k-1}, a, \mathbf{0}_{2k}, \delta_{h-k-1}, \ldots,\delta_0)
\end{equation}with $\sum\limits_{\ell=0}^{h-k}\delta_{\ell}q^{\ell}=b-\lambda$.
In this case, we have 
\begin{itemize}
\item $w_{\delta}=aq^h$;
    \item $s_{\delta}=k_{\delta}=k$;
    \item $\tau(\delta)=0$;
    \item $\widetilde{\mu}(\delta)=\min\left\{b-\lambda, a q^{2k}\right\}$;
    \item $\phi(\delta)=aq^k-1$;
   \item $\mathcal{T}_i(\delta)=\left[q^{k-i},a q^{k-i}-1\right]\cap \{t\in \mathbb{Z}: q\nmid t\}$ for each integer $i\in [-k,k]$.
\end{itemize}
Then applying Lemmas \ref{lema12} and \ref{le5}, we obtain 
\begin{equation}\notag
    \sum\limits_{t\in \mathcal{T}_i(\delta)} \left[ \lfloor\frac{\lfloor tq^{2i}\rfloor+t}{\lambda}\rfloor -  \lfloor\frac{\lfloor tq^{2i-1}\rfloor+t}{\lambda}\rfloor\right]= 
    \begin{cases}
         \frac{1}{2\lambda}(a^2-1)(q-1)^2q^{2k-2}  &\hbox{if } -k+1\leq i \leq  k-1 \hbox{ and } i\neq 0,\\
         \frac{1}{2\lambda}a(a-1)(q-1)q^{2k-1} &\hbox{if } i=k \hbox{ or } -k,\\
         {\left(\begin{aligned}
             &\textstyle{\frac{1}{2\lambda}(a^2-1)(q-1)^2q^{2k-2}}\\&+\textstyle{\frac{(3+(-1)^{\lambda})}{4\lambda}(a-1)(q-1)q^{k-1}}
         \end{aligned}\right)} & \hbox{if }i=0.
     \end{cases}      
\end{equation}
It follows that 
\begin{equation}\notag
\begin{split}
\sum\limits_{i=-k_{\delta}}^{k_{\delta}}\sum\limits_{t\in \mathcal{T}_i(\delta)}\left[ \lfloor\frac{\lfloor tq^{2i}\rfloor+t}{\lambda}\rfloor -  \lfloor\frac{\lfloor tq^{2i-1}\rfloor+t}{\lambda}\rfloor\right]=&
   \frac{1}{\lambda}(k-\frac{1}{2})(a^2-1)(q-1)^2q^{2k-2}+\frac{1}{\lambda}a(a-1)(q-1)q^{2k-1}\\
   & + \frac{(3+(-1)^{\lambda})(a-1)q^{k-1}(q-1)}{4\lambda}.
    \end{split}
\end{equation}
By substituting  $k_{\delta}$, $\tau(\delta)$ $w_{\delta}$, $\widetilde{\mu}(\delta)$, $\phi(\delta)$ and the above expression into (\ref{wf}) and (\ref{g}), we obtain   
\begin{equation}\notag
    \widetilde{f}(\delta)=\begin{cases}
        \frac{q-1}{\lambda}a^2q^{2k-2}\left[\left(k-\frac{1}{2}\right)(q-1)+q\right]+\frac{q-1}{2\lambda}aq^{k-1}& \hbox{if }  b> aq^{2k} + \lambda \hbox{ and }\lambda \hbox{ is odd},\\
        \frac{q-1}{\lambda}a^2q^{2k-2}\left[\left(k-\frac{1}{2}\right)(q-1)+q\right]+\frac{q-1}{2\lambda}\left[(2a-1)q^{k-1}+1\right]& \hbox{if }  b> aq^{2k} + \lambda \hbox{ and }\lambda \hbox{ is even},\\
{\left(\begin{aligned}
&\textstyle\frac{q-1}{\lambda}aq^{2k-2}   \left[a(k-\textstyle\frac{1}{2})(q-1)+(a-1)q\right] \\\
&+ \textstyle\frac{q-1}{2\lambda}aq^{k-1}+\frac{a+b-\lambda}{\lambda}- \lfloor\textstyle \frac{\lfloor(a+bq-\lambda)/q\rfloor}{\lambda} \rfloor\end{aligned}\right)}
&\hbox{if } b\leq  aq^{2k}+\lambda
   \hbox{ and } \lambda \hbox{ is odd,} \\
{\left(\begin{aligned}
&\textstyle\frac{q-1}{\lambda}aq^{2k-2}   \left[a(k-\textstyle\frac{1}{2})(q-1)+(a-1)q\right]+\textstyle\frac{a+b-\lambda}{\lambda}\\
&+\textstyle\frac{q-1}{2\lambda}\left[(2a-1)q^{k-1}+1\right]- \lfloor \textstyle\frac{\lfloor(a+bq-\lambda)/q\rfloor}{\lambda} \rfloor\end{aligned}\right)}
&\hbox{if } b\leq  aq^{2k}+\lambda
   \hbox{ and } \lambda \hbox{ is even,}   \end{cases}
\end{equation}
and 
\begin{equation}\notag
    g(\delta)=\begin{cases} 
    \frac{a(q-1)q^{k-1}}{\lambda},& \hbox{if $\lambda$ is odd,}\\
    \frac{2a(q-1)q^{k-1}}{\lambda},& \hbox{if $\lambda$ is even.}
    \end{cases}
\end{equation}
By Theorem \ref{odda}, it follows that (\ref{cor5e1}) holds. 
\end{proof}

\section{Some non-narrow-sense BCH codes of length  \texorpdfstring{$ (q^m-1)/\lambda $} \ }\label{non}
\linenumbers
In the previous sections, we study the parameters of narrow-sense BCH codes of length $n=(q^m-1)/\lambda$. For the dimension of narrow-sense 
BCH codes $\mathcal{C}_{(q,n,\delta)}$, the
equality in \eqref{dimeq} allows us to compute $\dim(\mathcal{C}_{(q,n,\delta)})$ by 
determining the coset leaders modulo $n$ within the range $[1,\delta-1]$ and the sizes of their corresponding $q$-cyclotomic cosets.
However, this approach cannot be directly applied to non-narrow-sense BCH codes $\mathcal{C}_{(q,n,\delta,b)}$ with $b \ge 2$, since we do not have an analogue of \eqref{dimeq} for these codes. In general, for $b\geq 2$, we do not have 
\begin{equation*}
   \dim(\mathcal{C}_{(q,n,\delta,b)}) 
   =  n - \sum_{a \in \mathcal{L}^n(b,b+\delta-2)} |C_n(a)|,
\end{equation*}
because the equality
$
   \left|     \bigcup\limits_{a=b}^{b+\delta-2} C_n(a)
   \right|
   =
   \sum\limits_{a \in \mathcal{L}^n(b,b+\delta-2)} |C_n(a)|
$
may not hold  when $b \ge 2$. 
Consequently, determining the dimension of non-narrow-sense BCH codes $\mathcal{C}_{(q,n,\delta,b)}$ with $b \ge 2$ is in general more difficult, and there are very few results on the parameters of such codes in the literature.

Nevertheless, although the above equality does not hold in general, the results on the coset leaders  and sizes of  $q$-cyclotomic cosets given  in Sections~\ref{pre} and~\ref{review} enable us to determine the dimension and Bose distance of certain classes of non-narrow-sense BCH codes, as shown in the following result.

\begin{theorem}
Let $m \ge 4$ be an integer, and let $\lambda$ be a positive divisor of $q-1$. Set $n = (q^m-1)/\lambda$ and  $h=\lfloor \frac{m}{2}\rfloor$. 
Let $b$ be an integer with 
$
   \frac{q^{m-h}-1}{\lambda}+1 \le b < \frac{q^{\lfloor (2m-1)/3 \rfloor +1}-1}{\lambda}.
$    
Define
$
  j_b = \lfloor \log_q (\lambda b) \rfloor - h,
$
and write the $q$-adic expansion of $\lambda b$ as
$
  \lambda b = \sum\limits_{\ell=0}^{h+j_b} b_\ell q^\ell.
$
Let $r_b$ be the smallest integer in $[\,m-2h-j_b,\; j_b\,]$ such that $b_{h+r_b} > 0$. If
$
  \sum\limits_{\ell=0}^{h-j_b-1} b_\ell q^\ell
   \;>\;
  \sum\limits_{\ell=h-r_b}^{h+j_b} b_\ell q^{\ell-(h-r_b)},
$
then for any integer $\delta$ with
$
   2 \le \delta \le 
   \frac{q^{h-j_b}-1}{\lambda}
   - \left\lfloor 
\sum\limits_{\ell=0}^{h-j_b-1} b_\ell q^\ell/ \lambda
     \right\rfloor
   + 1,
$
we have
\begin{equation}\label{baobao}
   \dim(\mathcal{C}_{(q,n,\delta,b)}) = n - m(\delta-1)
\end{equation}
and
\begin{equation}\label{baobao2}
   d_B(\mathcal{C}_{(q,n,\delta,b)}) = \delta.
\end{equation}
\end{theorem}

\begin{proof}
We demonstrate the proof only for the case where 
$m$ is even, as the case   $m $ is odd follows similarly. First,
observe that  
\begin{equation*}
\begin{split}    V\left(\sum\limits_{\ell=h+r_b}^{h+j_b}b_{\ell}q^{\ell}+q^{h-j_{b}}-1\right) 
    = (\mathbf{0}_{h-1-j_b}, b_{h+j_b},\ldots,b_{h+r_b}, \mathbf{0}_{2r_b}, q-1,\ldots,q-1),
    \end{split}
\end{equation*} and 
 \begin{equation*}
     V(\lambda b)\leq V(\lambda a)\leq V\left(\sum\limits_{\ell=h+r_b}^{h+j_b}b_{\ell}q^{\ell}+q^{h-j_{b}}-1\right)
     \end{equation*}
     for all integers  $a\in \left[b,b+ \frac{q^{h-j_b}-1}{\lambda}-\lceil \frac{\sum\limits_{\ell=0}^{h-j_{b}-1}b_{\ell}q^{\ell}}{\lambda}\rceil  \right]$.  
 Since $\sum\limits_{\ell=0}^{h-j_b-1}b_{\ell}q^{\ell}>\sum\limits_{\ell=h-r_{b}}^{h+j_{b}}b_{\ell}q^{\ell-(h-r_{b})}$, it follows that  $V(\lambda a)$ must have the form 
\begin{equation}\label{eee}  
V(\lambda a)=(\mathbf{0}_{h-1-j_b}, b_{h+j_b}, \ldots, b_{h+r_b}, \mathbf{0}_{2r_{b}},a_{h-j_b-1},\ldots,a_0)\end{equation}
with $\sum\limits_{\ell=0}^{h-j_b-1}a_{\ell}q^{\ell}>\sum\limits_{\ell=h-r_{b}}^{h+j_{b}}b_{\ell}q^{\ell-(h-r_{b})}$ for all integers  $a\in \left[b,b+ \frac{q^{h-j_b}-1}{\lambda}-\lfloor \frac{\sum\limits_{\ell=0}^{h-j_{b}-1}b_{\ell}q^{\ell}}{\lambda}\rfloor   \right]$.  

Next, we 
define a  function $f$ for the  integers in $ \left[b,b+ \frac{q^{h-j_b}-1}{\lambda}-\lfloor \frac{\sum\limits_{\ell=0}^{h-j_{b}-1}b_{\ell}q^{\ell}}{\lambda}\rfloor   \right]$  as 
\begin{equation*}
    f(a)= \frac{a}{q^{t_a}},
\end{equation*}
where $t_a$ is largest integer such that $q^{t_a}\mid a$. Recall that  $\mathcal{L}^n_{\ell}(0,n-1) $ is simply denoted by $\mathcal{L}^n_{\ell}$. 
 We now show that \begin{equation}\label{HAHA}
f(a)\in \mathcal{L}^n_m \quad \hbox{for all } a\in  \left[b,b+ \frac{q^{h-j_b}-1}{\lambda}-\lfloor \frac{\sum\limits_{\ell=0}^{h-j_{b}-1}b_{\ell}q^{\ell}}{\lambda}\rfloor   \right]
\end{equation} by considering the following cases.

\textbf{Case 1.} Suppose that $t_a=0$, i.e., $q\nmid a$. Then  $ f(a)= a$. It follows that   $V(\lambda f(a))$ has the form given in  (\ref{eee}). 
By Remarks \ref{rm6} and \ref{cor2},  this implies that 
$\lambda f(a)\not \in \mathcal{B}_{j_b}(i)$ for any $i\in [-j_b,j_b]$. Applying Lemma  \ref{th2}, we  conclude that  
$\lambda f(a) \not \in \mathcal{H}\cup \mathcal{S}$.  
Therefore,   $\lambda f(a) \in \mathcal{L}_m^{\lambda n}$.  By Lemma \ref{cos2}, this implies that $f(a)\in \mathcal{L}^n_m$. 

\textbf{Case 2.} Suppose that $1\leq t_a\leq  j_b$. In this case, we first have 
$f(a)\leq b-1$. 
Since $V(\lambda a )$ has the form given in (\ref{eee}),  we derive    
\begin{equation}\notag
    V\left( \lambda f(a) \right) = (\mathbf{0}_{h-1-j_b+t_a}, b_{h+j_b}, \ldots, b_{h+r_b}, \mathbf{0}_{2r_{b}},a_{h-j_b-1},\ldots,a_{t_a}). 
\end{equation}
Furthermore, the inequality $\sum\limits_{\ell=0}^{h-j_b-1}a_{\ell}q^{\ell}>\sum\limits_{\ell=h-r_{b}}^{h+j_{b}}b_{\ell}q^{\ell-(h-r_{b})}$ implies that $a_{j_b+r_b}>0$.  It follows that 
  $\lambda f(a) \not \in \mathcal{B}_{j_b-t_a}(i)$ for every integer $i\in [t_a-j_b, j_b-t_a]$, and hence  $\lambda f(a) \not \in \mathcal{H}\cup \mathcal{S}$. This implies that  $\lambda f(a) \in \mathcal{L}^{\lambda n}_m$, and hence    $f(a)\in \mathcal{L}^n_m$. 

\textbf{Case 3.}
Suppose   $t_a\geq j_b+1$. In this case, we have  $\lambda f(a)<q^h$ and 
$\lambda f(a) q^h \bmod \lambda n \neq \lambda f(a)$. By Lemma \ref{th1}, we have $|C_{\lambda n}(\lambda f(a))|=m$. Moreover, 
by applying Lemma \ref{lll}, we  conclude that $\lambda f(a)$ is a coset leader modulo $\lambda  n$. It follows  that $f(a)\in \mathcal{L}^n_m$.

By now we have demonstrated that (\ref{HAHA}) holds. Note that $C_n(a)=C_n(f(a))$ and   $f$ is injective.  Therefore, we have 
\begin{equation*}
\begin{split}
   \left|\bigcup\limits_{a=b}^{b+\delta-2}C_n(a) \right|=  \left|\bigcup\limits_{a=b}^{b+\delta-2}C_n(f(a))\right|=\sum\limits_{a=b}^{b+\delta-2} |C_n(f(a))|=m(\delta-1) 
   \end{split}
\end{equation*} 
and 
\begin{equation*}
\begin{split}
   \bigcup\limits_{a=b}^{b+\delta-2}C_n(a)  =  \bigcup\limits_{a=b}^{b+\delta-2}C_n(f(a)) \neq \bigcup\limits_{a=b}^{b+\delta-1}C_n(f(a)) =  \bigcup\limits_{a=b}^{b+\delta-1}C_n(a)
   \end{split}
\end{equation*}
for any integer $\delta$ such that  $2\leq \delta \leq \frac{q^{h-j_b}-1}{\lambda}-\lfloor \frac{\sum\limits_{\ell=0}^{h-j_{b}-1}b_{\ell}q^{\ell}}{\lambda}\rfloor +1$.
Recalling equation  (\ref{ndimeq}) and the fact that the Bose distance $d_B$ of $\mathcal{C}_{(q,n,\delta,b)}$ is the largest integer for which (\ref{boseeq}) holds, we deduce that (\ref{baobao}) and (\ref{baobao2}) both hold. 
\end{proof}

Applying the above theorem, we find the following   BCH codes. 
\begin{example}
Let $q=3$, $m=4$ and $\lambda=1$. We have $n=80$.  The BCH code 
$\mathcal{C}_{(3,80,2,b)}$ has parameters $[80,76,2]$ for all integers $ b\in [11,17]\cup [21,25]$. 
\end{example}
\begin{example}
Let $q=4$, $m=4$ and $\lambda=1$. We have $n=255$.  The BCH code 
$\mathcal{C}_{(4,255,2,b)}$ has parameters $[255,251,2]$ for all integers $b\in [18, 30]\cup [35,46]\cup[52,62]$. 
\end{example}
By the Hamming bound,  no linear code over $\mathbb{F}_3$
with parameters $[80,76,d]$ exists for $d > 32$, nor does a linear code over
$\mathbb{F}_4$ with parameters $[255,251,d]$ exist for $d > 2$.
Therefore,  the BCH   codes in the above examples are optimal. Readers can also find the bounds for the minimum distance of  linear codes of length up to 256 in the \textit{Database}.

\section{Conclusion}\label{conclusion}
In this paper, we investigate the dimension and Bose distance of  BCH codes of length $(q^m-1)/\lambda$, where $\lambda$ is a positive divisor of $q-1$. Our main contribution is to provide explicit formulas for the dimension and   Bose distance of narrow-sense BCH codes of length $(q^m-1)/\lambda$
  for a much larger range of designed distances than previously known.  In addition, we 
 extend these results to some non-narrow-sense BCH codes of the same length. Applying our results, we find some BCH codes with good parameters.

\appendices
\section{Proof of Lemma \ref{le1}}
\begin{proof}
It is clear that \begin{equation}\label{l70}
    \{a\in [1,x]: q\nmid a \hbox{ and }\lambda\mid a+y \}= \{a\in [1,x]: \lambda\mid a+y \}-\{a\in [1,x]: q\mid a \hbox{ and }\lambda\mid a+y \}.
\end{equation}
Next, we show that 
\begin{equation}\label{l71}
 \left|   \{a\in [1,x]: q\mid a \hbox{ and }\lambda\mid a+y\}\right|=\left |\left\{a\in \left[1,\lfloor {x}/{q}\rfloor\right]: \lambda\mid a+y\right\}\right|.
\end{equation}
Notice that there exists a bijective $a\mapsto{a}/{q}$ between the integers in $[1,x]$ that are divisible by $q$ and the integers in $\left[1,\lfloor x/q \rfloor\right]$. Furthermore, for any integer $a$ such that $q\mid a$, we have 
\begin{equation}\notag
    a+y=\frac{a}{q}(q-1)+\frac{a}{q}+y.
\end{equation}
Since $\lambda\mid q-1$, it follows that $\lambda\mid a+y$ if and only if $\lambda\mid \frac{a}{q}+y$ for any integer $a$ such that $q\mid a$. Therefore, we can conclude that there exists a one-to-one correspondence between the two sets in (\ref{l71}) by mapping $a$ to ${a}/{q}$, and hence the equality in (\ref{l71}) holds. 
With (\ref{l70}), it follows that 
\begin{equation}\notag
\begin{split}
     \left| \{a\in [1,x]: q\nmid a \hbox{ and }\lambda\mid a+y \}\right|&= \left| \{a\in \left[\lfloor x/q\rfloor+1,x\right]:  \lambda\mid a+y \}\right| \\
     &=\lfloor \frac{x+y}{\lambda}\rfloor -\lfloor\frac{\lfloor {x}/{q}\rfloor+y}{\lambda} \rfloor.
\end{split}
\end{equation}
This completes the proof.
 \end{proof}

\section{Proof of Lemma \ref{lll8}}
\begin{proof}
First, we can apply a similar argument as utilized in the proof of Lemma \ref{le1} to conclude that 
    \begin{equation}\notag
   \left|  \{\alpha\in [x,y]:\lambda\mid 2\alpha \hbox{ and }q\mid \alpha \}\right|=\left| \left\{\alpha\in \big[\lceil {x}/{q}\rceil,\lfloor {y}/{q}\rfloor\big]: \lambda\mid 2\alpha\right\}\right|.
 \end{equation}
It follows that 
\begin{equation}\label{35}
    \left|  \{\alpha\in [x,y]:\lambda\mid 2\alpha \hbox{ and }q\nmid \alpha \}\right| =\left|  \{\alpha\in [x,y]:\lambda\mid 2\alpha \}\right|-\left| \left\{\alpha\in \big[\lceil {x}/{q}\rceil,\lfloor {y}/{q}\rfloor\big]: \lambda\mid 2\alpha\right\}\right|.
\end{equation}
Then we distinguish the   following two cases:

\textbf{Case 1.}
 Suppose that $\lambda$ is odd. Then $\lambda\mid 2\alpha$ holds if and only if $\lambda \mid \alpha.$ Therefore, 
\begin{equation}\notag
    \left|  \{\alpha\in [x,y]: \lambda\mid 2\alpha\}      \right| = \left| \{\alpha\in [x,y]: \lambda\mid \alpha\}   \right|=\lfloor\frac{y}{\lambda}\rfloor -\lfloor\frac{x-1}{\lambda} \rfloor.
\end{equation}
 Similarly, we also have 
\begin{equation}\notag
    \left|  \{\alpha\in [\lceil {x}/{q} \rceil, \lfloor {y}/{q} \rfloor]: \lambda\mid 2\alpha\}      \right| = \lfloor \frac{\lfloor y/q\rfloor}{\lambda}\rfloor-\lfloor \frac{\lceil {x}/{q} \rceil-1}{\lambda} \rfloor.
\end{equation}

\textbf{Case 2.} Suppose that $ \lambda$ is even. Then  $\lambda\mid 2\alpha$ holds if and only if $\frac{\lambda}{2}\mid \alpha$. Consequently, 
 \begin{equation}\notag
    \left|  \{\alpha\in [x,y]: \lambda\mid 2\alpha\}      \right| = \left| \left\{\alpha\in [x,y]: \frac{\lambda}{2} \mid \alpha\right\}   \right|=\lfloor\frac{2y}{\lambda}\rfloor -\lfloor\frac{2x-2}{\lambda} \rfloor.
\end{equation}
 Similarly, we also have 
\begin{equation}\notag
    \left|  \{\alpha \in \big[\lceil {x}/{q} \rceil, \lfloor 
{y}/{q} \rfloor\big]: \lambda\mid 2\alpha \}      \right| = \lfloor \frac{2\lfloor  y/q\rfloor}{\lambda}\rfloor-\lfloor \frac{2\lceil {x}/{q} \rceil-2}{\lambda} \rfloor.
\end{equation}

 We now can derive the desired equation 
from equation (\ref{35}) and the discussion for the above two cases.  This completes the proof. 
\end{proof}

\section{Proof of Lemma \ref{le0}}
\begin{proof}
For simplicity, we set  $\beta_1= x - \lambda\cdot \lfloor \frac{x}{\lambda} \rfloor $ and $\beta_2= y - \lambda\cdot \lfloor \frac{y}{\lambda} \rfloor $.  Then   
 \begin{equation}\label{9}
  \sum\limits_{t=1}^{q-1} \left [\lfloor \frac{t+x}{\lambda}\rfloor- \lfloor \frac{t+y}{\lambda}\rfloor \right]= \sum\limits_{t=1}^{q-1} \left[\lfloor \frac{x}{\lambda} \rfloor -\lfloor \frac{y}{\lambda}\rfloor+ 
 \lfloor  \frac{t+\beta_1}{\lambda}\rfloor-\lfloor \frac{t+\beta_2}{\lambda}\rfloor\right].
\end{equation}  
  It is straightforward to verify that  
\begin{equation}\notag
\lfloor\frac{t+\beta_1}{\lambda}\rfloor =\begin{cases}
    0, &\hbox{for } t\in [1, \lambda-\beta_1-1],    \\ 
    i,&\hbox{for }  t\in [i\lambda-\beta_1,(i+1)\lambda-1-\beta_1],\\
    \frac{q-1}{\lambda},&\hbox{for } t\in [q-1-\beta_1,q-1],
\end{cases}    
\end{equation}
where $i$ can be any integer in  $[1,\frac{q-1}{\lambda}-1].$
Therefore, we can obtain  
\begin{equation}\notag
    \sum\limits_{t=1}^{q-1}\lfloor \frac{t+\beta_1}{\lambda}\rfloor=\frac{(q-1)(\beta_1+1)}{\lambda}+\sum\limits_{i=1}^{\frac{q-1}{\lambda}-1}i\lambda.
\end{equation}
Similarly, we can also derive
   \begin{equation}\notag
    \sum\limits_{t=1}^{q-1}\lfloor \frac{t+\beta_2}{\lambda}\rfloor=\frac{(q-1)(\beta_2+1)}{\lambda}+\sum\limits_{i=1}^{\frac{q-1}{\lambda}-1}i\lambda.
\end{equation}
Combining the above two equalities, we obtain 
\begin{equation}\notag
    \sum\limits_{t=1}^{q-1} \left[\lfloor \frac{t+\beta_1}{\lambda}\rfloor-\lfloor \frac{t+\beta_2}{\lambda}\rfloor\right]=\frac{(q-1)(\beta_1-\beta_2)}{\lambda}.
\end{equation}
With the equality in (\ref{9}), it follows that 
\begin{equation}\notag
      \sum\limits_{t=1}^{q-1}\left[\lfloor \frac{t+x}{\lambda}\rfloor- \lfloor \frac{t+y}{\lambda}\rfloor \right] = \frac{q-1}{\lambda}\left(\lambda \cdot  \lfloor \frac{x}{\lambda} \rfloor +\beta_1\right)- \frac{q-1}{\lambda}\left(\lambda \cdot  \lfloor \frac{y}{\lambda} \rfloor +\beta_2\right)=\frac{(q-1)(x-y)}{\lambda}.
    \end{equation}This completes the proof.
\end{proof}

\section{Proof of Lemma \ref{adl}}
\begin{proof}
Notice that each integer $t\in [q,aq]$ such that $q\nmid t$ admits a unique decomposition \begin{equation*}
t=iq+j \quad \hbox{with } i\in [1,a-1] \hbox{ and } j\in [1,q-1].\end{equation*}  Substituting this decomposition, we have 
\begin{equation}\notag
\begin{split}
    \lfloor \frac{\lfloor tq^{-1}\rfloor+t}{\lambda}\rfloor- \lfloor \frac{t}{\lambda}\rfloor &= \lfloor \frac{i+iq+j}{\lambda}\rfloor-\lfloor \frac{iq+j}{\lambda}\rfloor\\
    &= \lfloor \frac{j+2i}{\lambda}\rfloor-\lfloor \frac{j+i}{\lambda}\rfloor. 
    \end{split}
    \end{equation}
Then by applying Lemma \ref{le0}, we obtain 
   \begin{equation}\notag
   \begin{split}
       \sum\limits_{t=q, q\nmid t }^{aq}\left [\lfloor \frac{\lfloor tq^{-1}\rfloor+t}{\lambda}\rfloor- \lfloor \frac{t}{\lambda}\rfloor \right]
       &=\sum\limits_{i=1}^{a-1}\sum\limits_{j=1}^{q-1}\left[ \lfloor \frac{j+2i}{\lambda}\rfloor-\lfloor \frac{j+i}{\lambda}\rfloor  \right]\\
       &=\sum\limits_{i=1}^{a-1}\frac{(q-1)i}{\lambda}\\
       &=\frac{a(a-1)(q-1)}{2\lambda}. 
       \end{split}
   \end{equation} 
   This completes the proof.
\end{proof}

\section{Proof of Lemma \ref{ll9}}
\begin{proof}
Notice that each integer $t\in [0,q-2]$ admits a  unique decomposition 
    \begin{equation*}t=i\lambda+j  \quad \hbox{with }  i\in \left[0,\frac{q-1}{\lambda}-1\right] 
\hbox{ and }j\in [0,\lambda-1].\end{equation*} Therefore, 
\begin{equation}\label{le8e1}
\begin{split}
    \sum\limits_{t=1}^{q-1}\left[\lfloor  \frac{2t+x}{\lambda}\rfloor- \lfloor \frac{t+x}{\lambda}\rfloor\right]&=\sum\limits_{t=0}^{q-1}\left[\lfloor  \frac{2t+x}{\lambda}\rfloor- \lfloor \frac{t+x}{\lambda}\rfloor\right]\\  &=\sum\limits_{i=0}^{\frac{q-1}{\lambda}-1}\sum\limits_{j=0}^{\lambda-1}\left[ \lfloor \frac{2i\lambda+2j+x}{\lambda}\rfloor-\lfloor \frac{i\lambda+j+x}{\lambda}\rfloor    \right]+\lfloor \frac{2(q-1)+x}{\lambda}\rfloor-\lfloor \frac{q-1+x}{\lambda}\rfloor\\
    &=\sum\limits_{i=0}^{\frac{q-1}{\lambda}-1}\sum\limits_{j=0}^{\lambda-1}\left[ i+ \lfloor \frac{2j+x}{\lambda} \rfloor-\lfloor \frac{j+x}{\lambda}\rfloor \right]+\frac{q-1}{\lambda}\\
    &=\frac{(q-1)(q+1)}{2\lambda}-\frac{q-1}{2}+\frac{q-1}{\lambda}\sum\limits_{j=0}^{\lambda-1}\left[ \lfloor \frac{2j+x}{\lambda} \rfloor-\lfloor \frac{j+x}{\lambda}\rfloor \right].
    \end{split}
\end{equation}
Let $y=x-\lambda\cdot\lfloor \frac{x}{\lambda} \rfloor$. Then we have $0\leq y\leq \lambda-1$ and 
\begin{equation}\label{le8e2}
   \lfloor \frac{2j+x}{\lambda} \rfloor-\lfloor \frac{j+x}{\lambda}\rfloor = \lfloor \frac{2j+y}{\lambda} \rfloor-\lfloor \frac{j+y}{\lambda}\rfloor. 
\end{equation}
Noting that $\lfloor  \frac{j+y}{\lambda} \rfloor =0$ for each integer $j\in [0,\lambda-y-1]$, and $\lfloor  \frac{j+y}{\lambda} \rfloor =1$ for each integer $j\in [\lambda-y, \lambda-1]$, we can derive 
\begin{equation}\label{le8e3}
\sum\limits_{j=0}^{\lambda-1}\lfloor \frac{j+y}{\lambda}\rfloor = y.\end{equation} 
Next, we determine the value of $\sum\limits_{j=0}^{\lambda-1}\lfloor  \frac{2j+y}{\lambda} \rfloor$ through the following two cases:

\textbf{Case 1.} Suppose that $\lambda$ is even. Since $x$ is even, it follows that  $y$ is also even. We can obtain 
\begin{equation}\notag
    \lfloor \frac{2j+y}{\lambda} \rfloor =\begin{cases}
        0&\hbox{if }0\leq j\leq \frac{\lambda-y}{2}-1, \\
        1&\hbox{if }\frac{\lambda-y}{2}\leq j \leq \frac{2\lambda-y}{2}-1, \\
        2 &\hbox{if }  \frac{2\lambda-y}{2} \leq j \leq \lambda -1. 
   \end{cases}   
\end{equation}
It follows that 
\begin{equation}\notag
\begin{split}
\sum\limits_{j=0}^{\lambda-1}\lfloor \frac{2j+y}{\lambda} \rfloor &=  \left(\frac{2\lambda-y}{2}- \frac{\lambda-y}{2}\right)+2\left(\lambda - \frac{2\lambda-y}{2}\right)  \\ &=\frac{\lambda}{2}+y.
\end{split}
\end{equation}

\textbf{Case 2.} Suppose that $\lambda$ is odd. Then we can obtain 
\begin{equation}\notag
    \lfloor \frac{2j+y}{\lambda} \rfloor =\begin{cases}
        0&\hbox{if }0\leq j\leq \frac{\lambda-y}{2}-1, \\
        1&\hbox{if }\frac{\lambda+1}{2}-\lfloor \frac{y+1}{2}\rfloor\leq j \leq \lambda-\lceil \frac{y+1}{2}\rceil, \\
        2 &\hbox{if } \lambda -\lceil \frac{y+1}{2} \rceil +1 \leq j \leq \lambda -1. 
   \end{cases}   
\end{equation}
Consequently,  we have 
\begin{equation}\notag
\begin{split}
\sum\limits_{j=0}^{\lambda-1}\lfloor \frac{2j+y}{\lambda} \rfloor  &= \left( \lambda-\lceil \frac{y+1}{2}\rceil -\frac{\lambda+1}{2}+ \lfloor \frac{y+1}{2}\rfloor +1 \right) + 2 \left(\lceil \frac{y+1}{2}\rceil  - 1 \right) \\
&= \frac{\lambda-1}{2}-1+\lceil \frac{y+1}{2}\rceil+\lfloor \frac{y+1}{2}\rfloor +1\\
&= \frac{\lambda-1}{2}+y.
\end{split}
\end{equation}
From equations (\ref{le8e2}) and (\ref{le8e3}), and the discussion for the above two cases, we can conclude that 
\begin{equation}\notag    \sum\limits_{j=0}^{\lambda-1}\left[\lfloor \frac{2j+x}{\lambda} \rfloor-\lfloor \frac{j+x}{\lambda}\rfloor \right] =\begin{cases}
        \frac{\lambda}{2},& \hbox{if  }
\lambda \hbox{ is even,}   \\
\frac{\lambda-1}{2},& \hbox{if  }
\lambda \hbox{ is odd.}
\end{cases}
\end{equation}
Then by applying (\ref{le8e1}), we obtain the desired equation.  This completes the proof. 
\end{proof}

\section{Proof of Lemma \ref{lema12}}
\begin{proof}
Note that 
each  integer $t\in \left[q^k,aq^{k}\right]$ such that $q\nmid t$ can be uniquely decomposed as \begin{equation*}  
t=iq+j \quad \hbox{with } i\in [q^{k-1}, aq^{k-1}-1] \hbox{ and }j\in [1,q-1].\end{equation*}
Therefore, by substituting this decomposition,  we obtain 
 \begin{equation}\notag
 \begin{split}
     \sum\limits_{t=q^{k},q\nmid t}^{aq^{k}-1}\left[\lfloor \frac{2t}{\lambda}\rfloor -\lfloor \frac{\lfloor tq^{-1}\rfloor+t}{\lambda}\rfloor\right] &= \sum\limits_{i=q^{k-1}}^{aq^{k-1}-1}\sum\limits_{j=1}^{q-1} \left[\lfloor \frac{2iq+2j}{\lambda}\rfloor -\lfloor \frac{i+iq+j}{\lambda}\rfloor\right]\\
     &= \sum\limits_{i=q^{k-1}}^{aq^{k-1}-1}\sum\limits_{j=1}^{q-1} \left[\frac{i(q-1)}{\lambda}+\lfloor \frac{2j+2i}{\lambda}\rfloor -\lfloor \frac{j+2i}{\lambda}\rfloor \right]\\ 
     &= \sum\limits_{i=q^{k-1}}^{aq^{k-1}-1}\frac{i(q-1)^2}{\lambda}+\sum\limits_{i=q^{k-1}}^{aq^{k-1}-1}\sum\limits_{j=1}^{q-1}\left[\lfloor \frac{2j+2i}{\lambda}\rfloor-\lfloor\frac{j+2i}{\lambda} \rfloor\right].
 \end{split}
\end{equation} 
It is straightforward to obtain 
\begin{equation}\notag
    \sum\limits_{i=q^{k-1}}^{aq^{k-1}-1}\frac{i(q-1)^2}{\lambda} = \frac{q^{k-1}(q-1)^2(aq^{k-1}+q^{k-1}-1)}{2\lambda}.
\end{equation}
In addition, by applying Lemma \ref{ll9}, we have 
\begin{equation}\notag
   \sum\limits_{i=q^{k-1}}^{aq^{k-1}-1}\sum\limits_{j=1}^{q-1}\left[\lfloor \frac{2j+2i}{\lambda}\rfloor-\lfloor\frac{j+2i}{\lambda} \rfloor\right]=\begin{cases}
       \frac{q^k(q-1)(a-1)}{2\lambda}& \hbox{if }\lambda \hbox{ is odd,} \\
        \frac{q^{k-1}(q-1)(q+1)(a-1)}{2\lambda}& \hbox{if }\lambda \hbox{ is even.} 
   \end{cases}
\end{equation}  
Combining the above three equations, we obtain the desired equation. This completes the proof. 
\end{proof}

\section{Proof of Lemma \ref{lead2}}

\begin{proof}
By definition, we first have 
\begin{equation}\label{l6ee1}
\sum\limits_{t=q^{k}}^{aq^{k}-1}N(t+1)=\sum\limits_{t=q^{k}}^{aq^{k}-1}t-\sum\limits_{t=q^{k}}^{aq^{k}-1}\lfloor t/q\rfloor.
\end{equation}
Through direct computation, we obtain 
\begin{equation}\label{l6ee2}
\sum\limits_{t=q^{k}}^{aq^{k}-1}t=
\begin{cases}
\frac{1}{2}a(a-1)&\hbox{ if } k=0,\\
\frac{1}{2}(a^2-1)q^{2k}-\frac{1}{2}(a-1)q^k&\hbox{ if } k\geq 1.
\end{cases}
\end{equation}

We now  determine the value of $\sum\limits_{t=q^{k}}^{aq^{k}-1}\lfloor t/q\rfloor.$ First, note that 
$\lfloor t/q\rfloor=0$ for all $t\in \left[q^k,aq^{k}-1\right]$  when  $k=0$. This implies that     
$\sum\limits_{t=q^k}^{aq^{k}-1}\lfloor t/q\rfloor =0$ if $k=0$.   
 When  $k\geq 1,$  the interval 
$\left[q^{k}, q^{k+1}-1\right]$ can be partitioned as \begin{equation*}\left[q^{k}, q^{k+1}-1\right]=\bigsqcup\limits_{i=0}^{q^{k-1}(a-1)-1}
\left[q^{k}+i q, q^{k}+(i+1)q-1\right].\end{equation*}  Moreover, for each integer $i\in \left[0,q^{k-1}(a-1)-1\right]$ and each  integer $t\in \left[q^{k}+i q, q^{k}+(i+1)q-1\right]$, we have   $\lfloor t/q\rfloor =q^{k-1}+i$.  Therefore,  
  \begin{equation}\notag
\begin{split}\sum\limits_{t=q^{k}}^{aq^{k}-1}\lfloor t/q\rfloor&=\sum\limits_{i=0}^{q^{k-1}(a-1)-1}\sum\limits_{t=q^{k}+i q}^{q^k+(i+1)q-1}(q^{k-1}+i)\\
     &=\frac{1}{2}(a^2-1)q^{2k}-\frac{1}{2}(a-1)q^k 
     \end{split}
 \end{equation}
  if   $k\geq 1.$  
Combining (\ref{l6ee1}), (\ref{l6ee2}) and the value of $\sum\limits_{t=q^k}^{q^{k+1}-1}\lfloor t/q\rfloor$ given as above,  we obtain the desired equality. 
\end{proof}

\section{Proof of Lemma \ref{le5}}
\begin{proof}
The arguments used to derive (\ref{1}) and (\ref{1.5}) are analogous, so we  only demonstrate equation (\ref{1}) holds through the following two cases. 

\textbf{Case 1.} Suppose that $1\leq i\leq k$.  
In this case, we first have 
\begin{equation}\notag
   \lfloor \frac{\lfloor tq^{2i-1}\rfloor+t}{\lambda}\rfloor= \frac{t(q^{2i-1}-1)}{\lambda}+\lfloor \frac{2t}{\lambda}\rfloor \quad \hbox{and}\quad \lfloor \frac{\lfloor tq^{2i-2}\rfloor+t}{\lambda}\rfloor= \frac{t(q^{2i-2}-1)}{\lambda}+\lfloor \frac{2t}{\lambda}\rfloor. 
\end{equation}
It follows that 
\begin{equation}\notag
    \lfloor \frac{\lfloor tq^{2i-1}\rfloor+t}{\lambda}\rfloor -\lfloor \frac{\lfloor tq^{2i-2}\rfloor+t}{\lambda}\rfloor =\frac{tq^{2i-2}(q-1)}{\lambda}.
\end{equation}
This  leads to 
\begin{equation}\label{3}
     \sum\limits_{t=q^{k-i},q\nmid t}^{aq^{k-i}-1}\left[\lfloor \frac{\lfloor tq^{2i-1}\rfloor+t}{\lambda}\rfloor -\lfloor \frac{\lfloor tq^{2i-2}\rfloor+t}{\lambda}\rfloor\right]=   \sum\limits_{t=q^{k-i}}^{aq^{k-i}-1}  \frac{tq^{2i-2}(q-1)}{\lambda} - \sum\limits_{t\in \mathcal{W}_i}\frac{tq^{2i-2}(q-1)}{\lambda}
\end{equation}
with $\mathcal{W}_i=\left[q^{k-i},aq^{k-i}-1
\right]\cap \{t\in \mathbb{Z}: q\mid t\}$.

Note that   $\mathcal{W}_i=\varnothing$ if  $i=k$.   Thus,  
 \begin{equation}\label{4}
 \sum\limits_{t\in \mathcal{W}_i}\frac{tq^{2i-2}(q-1)}{\lambda}=0 \quad \hbox{if } i=k.
 \end{equation} 
On the other hand, it can be easily verified that    
$\mathcal{W}_i =\{q^{k-i}+ j q \mid j =0,\ldots, q^{k-i-1}(a-1)-1\}$  if  $1\leq i \leq  k-1.$
Therefore, 
 \begin{equation}\label{5}
 \begin{split}
\sum\limits_{t\in \mathcal{W}_i}\frac{tq^{2i-2}(q-1)}{\lambda}
& = \sum\limits_{j=0}^{q^{k-i-1}(a-1)-1}\frac{(q^{k-i}+ qj )q^{2i-2}(q-1)}{\lambda} \\
& = \frac{1}{2\lambda}(q-1)(a-1)\left[(a+1)q^{2k-3}+q^{k+i-2}\right] 
 \end{split}
\end{equation} if $1\leq i\leq k-1$. 
Additionally, by straightforward computation, 
we can obtain 
\begin{equation}\label{6}
\begin{aligned}
\sum\limits_{t=q^{k-i}}^{aq^{k-i}-1}\frac{tq^{2i-2}(q-1)}{\lambda} 
 =\begin{cases}\frac{1}{2\lambda}a(a-1)(q-1)^2q^{2k-2} \quad  \hbox{ if } i=k,\\
 \frac{1}{2\lambda}(q-1)(a-1)\left[(a+1)q^{2k-2}-q^{k+i-2}\right] \\
 \quad \quad \hbox{if }1\leq i\leq k-1.
 \end{cases}\end{aligned}
\end{equation}
By substituting equations (\ref{4})-(\ref{6}) into equation (\ref{3}), we conclude that equation (\ref{1}) holds for  $1\leq i\leq k$.

\textbf{Case 2.} Suppose that $-k+1\leq i\leq 0$. 
Then for any integer   $t\in \left[q^{k-i}, aq^{k-i}-1\right]\cap \{t\in \mathbb{Z}: q\nmid t\}$   with  $q$-adic expansion $\sum\limits_{\ell=0}^{k-i}t_{\ell}q^{\ell}$, 
we have 
\begin{equation}\label{lhx}
    \lfloor tq^{2i-1} \rfloor = \lfloor tq^{2i-2} \rfloor \cdot q+t_{-2i+1}. 
\end{equation}
This leads to 
\begin{equation}\notag
\begin{split}
    \lfloor \frac{t+\lfloor tq^{2i-1}\rfloor}{\lambda}\rfloor = \frac{ \sum\limits_{\ell=0}^{k-i}t_{\ell} (q^{\ell}-1)}{\lambda} + \frac{\lfloor tq^{2i-2}\rfloor(q-1)}{\lambda} +
    \lfloor  \frac{\sum\limits_{\ell=0}^{k-i}t_{\ell}+\lfloor tq^{2i-2}\rfloor+t_{-2i+1}}{\lambda} \rfloor. 
    \end{split}
\end{equation}   
We also have 
\begin{equation}\notag
\begin{split}
    \lfloor \frac{t+\lfloor tq^{2i-2}\rfloor}{\lambda}\rfloor = \frac{ \sum\limits_{\ell=0}^{k-i}t_{\ell} (q^{\ell}-1)}{\lambda} +  \lfloor  \frac{\sum\limits_{\ell=0}^{k-i}t_{\ell}+\lfloor tq^{2i-2}\rfloor}{\lambda} \rfloor. 
    \end{split}
    \end{equation}
Combining the above two equalities, we can obtain  \begin{equation}\notag
    \lfloor \frac{t+\lfloor tq^{2i-1} \rfloor}{\lambda}\rfloor-\lfloor \frac{t+\lfloor tq^{2i-2}\rfloor}{\lambda}\rfloor=   \lfloor  \frac{\sum\limits_{\ell=0}^{k-i}t_{\ell}+\lfloor tq^{2i-2}\rfloor+t_{-2i+1}}{\lambda} \rfloor 
     -\lfloor  \frac{\sum\limits_{\ell=0}^{k-i}t_{\ell}+\lfloor tq^{2i-2}\rfloor}{\lambda} \rfloor +\frac{\lfloor tq^{2i-2}\rfloor(q-1)}{\lambda}.
    \end{equation}
By applying Lemma \ref{le0} with $x=\sum\limits_{\ell=1}^{k-i}t_{\ell}+\lfloor tq^{2i-2}\rfloor+t_{-2i+1}$ and $ y= \sum\limits_{\ell=1}^{k-i}t_{\ell}+\lfloor tq^{2i-2}\rfloor
  $, we  have 
\begin{equation}\notag
\sum\limits_{t_0=1}^{q-1}
    \left[ \lfloor  \frac{\sum\limits_{\ell=0}^{k-i}t_{\ell}+\lfloor tq^{2i-2}\rfloor+t_{-2i+1}}{\lambda} \rfloor 
     -\lfloor  \frac{\sum\limits_{\ell=0}^{k-i}t_{\ell}+\lfloor tq^{2i-2}\rfloor}{\lambda} \rfloor \right]= \frac{(q-1)t_{-2i+1}}{\lambda}.
\end{equation}
Noticing that the value of $\lfloor tq^{2i-2}\rfloor$ is independent of $t_0$, we get 
\begin{equation}\notag
    \sum\limits_{t_0=1}^{q-1}\frac{\lfloor tq^{2i-2}\rfloor(q-1)}{\lambda}  = \frac{\lfloor tq^{2i-2}\rfloor(q-1)^2}{\lambda}.
\end{equation}
Noting that $\lfloor tq^{2i-2}\rfloor=\lfloor\frac{\lfloor tq^{2i-1}\rfloor}{q}\rfloor$ and  
recalling equation (\ref{lhx}), we can add  
the above two sums  to  obtain 
\begin{equation}\label{whlx}
\begin{split}
    \sum\limits_{t_0=1}^{q-1}\left[
     \lfloor \frac{t+\lfloor tq^{2i-1} \rfloor}{\lambda}\rfloor-\lfloor \frac{t+\lfloor tq^{2i-2}\rfloor}{\lambda}\rfloor  \right]&= 
     \frac{\lfloor tq^{2i-2}\rfloor(q-1)^2}{\lambda} + \frac{(q-1)t_{-2i+1}}{\lambda}\\
    &=\frac{q-1}{\lambda}N(\lfloor tq^{2i-1}\rfloor+1).
    \end{split}
\end{equation}

In addition, each  integer  $t\in \left[q^{k-i}, aq^{k-i+1}-1\right]\cap \{t\in \mathbb{Z}: q\nmid t\}$ with $q$-adic expansion $\sum\limits_{\ell=0}^{k-i}t_{\ell}q^{\ell}$ can be uniquely decomposed  as   
 \begin{equation}\notag
t=\lfloor tq^{2i-1} \rfloor\cdot q^{-2i+1}+\sum\limits_{\ell=1}^{-2i}t_{\ell}q^{\ell}+t_0.
\end{equation}
Furthermore, as $ t $ ranges over integers from $ q^{k-i} $ to $q^{k-i+1} - 1$ that are not divisible by $ q $, the value of $ \lfloor tq^{2i-1} \rfloor $ ranges from $ q^{k+i-1} $ to $ aq^{k+i-1}-1$. Additionally, for each fixed value of $ \lfloor tq^{2i-1} \rfloor $,  the sum $ \sum\limits_{\ell=1}^{-2i} t_{\ell} q^{\ell} $ ranges over integers from $ 0 $ to $q^{-2i}-1$, while $ t_0 $ varies from $ 1 $ to $ q - 1 $.  Therefore, we can conclude that 
\begin{equation}\notag
\begin{split}
      \sum\limits_{\substack{t=q^{k-i}, q\nmid t}}^{ q^{k-i+1}-1}\left[\lfloor \frac{\lfloor tq^{2i-1}\rfloor+t}{\lambda}\rfloor -\lfloor \frac{\lfloor tq^{2i-2}\rfloor+t}{\lambda}\rfloor\right]&= \sum\limits_{\lfloor tq^{2i-1} \rfloor=q^{k+i-1}}^{aq^{k+i-1}-1 } 
\sum\limits_{\sum_{\ell=1}^{-2i}t_{\ell}q^{\ell}=0}^{q^{-2i}-1}\sum\limits_{t_0=1}^{q-1}\left[  \lfloor \frac{t+\lfloor tq^{2i-1} \rfloor}{\lambda}\rfloor-\lfloor \frac{t+\lfloor tq^{2i-2}\rfloor}{\lambda}\rfloor \right]\\
&=\sum\limits_{\lfloor tq^{2i-1} \rfloor=q^{k+i-1}}^{aq^{k+i-1}-1 }  \frac{q^{-2i}(q-1)}{\lambda}N(\lfloor tq^{2i-1}\rfloor+1),
\end{split}
\end{equation}
where the second equality follows from equation (\ref{whlx}).  We can now apply   Lemma \ref{lead2} to conclude that  (\ref{1})  holds for $-k+1\leq i\leq 0$.  

By now, we have established equation (\ref{1})
\end{proof}

\section{Proof of Assertion \ref{as1}}\label{API}
\begin{proof}
Suppose that  $m$ is odd.  We first aim to  show that 
\begin{equation}\label{ha2}
\left[\sum\limits_{\ell=h+s_{\delta}}^{h+k_{\delta}}\delta_{\ell}q^{\ell}, \lambda(\delta-1)\right]\cap \mathcal{S}\cap \mathcal{D}_{\lambda}=\left[\sum\limits_{\ell=h+s_{\delta}}^{h+k_{\delta}}\delta_{\ell}q^{\ell}, \lambda(\delta-1)\right]\cap \mathcal{A}_{k_{\delta}}(s_{\delta}) \cap \mathcal{D}_{\lambda} .
\end{equation}
For any integer $a\in \left[\sum\limits_{\ell=h-k_{\delta}+1}^{h+k_{\delta}}\delta_{\ell}q^{\ell},\lambda(\delta-1)\right]$  with $q$-adic expansion $\sum\limits_{\ell=0}^{h+k_{\delta}}a_{\ell}q^{\ell}$, we have 
\begin{equation*}V(\sum\limits_{\ell=h-k_{\delta}+1}^{h+k_{\delta}}\delta_{\ell}q^{\ell})\leq V(a)\leq V(\lambda(\delta-1)). \end{equation*}
Noting that 
\begin{equation}\notag
V(\lambda(\delta-1))=(\mathbf{0}_{h-k_{\delta}},\delta_{h+k_{\delta}},\ldots,\delta_{0})
\end{equation}
and 
\begin{equation}\notag
V(\sum\limits_{\ell=h-k_{\delta}+1}^{h+k_{\delta}}\delta_{\ell}q^{\ell})=(\mathbf{0}_{h-k_{\delta}},\delta_{h+k_{\delta}},\ldots,\delta_{h-k_{\delta}+1}, \mathbf{0}_{h-k_{\delta}+1}), 
\end{equation}
it follows that  \begin{equation}\label{vvv}
V(a)=(\mathbf{0}_{h-k_{\delta}},a_{h+k_{\delta}},\ldots,a_0)=(\mathbf{0}_{h-k_{\delta}},\delta_{h+k_{\delta}},\ldots,\delta_{h-k_{\delta}+1} , a_{h-k_{\delta}},\ldots,a_0).
\end{equation}
Recalling the definition of  $s_{\delta}$,   it follows that $s_{\delta}$ is  the smallest integer in $[-k_{\delta}+1,k_{\delta}]$ such that $a_{h+s_{\delta}}>0$.  
 By Remark  \ref{cor2},  this implies  $a\not \in \mathcal{A}_{k_{\delta}}(i)$ for any integer $i\neq s_{\delta}$.  Therefore, 
$\left[\sum\limits_{\ell=h-k_{\delta}+1}^{h+k_{\delta}}\delta_{\ell}q^{\ell},\delta-1\right]\cap \mathcal{A}_{k_{\delta}}(i)=\varnothing$ for any integer $i\neq s_{\delta}.$  Then applying  Lemma  \ref{th2},  we can conclude that  (\ref{ha2}) holds.  

Now, let us count the number of integers in the set $\left[\sum\limits_{\ell=h-k_{\delta}+1}^{h+k_{\delta}}\delta_{\ell}q^{\ell},\delta-1\right]\cap \mathcal{A}_{k_{\delta}}(s_{\delta})\cap \mathcal{D}_{\lambda}.$
    Recall that $w_{\delta}= \sum\limits_{\ell=h+s_{\delta}}^{h+k_{\delta}}\delta_{\ell}q^{\ell}$, $\mu(\delta)=\min\left \{\sum\limits_{\ell=0}^{h-k_{\delta}}\delta_{\ell}q^{\ell}, \sum\limits_{\ell=h-s_{\delta}+1}^{h+k_{\delta}}\delta_{\ell}q^{\ell-(h-s_{\delta}+1)}  \right\}$, and $\alpha(a)=\sum\limits_{\ell=0}^{k_{\delta}+s_{\delta}-1}a_{\ell}q^{\ell}$  for each integer $a\in \mathcal{A}_{k_{\delta}}(s_{\delta})$ with $q$-adic expansion $\sum\limits_{\ell=0}^{h+k_{\delta}}a_{\ell}q^{\ell}$. 
 We conclude from equation (\ref{vvv}) and  the definition of $\mathcal{A}_{k_{\delta}}(s_{\delta})$  that an integer  $a\in \left[\sum\limits_{\ell=h-k_{\delta}+1}^{h+k_{\delta}}\delta_{\ell}q^{\ell},\delta-1\right]\cap \mathcal{A}_{k_{\delta}}(s_{\delta})\cap \mathcal{D}_{\lambda}$ if and only if $a$ satisfies the decomposition 
\begin{equation*}a=w_{\delta}+\alpha(a)\end{equation*}
with $   \alpha(a) \in \{ \alpha\in \left[1,    \mu(\delta)\right]: q\nmid \alpha \hbox{ and }\lambda\mid \alpha+w_{\delta}\}. 
$
Consequently, we have 
\begin{equation}\notag  \left|\left[\sum\limits_{\ell=h+s_{\delta}}^{h+k_{\delta}}\delta_{\ell}q^{\ell}, \lambda(\delta-1)\right]\cap \mathcal{A}_{k_{\delta}}\cap \mathcal{D}_{\lambda}\right| = 
   \left| \{\alpha\in \left[1,\mu(\delta)\right]: q\nmid \alpha \hbox{ and } \lambda \mid \alpha+w_{\delta} \}\right|.
\end{equation} 
Then by applying Lemma \ref{le1}, equation  (\ref{as1e0}) follows.

Suppose that $m$ is even. The equality in (\ref{as1e0-1}) can be obtained by  employing  a similar argument as above.  It remains to  establish equation (\ref{as1e0-2}). By applying Lemma \ref{corr},  
we can conclude that 
$a\in \left[\sum\limits_{\ell=h}^{h+k_{\delta}}\delta_{\ell}q^{\ell}, \lambda(\delta-1)\right]\cap \mathcal{H}\cap \mathcal{D}_{\lambda}$ if and only if $V(a)$ has the form 
\begin{equation}\notag
(\mathbf{0}_{h-k_{\delta}-1},\delta_{h+k_{\delta}},\ldots,\delta_h, \mathbf{0}_{h-k_{\delta}-1},\delta_{h+k_{\delta}},\ldots,\delta_h)
\end{equation}
 with $\delta_h>0$,   $\sum\limits_{\ell=h}^{h+k_{\delta}}\delta_{\ell}q^{\ell-h}\leq\sum\limits_{\ell=0}^{h-1}\delta_{\ell}q^{\ell}$ and $\lambda \mid \sum\limits_{\ell=h}^{h+k_{\delta}}\delta_{\ell}q^{\ell-h}+\sum\limits_{\ell=h}^{h+k_{\delta}}\delta_{\ell}q^{\ell}$. Since $\lambda\mid q-1$, the condition   $\lambda \mid \sum\limits_{\ell=h}^{h+k_{\delta}}\delta_{\ell}q^{\ell-h}+\sum\limits_{\ell=h}^{h+k_{\delta}}\delta_{\ell}q^{\ell}$ is equivalent $\lambda \mid 2\sum\limits_{\ell=h}^{h+k_{\delta}}\delta_{\ell}q^{\ell}.$ 
 Therefore,  an integer satisfying the above condition exists and is unique if and only if  $\delta_h>0$, $\sum\limits_{\ell=h}^{h+k_{\delta}}\delta_{\ell}q^{\ell-h}\leq \sum\limits_{\ell=0}^{h-1}\delta_{\ell}q^{\ell}$ and $\lambda \mid 2\sum\limits_{\ell=h}^{h+k_{\delta}}\delta_{\ell}q^{\ell}$. 
It follows that equation (\ref{as1e0-2}) holds. 
\end{proof}

\section{Proof of Assertion \ref{as4}}

\begin{proof} 
Suppose that $m$ is odd. By definition, an integer
$
a \in \left[q^{h+k_{\delta}}, \; \sum_{\ell=h-k_{\delta}+1}^{h+k_{\delta}} \delta_{\ell} q^{\ell}\right) \cap \mathcal{A}_{k_{\delta}}(i)
$
if and only if the following conditions are satisfied:
\begin{itemize}
    \item[(\expandafter{\romannumeral1})] $V(a)$ is of the form given in~(\ref{p33}) and satisfies~(\ref{p31}) and~(\ref{p32}) for $k = k_{\delta}$;
    \item[(\expandafter{\romannumeral2})] $V(q^{h+k_{\delta}}) \leq V(a) < V\left(\sum_{\ell=h-k_{\delta}+1}^{h+k_{\delta}} \delta_{\ell} q^{\ell}\right)$.
\end{itemize}
Notice  that
\begin{equation*}
V(q^{h+k_{\delta}}) = (\mathbf{0}_{h-k_{\delta}}, 1, \mathbf{0}_{h+k_{\delta}})
\quad \text{and} \quad
V\left(\sum_{\ell=h-k_{\delta}+1}^{h+k_{\delta}} \delta_{\ell} q^{\ell}\right) = (\mathbf{0}_{h-k_{\delta}}, \delta_{h+k_{\delta}}, \ldots, \delta_{h-k_{\delta}+1}, \mathbf{0}_{h-k_{\delta}+1}).
\end{equation*}
Therefore, the form of $V(a)$ in~(\ref{p33}) together with the inequality $a_0 > 0$ in~(\ref{p32}) imply that condition~(\expandafter{\romannumeral2}) is satisfied if and only if
\begin{equation*}
(1, \mathbf{0}_{2k_{\delta}-1}) \leq (a_{h+k_{\delta}}, \ldots, a_{h+i}, \mathbf{0}_{k_{\delta}+i-1}) < (\delta_{h+k_{\delta}}, \ldots, \delta_{h-k_{\delta}+1}).
\end{equation*}
Since $s_{\delta}$ is the smallest integer in $[-k_{\delta}+1, k_{\delta}]$ such that $\delta_{h+s_{\delta}} > 0$, this is further equivalent to
\begin{equation*}
q^{k_{\delta}-i} \leq \sum_{\ell=h+i}^{h+k_{\delta}} a_{\ell} q^{\ell-h-i} < \sum_{\ell=h+s_{\delta}}^{h+k_{\delta}} \delta_{\ell} q^{\ell-h-i}.
\end{equation*}

Recall the definition of the set $\mathcal{T}_i(\delta)$ and Remarks \ref{rm3} --  \ref{rm4}. We now  
 can  conclude that for each integer $i\in \left[-k_{\delta}+1, k_{\delta}\right]$, 
 an integer $a\in \left[q^{h+k_{\delta}},  \sum\limits_{\ell=h-k_{\delta}+1}^{h+k_{\delta}}\delta_{\ell}q^{\ell}\right)\cap \mathcal{A}_{k_{\delta}}(i)\cap \mathcal{D}_{\lambda}$ 
if and only if  $a$ admits the  decomposition 
\begin{equation*}a=t(a)\cdot q^{h+i}+\alpha(a)\end{equation*}
 with 
 \begin{equation}\notag
\begin{cases}
 \lambda \mid t(a)+\alpha(a),\\
t(a)\in \mathcal{T}_i(\delta),\\
1\leq \alpha(a)\leq \lfloor t(a)\cdot q^{2i-1}\rfloor \hbox{ and } q\nmid \alpha(a).
\end{cases}
\end{equation}

We may now conclude that, for each integer 
$i \in [-k_{\delta}+1,\, k_{\delta}]$, an integer
\[
a \in \left[q^{h+k_{\delta}}, \sum_{\ell=h-k_{\delta}+1}^{h+k_{\delta}}
\delta_{\ell} q^{\ell}\right)
\cap \mathcal{A}_{k_{\delta}}(i) \cap \mathcal{D}_{\lambda}
\]
exists if and only if $a$ admits a decomposition of the form
\[
a = t(a)\, q^{h+i} + \alpha(a),
\]
where
\[
\begin{cases}
\lambda \mid t(a) + \alpha(a),\\
t(a) \in \mathcal{T}_i(\delta),\\
1 \le \alpha(a) \le \lfloor t(a)\, q^{2i-1} \rfloor
\quad \text{and} \quad q \nmid \alpha(a).
\end{cases}
\]

Consequently, we can apply Lemma \ref{le1} to conclude  that 
\begin{equation}\notag
\begin{split}
\left| \left[q^{h+k_{\delta}},  \sum\limits_{\ell=h-k_{\delta}+1}^{h+k_{\delta}}\delta_{\ell}q^{\ell}\right)\cap \mathcal{A}_{k_{\delta}}(i)\cap \mathcal{D}_{\lambda}\right|& =\sum\limits_{t\in \mathcal{T}_i(\delta)}\left| \left\{ \alpha\in [1, \lfloor tq^{2i-1} \rfloor]: q\nmid \alpha \hbox{ and }\lambda\mid \alpha + t      \right\}\right|\\
&=\sum\limits_{t\in \mathcal{T}_i(\delta)}\left[\lfloor \frac{\lfloor tq^{2i-1}\rfloor+t}{\lambda}\rfloor -\lfloor \frac{\lfloor tq^{2i-2}\rfloor+t}{\lambda}\rfloor\right].
\end{split}
\end{equation}
Then applying Lemma  \ref{th2}, it follows that (\ref{as2e}) holds. 

Suppose that $m$ is even. We can first utilize a similar argument as above to obtain (\ref{as2e2}).  Next, we demonstrate that equation (\ref{as2e3}) holds. By applying Lemma \ref{corr}, we can conclude that an integer $a\in \left[q^{h+k_{\delta}}, \sum\limits_{\ell=h}^{h+k_{\delta}}
        \delta_{\ell}q^{\ell}\right)\cap  \mathcal{H} \cap \mathcal{D}_{\lambda}$ if and only if $a$ can be decomposed as  decomposition 
        \begin{equation}\label{de}
a=\sum\limits_{\ell=0}^{k_{\delta}}a_{\ell}q^{\ell}\cdot q^{h}+\sum\limits_{\ell=0}^{k_{\delta}}a_{\ell}q^{\ell}
        \end{equation}
        with 
        $q\nmid \sum\limits_{\ell=0}^{k_{\delta}}a_{\ell}q^{\ell}$, $q^{h+k_{\delta}}\leq a<\sum\limits_{\ell=h}^{h+k_{\delta}}\delta_{\ell}q^{\ell}$ and $\lambda\mid a$.  
        It is easy to see that  $q^{h+k_{\delta}}\leq a <\sum\limits_{\ell=h}^{h+k_{\delta}}\delta_{\ell}q^{\ell}$ holds if and only if 
$q^{k_{\delta}}\leq \sum\limits_{\ell=0}^{k_{\delta}}a_{\ell}q^{\ell}\leq \sum\limits_{\ell=h}^{h+k_{\delta}}\delta_{\ell}q^{\ell-h}-1$. Furthermore, 
  since $\lambda\mid q-1$,  the condition $\lambda\mid a $ is satisfied if and only if $\lambda\mid 2\sum\limits_{\ell=0}^{k_{\delta}}a_{\ell}q^{\ell}$.   Therefore,      an integer $a\in \left[q^{h+k_{\delta}}, \sum\limits_{\ell=h}^{h+k_{\delta}}
        \delta_{\ell}q^{\ell}\right)\cap  \mathcal{H} \cap \mathcal{D}_{\lambda}$ if and only if $a$ admits  the decomposition given in (\ref{de})
with 
\begin{equation}\notag
    \begin{cases}
    q^{k_{\delta}}\leq \sum\limits_{\ell=0}^{k_{\delta}}a_{\ell}q^{\ell}\leq \sum\limits_{\ell=h}^{h+k_{\delta}}\delta_{\ell}q^{\ell-h}-1;\\
        q\nmid \sum\limits_{\ell=0}^{k_{\delta}}a_{\ell}q^{\ell};\\
        \lambda\mid 2\sum\limits_{\ell=0}^{k_{\delta}}a_{\ell}q^{\ell}.
    \end{cases}
\end{equation}
 Consequently, 
\begin{equation}\notag
\begin{split}
   \left|\left[q^{h+k_{\delta}}, \sum\limits_{\ell=h}^{h+k_{\delta}}
        \delta_{\ell}q^{\ell}\right)\cap  \mathcal{H} \cap \mathcal{D}_{\lambda}\right| = \left|\left \{\alpha \in \left[q^{k_{\delta}}, \sum\limits_{\ell=h}^{h+k_{\delta}}\delta_{\ell}q^{\ell-h}-1\right]: \lambda \mid 2\alpha \hbox{ and }  q \nmid \alpha   \right\}\right|
        \end{split}
\end{equation}
By applying Lemma \ref{lll8}, it follows that equation (\ref{as2e3}) holds. 
\end{proof}

\section{Proof of Assertion \ref{as2}}
\begin{proof}
The arguments used to establish (\ref{as3ee0}) and (\ref{as3ee05}) are analogous.  Therefore, we only demonstrate that (\ref{as3ee0}) holds when $m$ is odd. Using reasoning similar to that in the proof of Assertion 2, we can conclude that an integer $a \in \mathcal{A}_{k_{\delta}}(i) \cap \mathcal{D}_{\lambda}$ if and only if $a$ admits the decomposition 
\begin{equation*}
a = t(a) \cdot q^{h+i} + \alpha(a)
\end{equation*}
with 
\begin{equation*}
\begin{cases}
\lambda \mid t(a) + \alpha(a); \\[1ex]
t(a) \in \left[q^{k-i},\, q^{k-i+1} - 1 \right] \cap \{ t \in \mathbb{Z} : q \nmid t \}; \\[1ex]
1 \leq \alpha(a) \leq \left\lfloor t(a) \cdot q^{2i-1} \right\rfloor  \text{ and }  q \nmid \alpha(a).
\end{cases}
\end{equation*}
 Consequently, by applying Lemma \ref{le1}, we derive 
\begin{equation}\notag
\begin{split}
\left| \mathcal{A}_{k_{\delta}}(i)\cap \mathcal{D}_{\lambda}\right|& =\sum\limits_{t=q^{k-i},q\nmid t}^{q^{k-i+1}-1}\left| \left\{ \alpha\in \left[1, \lfloor tq^{2i-1} \rfloor\right]: q\nmid \alpha \hbox{ and }\lambda\mid \alpha  + t     \right\}\right|\\
&=\sum\limits_{t=q^{k-i},q\nmid t}^{q^{k-i+1}-1} \left[\lfloor \frac{\lfloor tq^{2i-1}\rfloor+t}{\lambda}\rfloor -\lfloor \frac{\lfloor tq^{2i-2}\rfloor+t}{\lambda}\rfloor\right].
\end{split}
\end{equation}
Then applying Lemma \ref{le5}, we obtain equation (\ref{as3ee0}).

\end{proof}

\section{Proof of Assertion \ref{as3}}
\begin{proof}
Following a similar approach to the proof of Assertion \ref{as2}, we can obtain 
\begin{equation}\notag
|B_k(0) \cap \mathcal{D}_{\lambda}| = 
    \sum_{t=q^{k}, q \nmid t}^{q^{k+1}-1} \left[ \lfloor \frac{2t}{\lambda} \rfloor - \lfloor \frac{\left\lfloor {t}{q}^{-1} \right\rfloor + t}{\lambda}\rfloor \right]. 
\end{equation}
Then applying Lemmas \ref{ll9} and \ref{lema12},  the assertion follows.
\end{proof}

\section{Proof of Assertion \ref{as3.5}}\label{APM}
\begin{proof}
 By applying Lemma \ref{corr}, we can conclude that an integer $a\in \mathcal{H}\cap [q^{h+k},q^{h+k+1})\cap \mathcal{D}_{\lambda}$ if and only if 
 $a$ admits the decomposition   \begin{equation}\label{as5eh}
a=\sum\limits_{\ell=0}^{k}a_{\ell}q^{\ell}\cdot  q^{h}+\sum\limits_{\ell=0}^{k}a_{\ell}q^{\ell}.
 \end{equation} 
 with $a_0>0$, $a_{k}>0$, and $\lambda\mid a$. 
 Since $\lambda\mid q-1$, it follows that $\lambda\mid a$ is equivalent to  $\lambda \mid 2\sum\limits_{\ell=0}^{k}a_{\ell}q^{\ell}$. In addition, it is straightforward to verify that the conditions $a_{0}>0$ and $a_{k}>0$ are satisfied if and only if $q^k\leq \sum\limits_{\ell=0}^{k}a_{\ell}q^{\ell} \leq q^{k+1}-1$ and $q\nmid \sum\limits_{\ell=0}^{k}a_{\ell}q^{\ell}$. Therefore,   an integer $a\in \mathcal{H}\cap [q^{h+k},q^{h+k+1})\cap \mathcal{D}_{\lambda}$ if and only if $a$ has the decomposition as given in  (\ref{as5eh})
 with 
 \begin{equation}\notag
     \begin{cases}
     q\nmid \sum\limits_{\ell=0}^{k}a_{\ell}q^{\ell};\\
     q^k\leq \sum\limits_{\ell=0}^{k}a_{\ell}q^{\ell}\leq q^{k+1}-1;\\
         \lambda \mid 2\sum\limits_{\ell=0}^{k}a_{\ell}q^{\ell}.
     \end{cases}
 \end{equation}
Consequently, 
 \begin{equation}\notag
     \left|\mathcal{H}\cap \left[q^{h+k},q^{h+k+1}\right)\cap \mathcal{D}_{\lambda}\right| =\left| \left\{ \alpha \in \left[q^k, q^{k+1}-1\right]: \lambda\mid 2\alpha \hbox{ and }q \nmid \alpha\right\}\right|.
 \end{equation} 
Then by applying Lemma \ref{lll8},  the assertion follows.  
 \end{proof}


\begin{thebibliography}{1}
\bibliographystyle{IEEEtran}
\bibitem{BC1960f}
R. C. Bose and D. K. Ray-Chaudhuri, ``Further results on error correcting
  binary group codes,'' {\em Inf. Control}, vol.~3, no.~3,
  pp.~279--290, Sep. 1960.

\bibitem{BR1960}
R.~C. Bose and D.~K. Ray-Chaudhuri, ``On a class of error correcting binary
  group codes,'' {\em Inf. Control}, vol.~3, no.~1, pp.~68--79, Mar.
  1960.






\bibitem{CP1994}
P.~Charpin, ``Weight distributions of cosets of two-error-correcting binary BCH 
  codes, extended or not,'' {\em IEEE Trans. Inf. Theory},
  vol.~40, no.~5, pp.~1425--1442, Sep. 1994.

\bibitem{CPHTZ2006}
P.~Charpin, T.~Helleseth  and V.~A. Zinoviev, ``The coset distribution of
  triple-error-correcting binary primitive BCH codes,'' {\em IEEE Trans. Inf. Theory}, vol.~52, no.~4, pp.~1727--1732, Apr. 2006.

 \bibitem{charpin1998open} P. Charpin, ``Open problems on cyclic codes,'' in {\em Handbook Coding Theory}, vol. 1, V. Pless and W. C. Huffman, Eds. Amsterdam,
The Netherlands: Elsevier, 1998, ch. 11, pp. 963--1063.




\bibitem{DYFT1997}
Y.~Desaki, T.~Fujiwara  and T.~Kasami, ``The weight distributions of extended
  binary primitive BCH codes of length 128,'' {\em IEEE Trans. Inf. Theory}, vol. 43, no. 4, pp.1364--1371, Jul. 1997.

\bibitem{D2015}
    C.~Ding, ``Parameters of several classes of BCH codes,'' {\em IEEE Trans. Inf. Theory}, vol.~61, no.~10, pp. 5322--5330, Oct. 2015.

\bibitem{DDZ2015}
C.~Ding, X.~Du, and Z.~Zhou, ``The Bose and minimum distance of a class of BCH
  codes,'' {\em IEEE Trans. Inf. Theory}, vol.~61, no.~5,
  pp.~2351--2356, May 2015.

\bibitem{DFZ2017}
C.~Ding, C.~Fan, and Z.~Zhou, ``The dimension and minimum distance of two
  classes of primitive BCH codes,'' {\em Finite Fields Appl.},
  vol.~45, pp.~237--263, May 2017.

\bibitem{ding2024bch}
C.~Ding and C.~Li, ``BCH cyclic codes,'' {\em Discrete Mathematics}, vol.~347,
  no.~5, 113918, May 2024.

\bibitem{DLMQ2023}
S.~Dong, C.~Li, S.~Mesnager and H.~Qian, ``Parameters of squares of primitive
  narrow-sense BCH codes and their complements,''{\em IEEE Trans. Inf. Theory}, vol.~69, no.~8, pp.~5017--5031, Aug. 2023.

\bibitem{FLD2023}
M.~Fan, C.~Li and C.~Ding, ``The Hermitian dual codes of several classes of
BCH codes,'' {\em IEEE Trans. Inf. Theory}, vol.~69, no.~7,
  pp.~4484--4497, Jul. 2023.



\bibitem{FTT1986}
T.~Fujiwara, T.~Takata, T.~Kasami, and S.~Lin, ``An approximation to the weight
  distribution of binary primitive BCH codes with designed distances 9 and 11
  (corresp.),'' {\em IEEE Trans. Inf. Theory}, vol.~32, no.~5,
  pp.~706--709, Sep. 1986.

\bibitem{GLQ2024}
C.~Gan, C.~Li, H.~Qian, and X.~Shi, ``On Bose distance of a class of BCH codes
  with two types of designed distances,'' {\em Des. Codes Cryptogr.}, vol. 92, pp.~2031--2053, Mar. 2024.


     

\bibitem{HS1973}
H.~Helgert and R.~Stinaff, ``Shortened bch codes (corresp.),'' {\em IEEE Trans. Inf. Theory}, vol.~19, no.~6, pp.~818--820, Nov. 1973.

\bibitem{H1959}
A.~Hocquenghem, ``Codes correcteurs d'erreurs,'' {\em Chiffers}, vol.~2,
  pp.~147--156, 1959.


\bibitem{Hou2018}
X.  Hou, {\em Lectures on finite fields} (Graduate Studies in Mathematics), vol.~190.
Providence, RI, USA: American Mathematical Society, 2018



\bibitem{KTFT1985}
T.~Kasami, T.~Fujiwara, and S.~Lin, ``An approximation to the weight
  distribution of binary linear codes,'' {\em IEEE Trans. Inf. Theory}, vol.~31, no.~6, pp.~769--780, Nov.1985.

\bibitem{KTLS1972}
T.~Kasami and S.~Lin, ``Some results on the minimum weight of primitive BCH 
  codes (corresp.),'' {\em IEEE Trans. Inf. Theory}, vol.~18,
  no.~6, pp.~824--825, Nov. 1972.
\bibitem{KOLS1999}
O.~Keren and S.~Litsyn, ``More on the distance distribution of BCH codes,''
  {\em IEEE Trans. Inf. Theory}, vol.~45, no.~1, pp.~251--255,
  Jan. 1999.

\bibitem{KILS2001}
I.~Krasikov and S.~Litsyn, ``On the distance distributions of BCH codes and
  their duals,'' {\em Des. Codes Cryptogr.}, vol.~23, pp.~223--232,
 Jul. 2001.




\bibitem{LDL2017}
C.~Li, C.~Ding, and S.~Li, ``LCD cyclic codes over finite fields,'' {\em IEEE Trans. Inf. Theory}, vol.~63, no.~7, pp.~4344--4356, Jul. 2017.

\bibitem{LLDL2017}
S.~Li, C.~Li, C.~Ding, and H.~Liu, ``Two families of LCD BCH codes,'' {\em IEEE Trans. Inf. Theory}, vol.~63, no.~9, pp.~5699--5717, Sep. 2017.



\bibitem{LDXG}S. Li, C. Ding, M. Xiong and G. Ge,``Narrow-Sense BCH Codes Over  GF(q)  With Length  $n=\frac{q^m-1}{q-1}$," {\em IEEE Transactions on Information Theory}, vol. 63, no. 11, pp. 7219-7236, Nov. 2017. 

\bibitem{LMS2020}
X. Ling, S. Mesnager, Y. Qi, and C. Tang,`` A class of narrow-sense BCH codes over $\mathbb{F}_q$ of length $\frac{q^m-1} {2}$'', {\em Designs, Codes and Cryptography}, vol. 88, no. 2, pp. 413--427, Nov. 2020.

\bibitem{liu2017}
H.~Liu, C.~Ding, and C.~Li, ``Dimensions of three types of BCH codes over $\mathrm{GF}(q)$,'' {\em Discrete Math.}, vol.~340, no.~8, pp.~1910--1927, Aug. 2017.



\bibitem{LRFL2019}
Y.~Liu, R.~Li, Q.~Fu, L.~Lu, and Y.~Rao, ``Some binary BCH codes with length $n=
  2^m+ 1,$'' {\em Finite Fields Appl.}, vol.~55, pp.~109--133,
  Jan. 2019.


  


  






\bibitem{NLMY2021}
S.~Noguchi, X.-N. Lu, M.~Jimbo, and Y.~Miao, ``BCH codes with minimum distance
  proportional to code length,'' {\em SIAM J. Discrete Math.},
  vol.~35, no.~1, pp.~179--193, 2021.


\bibitem{pang2021five}
B.~Pang, S.~Zhu, and X.~Kai, ``Five families of the narrow-sense primitive BCH
  codes over finite fields,'' {\em Designs, Codes and Cryptography}, vol.~89,
  pp.~2679--2696, Oct. 2021.












\bibitem{RM1991}
R.~Schoof and M.~Van Der~Vlugt, ``Hecke operators and the weight distributions
  of certain codes,'' {\em J. Combinatorial Theory Ser. A}, vol.~57,
  no.~2, pp.~163--186, Jul. 1991.


\bibitem{S2025} Z. Sun,``The Exact Parameters of A Family of BCH Codes."  arXiv preprint arXiv:2507.15247, 2025.


\bibitem{SLD2023}
Z.~Sun, C.~Li, and C.~Ding, ``An infinite family of binary cyclic codes with
  best parameters,'' {\em IEEE Trans. Inf. Theory}, vol. 70, no. 4, pp. 2411 -- 2418, Apr. 2023.


\bibitem{WWL2023}
X.~Wang, J.~Wang, C.~Li, and Y.~Wu, ``Two classes of narrow-sense BCH codes and
  their duals,'' {\em IEEE Trans. Inf. Theory}, vol. 70, no. 1, pp. 131-- 144, Jan. 2023


\bibitem{WXZ2024}
X.~Wang, C.~Xiao, and D.~Zheng, ``The duals of narrow-sense BCH codes with
  length $\frac{q^ m- 1}{ \lambda}$,'' {\em IEEE Trans. Inf. Theory}, vol. 70, no. 11, pp. 7777 -- 7789,
 Nov.  2024.


\bibitem{XL2024}
R.~Xu and C.~Li, ``Products of some primitive BCH codes and their
  complements,'' {\em IEEE Trans. Inf. Theory}, vol. 70, no. 12, pp. 8546 -- 8561,  Dec. 2024.

 

\bibitem{YF2000}
D.~W. Yue and G. Z. Feng, ``Minimum cyclotomic coset representatives and their
  applications to BCH codes and Goppa codes,'' {\em IEEE Trans. Inf. Theory}, vol.~46, no.~7, pp.~2625--2628, Nov. 2000.




\bibitem{WLZ2024}
L. Wang, D. Lu and S. Zhu ``Parameters of some BCH codes over $ \mathbb{F}_q $ of length $ \frac{q^m-1}{2} $,''{\em Advances in Mathematics of Communications,} vol. 18, no. 8, pp. 1559 -- 1574, Dec. 2024.





  \bibitem{YH1996}
D. W. Yue and Z. M. Hu, ``On the dimension and minimum distance of BCH
  codes over $\mathrm{GF} (q)$,'' {\em Journal of Electronics (China)}, vol.~13, no.~3,
  pp.~216--221, Jul.  1996.

  \bibitem{ZXP2024}
J.~Zhang, X.~Kai, and P.~Li, ``Some results on the dimension and Bose distance
  of the antiprimitive BCH codes,'' {\em Discrete Math.}, vol.~347,
  no.~10, 114104, Oct. 2024.

\bibitem{zheng2025}
 R. Zheng, N. S. Sze, and Z. Huang, ``The dimension and Bose distance of certain primitive BCH codes,'' {\em IEEE Trans.  Inf. Theory,} vol. 71, no. 10, pp. 7670-7687, Oct. 2025.




\bibitem{ZSK2019}
S. Zhu, Z. Sun and X. Kai, ``A Class of Narrow-Sense BCH Codes,"  {\em IEEE Trans. Inf. Theory} vol. 65, no. 8, pp. 4699 -- 4714, Aug. 2019.


\end{thebibliography}
\end{document}